\theoremstyle{definition}
\newtheorem{lemma}{Lemma}[section]
\newtheorem{theorem}{Theorem}[section]
\begin{document}

\title{\textbf{Optimal self-assembly of finite shapes at temperature 1 in 3D}}

\author{%
David Furcy\thanks{Computer Science Department, University of Wisconsin--Oshkosh, Oshkosh, WI 54901, USA,\protect\url{furcyd@uwosh.edu}.}
\and
Scott M. Summers\thanks{Computer Science Department, University of Wisconsin--Oshkosh, Oshkosh, WI 54901, USA,\protect\url{summerss@uwosh.edu}. This author's research was supported in part by UWO Faculty Development Research grant FDR881.}
}


\date{}
\maketitle

\begin{abstract}
Working in a three-dimensional variant of Winfree's abstract Tile Assembly Model, we show that, for an arbitrary finite, connected shape $X \subset \mathbb{Z}^2$, there is a tile set that uniquely self-assembles into a 3D representation of $X$ at temperature 1 in 3D with optimal program-size complexity (the program-size complexity, also known as tile complexity, of a shape is the minimum number of tile types required to uniquely self-assemble it). Moreover, our construction is ``just barely'' 3D in the sense that it only places tiles in the $z = 0$ and $z = 1$ planes. Our result is essentially a just-barely 3D temperature 1 simulation of a similar 2D temperature 2 result by Soloveichik and Winfree (SICOMP 2007).
\end{abstract} 


\section{Introduction}
Self-assembly is intuitively defined as the process through which simple, unorganized components spontaneously combine, according to local interaction rules, to form some kind of organized final structure. While examples of self-assembly in nature are abundant, Seeman \cite{Seem82} was the first to demonstrate the feasibility of self-assembling man-made DNA tile molecules. Since then, self-assembly researchers have used principles from DNA tile self-assembly to self-assemble a wide variety of nanoscale structures, such as regular arrays~\cite{WinLiuWenSee98}, fractal structures~\cite{RoPaWi04,FujHarParWinMur07}, smiling faces~\cite{rothemund2006folding}, DNA tweezers~\cite{yurke2000dna}, logic circuits~\cite{qian2011scaling}, neural networks~\cite{qian2011neural}, and molecular robots\cite{DNARobotNature2010}.
What is more, over roughly the past decade, researchers have dramatically reducing the tile placement error rate (the percentage of incorrect tile placements) for DNA tile self-assembly from 10\% to 0.05\% \cite{RoPaWi04,FujHarParWinMur07,BarSchRotWin09,ConstEvansPhD}.

In 1998, Winfree \cite{Winf98} introduced the abstract Tile Assembly Model (aTAM) as an over-simplified, combinatorial, error-free model of experimental DNA tile self-assembly. The aTAM is a constructive version of mathematical Wang tiling \cite{Wang61} in that the former bestows upon the latter a mechanism for sequential ``growth'' of a tile assembly starting from an initial seed. Very briefly, in the aTAM, the fundamental components are un-rotatable, translatable square ``tile types'' whose sides are labeled with (alpha-numeric) glue ``colors'' and (integer) ``strengths''. Two tiles that are placed next to each other \emph{bind} if both the glue colors and the strengths on their abutting sides match and the sum of their matching strengths sum to at least a certain (integer) ``temperature''. Self-assembly starts from a ``seed'' tile type, typically assumed to be placed at the origin, and proceeds nondeterministically and asynchronously as tiles bind to the seed-containing assembly one at a time. In this paper, we work in a three-dimensional variant of the aTAM in which tile types are unit cubes and tiles are placed in a \emph{non-cooperative} manner.

Tile self-assembly in which tiles may bind to an existing assembly in a non-cooperative fashion is often referred to as ``temperature~1 self-assembly'' or simply ``non-cooperative self-assembly''. In this type of self-assembly, a tile may non-cooperatively bind to an assembly via (at least) one of its sides, unlike in \emph{cooperative} self-assembly, in which some tiles may be required to bind on two or more sides. It is worth noting that cooperative self-assembly leads to highly non-trivial theoretical behavior, e.g., Turing universality \cite{Winf98} and the efficient self-assembly of $N \times N$ squares \cite{AdlemanCGH01,RotWin00} and other algorithmically specified shapes \cite{SolWin07}.

Despite its theoretical algorithmic capabilities, when cooperative self-assembly is implemented using DNA tiles in the laboratory \cite{RoPaWi04,BarSchRotWin09,SchWin07,MaoLabReiSee00,WinLiuWenSee98}, tiles may (and do) erroneously bind in a non-cooperative fashion, which usually results in the production of undesired final structures. In order to completely avoid the erroneous effects of tiles unexpectedly binding in a non-cooperative fashion, the experimenter should only build nanoscale structures using constructions that are guaranteed to work correctly in non-cooperative self-assembly. Thus, characterizing the theoretical power of non-cooperative self-assembly has significant practical implications.

Although no characterization of the power of non-cooperative self-assembly exists at the time of this writing, Doty, Patitz and Summers conjecture \cite{jLSAT1} that 2D non-cooperative self-assembly is weaker than 2D cooperative self-assembly because a certain technical condition, known as ``pumpability'', is true for any 2D non-cooperative tile set. If the pumpability conjecture is true, then non-cooperative 2D self-assembly can only produce simple, highly-regular shapes and patterns, which are too simple and regular to be the result of complex computation.

In addition to the pumpability conjecture, there are a number of results that study the suspected weakness of non-cooperative self-assembly. For example, Rothemund and Winfree \cite{RotWin00} proved that, if the final assembly must be fully connected, then the minimum number of unique tile types required to self-assemble an $N \times N$ square (i.e., its \emph{tile complexity}) is exactly $2N - 1$. Manuch, Stacho and Stoll \cite{ManuchSS10} showed that the previous tile complexity is also true when the final assembly cannot contain even any glue mismatches. Moreover, at the time of this writing, the only way in which non-cooperative self-assembly has been shown to be unconditionally weaker than cooperative self-assembly is in the sense of \emph{intrinsic universality} \cite{iusa,usa}. First, Doty, Lutz, Patitz, Schweller, Summers and Woods \cite{iusa} proved the existence of a universal cooperative tile set that can be programmed to simulate the behavior of any tile set (i.e., the aTAM is intrinsically universal for itself). Then, Meunier, Patitz, Summers, Theyssier, Winslow and Woods \cite{WindowMovieLemma} showed, via a combinatorial argument, that there is no universal non-cooperative tile set that can be programmed to simulate the behavior of an arbitrary (cooperative) tile set. Thus, in the sense of intrinsic universality, non-cooperative self-assembly is strictly weaker than cooperative self-assembly.

While non-cooperative self-assembly is suspected of being strictly weaker than cooperative self-assembly, in general, it is interesting to note that 3D non-cooperative self-assembly (where the tile types are unit cubes) and 2D cooperative self-assembly share similar capabilities. For instance, Cook, Fu and Schweller \cite{CookFuSch11} proved that it is possible to deterministically simulate an arbitrary Turing machine using non-cooperative self-assembly, even if tiles are only allowed to be placed in the $z = 0$ and $z = 1$ planes (Winfree \cite{Winf98} proved this for the 2D aTAM). Cook, Fu and Schweller \cite{CookFuSch11} also proved that it is possible to deterministically self-assemble an $N \times N$ 3D ``square'' shape $S_N \subseteq \{0, \ldots, N - 1\} \times \{0, \ldots, N - 1\} \times \{0, 1\}$ using non-cooperative self-assembly with $O(\log N)$ tile complexity (Rothemund and Winfree \cite{RotWin00} proved this for the 2D aTAM). Furcy, Micka and Summers \cite{FurcyMickaSummers} reduced the tile complexity of deterministically assembling an $N \times N$ square in 3D non-cooperative self-assembly to $O\left(\frac{\log N}{\log \log N}\right)$ (Adleman, Cheng, Goel and Huang \cite{AdlemanCGH01} proved this for the 2D aTAM), which is optimal for all algorithmically random values of $N$. Given that it is possible to optimally self-assemble an $N \times N$ square in 3D using non-cooperative self-assembly, the following is a natural question: Is it possible to self-assemble an arbitrary finite shape in 3D using non-cooperative self-assembly with optimal tile complexity?

Note that the previous question was answered affirmatively by Soloveichik and Winfree \cite{SolWin07} for the 2D aTAM, assuming the shape of the final assembly can be a scaled-up version of the input shape (i.e., each point in the input shape is replaced by a $c \times c$ block of points, where $c$ is the \emph{scaling factor}). Specifically, Soloveichik and Winfree gave a construction that takes as input an algorithmic description of an arbitrary finite, connected shape $X \subset \mathbb{Z}^2$ and outputs a cooperative (temperature 2) tile set $T_X$ that deterministically self-assembles into a scaled-up version of $X$ and $\left| T_X \right| = O\left( \frac{|M|}{\log |M|}\right)$, where $|M|$ is the size of (i.e., number of bits needed to describe) the Turing machine $M$, which outputs the list of points in $X$. In the main result of this paper, using a combination of 3D, temperature 1 self-assembly techniques from Furcy, Micka and Summers \cite{FurcyMickaSummers} and Cook, Fu and Schweller \cite{CookFuSch11}, we show how the optimal construction of Soloveichik and Winfree can be simulated in 3D using non-cooperative self-assembly with optimal tile complexity. Thus, our main result represents a Turing-universal way of guiding the self-assembly of a scaled-up, just-barely 3D version of an arbitrary finite shape $X$ at temperature 1 with optimal tile complexity.

\section{Definitions}\label{sec-definitions}
In this section, we give a brief sketch of a $3$-dimensional version of the aTAM along with some definitions of scaled finite shapes and the complexities thereof.

\subsection{3D abstract Tile Assembly Model}

Let $\Sigma$ be an alphabet. A $3$-dimensional \emph{tile type} is a tuple $t \in (\Sigma^* \times \N)^{6}$, e.g., a unit cube with six sides listed in some standardized order, each side having a \emph{glue} $g \in \Sigma^* \times \N$ consisting of a finite string \emph{label} and a non-negative integer \emph{strength}. In this paper, all glues have strength $1$.
There is a finite set $T$ of $3$-dimensional tile types but an infinite number of copies of each tile type, with each copy being referred to as a \emph{tile}.

A $3$-dimensional \emph{assembly} is a positioning of tiles on the
integer lattice $\Z^3$ and is described formally as a partial function
$\alpha:\Z^3 \dashrightarrow T$. Two adjacent tiles in an assembly
\emph{bind} if the glue labels on their abutting sides are equal and
have positive strength.  Each assembly induces a \emph{binding graph},
i.e., a ``grid graph'' (sometimes called the \emph{adjacency graph}) whose vertices are (positions of) tiles and whose
edges connect any two vertices whose corresponding tiles bind.  If
$\tau$ is an integer, we say that an assembly is
\emph{$\tau$-stable} if every cut of its binding graph has strength at
least~$\tau$, where the strength of a cut is the sum of all of the
individual glue strengths in~the~cut.

A $3$-dimensional \emph{tile assembly system} (TAS) is a triple $\calT
= \left(T,\sigma,\tau\right)$, where $T$ is a finite set of tile
types, $\sigma:\Z^3 \dashrightarrow T$ is a finite, $\tau$-stable
\emph{seed assembly}, and $\tau$ is the \emph{temperature}. In this
paper, we assume that $|\dom{\sigma}| = 1$ and $\tau=1$. An assembly
$\alpha$ is \emph{producible} if either $\alpha = \sigma$ or if
$\beta$ is a producible assembly and $\alpha$ can be obtained from
$\beta$ by the stable binding of a single tile.  In this case we write
$\beta\to_1^{\calT} \alpha$ (to mean~$\alpha$ is producible from
$\beta$ by the binding of one tile), and we write $\beta\to^{\calT}
\alpha$ if $\beta \to_1^{\calT^*} \alpha$ (to mean $\alpha$ is
producible from $\beta$ by the binding of zero or more tiles).
When~$\calT$ is clear from context, we may write $\to_1$ and $\to$
instead.  We let $\mathcal{A}\left[\mathcal{T}\right]$ denote the set
of producible assemblies of~$\calT$.  An assembly is \emph{terminal}
if no tile can be $\tau$-stably bound to it.  We
let~$\mathcal{A}_{\Box}\left[\mathcal{T}\right] \subseteq
\mathcal{A}\left[\mathcal{T}\right]$ denote the set of
producible, terminal assemblies of $\calT$.

A TAS~$\calT$ is \emph{directed} if $\left|\mathcal{A}_{\Box}\left[\calT\right]\right| = 1$. Hence, although a directed system may be nondeterministic in terms of the order of tile placements,  it is deterministic in the sense that exactly one terminal assembly is producible. For a set $X \subseteq \Z^3$, we say that $X$ is uniquely produced if there is a directed TAS $\mathcal{T}$, with $\mathcal{A}_{\Box}\left[\calT\right] = \{\alpha\}$, and $\dom{\alpha} = X$.

\subsection{Complexities of (scaled) finite shapes}

The following definitions are based on the definitions found in \cite{SolWin07}. We include these definitions for the sake of completeness.

A \emph{coordinated shape} is a finite set $X \subset \mathbb{Z}^2$ such that $X$ is connected, i.e., the grid graph induced by $X$ is connected. For some $c \in \mathbb{Z}^+$, we say that a $c$-\emph{scaling} of $X$, denoted as $X^c$, is the set $X^c = \left\{ (a,b) \mid \left(\lfloor a/c \rfloor, \lfloor b/c \rfloor\right) \in X \right\}$. Intuitively, $X^c$ is the coordinated shape obtained by taking $X$ and replacing each point in $X$ with a $c \times c$ block of points. Here, the constant $c$ is known as the \emph{scale factor} (or \emph{resolution loss}). Note that a $c$-scaling of an actual shape is itself a coordinated shape.

Let $X_1$ and $X_2$ be two coordinated shapes. We say that $X_1$ and $X_2$ are \emph{scale-equivalent} if $X^a_1 = X^b_2$, for some $a,b\in\mathbb{Z}^+$. We say that $X_1$ and $X_2$ are \emph{translation-equivalent} if they are equal up to translation. We write $X^a_1 \cong X^b_2$ if $X^a_1$ is translation-equivalent to $X^b_2$, for some $a,b\in\mathbb{Z}^+$. Note that the three previously defined relations are all equivalence relations (see the appendix of \cite{SolWin07}). We will use the notation $\widetilde{X}$ to denote the equivalence class containing $X$ under the equivalence relation $\cong$. We say that $\widetilde{X}$ is the \emph{shape} of $X$. While $\widetilde{X}$ is technically a set of coordinate shapes, we will abuse the notation $\left|\widetilde{X}\right|$ and say that it represents the size of coordinate shape $X \in \widetilde{X}$, i.e., $\left|X^1\right|$.

We will now define the tile complexity of a 3D shape. However, we will first briefly define Kolmogorov complexity of a binary string $x$, relative to a universal Turing machine $U$. We say that the \emph{Kolmogorov complexity} of $x$ relative to $U$ is $K_U(x) = \min\left\{|p| \mid U(p) = x\right\}$, where, for any Turing machine $M$, $|M|$ denotes the number of bits used to describe $M$, with respect to some fixed encoding scheme. In other words, $K_U(x)$ is the smallest program that outputs $x$ (see \cite{Li:1997:IKC} for a comprehensive discussion of Kolmogorov complexity).

Relative to a fixed universal Turing machine $U$, we say that the \emph{Kolmogorov complexity of a shape} $\widetilde{X}$ is the size of the smallest program that outputs some $X \in \widetilde{X}$ as a list of locations, i.e., $K_U\left(\widetilde{X}\right) = \min\left\{|p| \; \left| \;U(p) = \langle X \rangle\ \textmd{ for some } X \in \widetilde{X}\right.\right\}$. Beyond this point, we will assume $U$ is a fixed universal Turing machine and therefore will be omitted from our notation.

The \emph{3D tile complexity} at temperature $\tau$ (often referred to as \emph{program-size complexity}) of a shape $\widetilde{X}$ at temperature $\tau$ is

$$K^{\tau}_{3DSA}\left(\widetilde{X}\right) = \min\setr{n}{
\begin{split}
    \textmd{$\mathcal{T} = (T,\sigma,\tau)$, $|T|=n$ and there exists\;\;\;} \\
    \textmd{$X \in \widetilde{X}$ such that $\mathcal{T}$ uniquely produces $\alpha$}\\
    \textmd{such that $X \times \{0\} \subseteq \dom{\alpha} \subseteq X \times \{0,1\}$}
\end{split}}.$$

\section{Main theorem}
\label{sec:main_result}

The main theorem of this paper describes the relationship between the quantities $K\left(\widetilde{X}\right)$ and $K^1_{3DSA}\left(\widetilde{X}\right)$. This relationship is formally stated in Theorem~\ref{thm:main-theorem}. Note that the main result of \cite{SolWin07} describes the relationship between $K\left(\widetilde{X}\right)$ and $K^2_{SA}\left(\widetilde{X}\right)$, where $K^2_{SA}\left(\widetilde{X}\right)$ (see \cite{SolWin07}) is the tile complexity of the 2D shape $\widetilde{X}$ at temperature 2. In this section, assume that $\widetilde{X}$ is an arbitrary finite shape.

\begin{theorem}
\label{thm:main-theorem}
The following hold: 
\begin{enumerate}
    \item $K\left(\tilde{X}\right) = O\left(K^{1}_{3DSA}\left(\widetilde{X}\right)\log K^{1}_{3DSA}\left(\widetilde{X}\right)\right)$
    \item $K^{1}_{3DSA}\left(\widetilde{X}\right)\log K^{1}_{3DSA}\left(\widetilde{X}\right) = O\left(K\left(\widetilde{X}\right)\right)$
\end{enumerate}
\end{theorem}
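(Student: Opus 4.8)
### Proof plan

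The plan is to prove the two inequalities separately; part (1) is the easy direction, part (2) is where the construction lives. For part (1), I would argue exactly as in Soloveichik--Winfree: given a directed temperature-1 TAS $\calT = (T,\sigma,1)$ with $|T| = K^1_{3DSA}(\widetilde{X})$ that uniquely produces an assembly $\alpha$ with $X \times \{0\} \subseteq \dom{\alpha} \subseteq X \times \{0,1\}$, I observe that a Turing machine can simulate the (finite) assembly process of $\calT$ to completion and output $\langle X \rangle$. The only subtlety is encoding $\calT$ compactly: a tile set with $n$ tile types needs only $O(n \log n)$ bits to describe (each of the $6n$ glues is one of at most $6n$ labels, so $O(\log n)$ bits per glue), and the seed plus temperature are $O(\log n)$ more. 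Hardcoding this description into a fixed universal simulator gives a program of size $O(n \log n)$ that outputs some $X \in \widetilde{X}$, hence $K(\widetilde{X}) = O(K^1_{3DSA}(\widetilde{X})\log K^1_{3DSA}(\widetilde{X}))$.

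For part (2), the plan is to take the minimal program $p$ with $U(p) = \langle X \rangle$ for some $X \in \widetilde{X}$, so $|p| = K(\widetilde{X})$, and build a just-barely-3D temperature-1 tile set $T_X$ with $|T_X| = O\!\left(\frac{|p|}{\log|p|}\right)$ that uniquely self-assembles a scaled-up version of $X$ in the $z \in \{0,1\}$ planes. The architecture mirrors Soloveichik--Winfree's temperature-2 construction: (i) a "seed block" that self-assembles an encoding of $p$ written in binary along one edge, using the optimal-encoding trick of Adleman--Cheng--Goel--Huang so that representing a $|p|$-bit string costs only $O(|p|/\log|p|)$ tile types rather than $O(|p|)$; (ii) a general-purpose Turing-machine simulator that runs $U$ on $p$, unpacks the list of points of $X$, and then drives the growth of a spanning tree of the scaled shape $X^c$, placing a $c \times c$ block for each point of $X$ and routing "communication paths" between adjacent blocks to propagate the remaining instructions. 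The key technical substitution is that every temperature-2 cooperative gadget in Soloveichik--Winfree (binary counters, the TM simulation, the path-steering logic) must be replaced by its temperature-1 just-barely-3D counterpart from Cook--Fu--Schweller and Furcy--Micka--Summers, which implement cooperation by sending a "probe" tile up to the $z=1$ plane and back down, so that a tile's placement can effectively depend on two inputs without a true strength-2 bond. I would present the construction modularly: state as black boxes the 3D temperature-1 simulation of an arbitrary TM (from \cite{CookFuSch11}), the 3D temperature-1 optimal encoding of a binary string (the $O(b/\log b)$-tile "bit-reading" gadget, from \cite{FurcyMickaSummers}), and the general framework for turning a TM that outputs a shape into a self-assembly of that shape (from \cite{SolWin07}), and then explain how they compose.

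The main obstacle, and where the bulk of the real work goes, is the geometry of gluing these modules together in only two $z$-layers while keeping the tile count at $O(|p|/\log|p|)$. Three points need care. First, the scale factor $c$ must be chosen large enough (as a function of $|p|$, e.g. polynomial) that the entire TM computation and all the block-to-block communication paths fit inside the $c \times c$ blocks and the corridors between them; since $c$ depends on $|p|$ but not on the number of tile types, this does not inflate $|T_X|$. Second, I must verify directedness: the composed system has to have a unique terminal assembly, which means every nondeterministic choice in the temperature-1 path-steering must be resolved by the data read off the simulated computation, and no "blocking" or stray growth can occur in the $z=1$ plane — this is exactly the kind of argument that is delicate at temperature 1 and is the reason the 2D temperature-2 proof does not transfer verbatim. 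Third, the $O(b/\log b)$ encoding gadget and the TM simulator each carry a constant-size "overhead" tile set that is independent of $p$, plus a $p$-dependent part; I need to confirm that only the encoding of $p$ contributes a growing term and that it contributes $O(|p|/\log|p|)$, so that the total is $O(|p|/\log|p|) = O(K(\widetilde{X})/\log K(\widetilde{X}))$, which rearranges to the claimed $K^1_{3DSA}(\widetilde{X})\log K^1_{3DSA}(\widetilde{X}) = O(K(\widetilde{X}))$. I would close by noting that the construction only ever places tiles in $z = 0$ and $z = 1$, so the assembled shape satisfies $X \times \{0\} \subseteq \dom{\alpha} \subseteq X \times \{0,1\}$ as required by the definition of $K^1_{3DSA}$, after identifying the scaled shape $X^c$ with an element of $\widetilde{X}$.
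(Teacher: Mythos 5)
Your treatment of part~(1) is the paper's: it simply invokes the analogous bound $K(\widetilde{X}) = O\bigl(K^{\tau}_{SA}(\widetilde{X})\log K^{\tau}_{SA}(\widetilde{X})\bigr)$ from \cite{SolWin07} and notes that the argument (encode the $n$-tile-type system in $O(n\log n)$ bits and hardcode it into a fixed TAM simulator) carries over verbatim to 3D at temperature~1, which is exactly what you wrote out. For part~(2) your skeleton also matches the paper's: take a minimal $p$ with $|p|=K(\widetilde{X})$, build a just-barely-3D temperature-1 TAS with $O(|p|/\log|p|)$ tile types by packing $\langle p\rangle$ into a seed block via the optimal-encoding gadget of \cite{FurcyMickaSummers}, simulating a universal machine on $p$ with the temperature-1 3D zig-zag technique in the spirit of \cite{CookFuSch11}, growing $c\times c$ blocks over the shape, and then closing with the arithmetic $|T|\log|T|\le c'\,\frac{|p|}{\log|p|}\bigl(\log|p|-\log\log|p|\bigr)\le c'|p|$, which is precisely the calculation in the paper's proof of the upper bound.

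The one load-bearing idea you omit, and in fact misstate, is how the growth blocks are routed over the shape. You propose that the simulated machine ``drives the growth of a spanning tree of the scaled shape $X^c$'' with block-to-block communication along the tree edges. That is the Soloveichik--Winfree blueprint, and it is precisely the step that does not survive the move to temperature~1: at temperature~2 a growth block can cooperatively launch growth on several output sides, as a branching spanning tree requires, but at temperature~1 there is no cooperation available to keep two branches that are grid-adjacent but tree-distant from interfering, nor to gate the timing of one branch on the completion of another, so directedness fails. The paper's fix is to have $p$ output not $\langle X\rangle$ directly but a Hamiltonian cycle $H$ of $X^4$ (always existent and polynomial-time computable for a connected coordinated shape once you scale by $4$), encoded as a sequence of moves (no-move, left, right, straight) containing at least three consecutive collinear points to anchor the seed block. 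Growth then follows a single closed loop of $c\times c$ blocks: each block reads the remaining moves on one input side, executes and erases the first move, re-advertises the shortened sequence on a single output side, and the final block returns to and fills in the seed block with a comb of single-tile-wide paths. You correctly flag directedness as the delicate point (``every nondeterministic choice in the temperature-1 path-steering must be resolved''), but you supply no mechanism for it; the Hamiltonian-cycle routing \emph{is} that mechanism, and without it the spanning-tree version you describe would not be directed at temperature~1.
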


We will prove Theorem~\ref{thm:main-theorem} in Lemmas~\ref{lem:main-lower} and~\ref{lem:main-upper}.

\begin{lemma}
\label{lem:main-lower}
$K\left(\tilde{X}\right) = O\left(K^{1}_{3DSA}\left(\widetilde{X}\right)\log K^{\tau}_{3DSA}\left(\widetilde{X}\right)\right)$.
\end{lemma}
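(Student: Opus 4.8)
The plan is to show that from a small temperature-1 3D tile system $\calT=(T,\sigma,1)$ that uniquely produces an assembly $\alpha$ with $X\times\{0\}\subseteq\dom\alpha\subseteq X\times\{0,1\}$ for some $X\in\widetilde X$, one can write down a program $p$ that outputs $\langle X\rangle$, with $|p|=O(|T|\log|T|)$. The point is that the tile system $\calT$ \emph{is} essentially such a program: given the list of tile types, the seed, and the temperature, a Turing machine can simulate the (nondeterministic but directed) assembly process, compute the unique terminal assembly $\alpha$, project its domain onto the $z=0$ plane, and output the resulting list of points as $\langle X\rangle$. So $K(\widetilde X)\le |\,\text{enc}(\calT)\,| + c$, where $c$ is the constant-size overhead of the fixed simulator, and it remains only to bound the encoding length of $\calT$.

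First I would fix an explicit encoding scheme for tile assembly systems: a tile type is a $6$-tuple of glues, each glue being a (label, strength) pair, and since all strengths are $1$ and the temperature is $1$, the only non-constant data is the glue labels. A system with $n=|T|$ tile types has at most $6n$ glue labels, so without loss of generality (renaming glues) every label can be taken to be a string over $\{0,1\}$ of length $O(\log n)$; each tile type is then described by $O(\log n)$ bits, the seed by $O(\log n)$ bits, and the whole system by $O(n\log n)$ bits. Second, I would observe that a single fixed universal machine $U$ can be equipped once and for all with the aTAM simulator described above, so the program $p$ is just (this simulator, hard-coded) followed by $\text{enc}(\calT)$; the simulator contributes only an additive constant. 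Third, taking $\calT$ to be a minimizer in the definition of $K^1_{3DSA}(\widetilde X)$, i.e. $n=|T|=K^1_{3DSA}(\widetilde X)$, gives $K(\widetilde X)\le |p| = O(n\log n) = O\!\left(K^1_{3DSA}(\widetilde X)\log K^1_{3DSA}(\widetilde X)\right)$, which is the claim (with $\tau=1$ in the statement's $K^\tau_{3DSA}$, matching $K^1_{3DSA}$).

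The one genuine subtlety — and the step I expect to require the most care — is arguing that the simulator actually halts and is correct. Correctness: because $\calT$ is directed, $\mathcal A_\Box[\calT]=\{\alpha\}$, so it does not matter in what order the simulator tries tile additions; a breadth-first exploration that keeps adding any stably-attachable tile until no more can be added produces exactly $\alpha$. Halting: $\widetilde X$ is a \emph{finite} shape, so $\dom\alpha\subseteq X\times\{0,1\}$ is finite, hence the simulation terminates after finitely many tile placements. (One should note the simulator need not know $|X|$ in advance — it simply runs the growth process to completion, which is guaranteed to be finite.) Finally, the projection step is trivial: from $\dom\alpha$ the machine computes $X=\{(a,b)\mid (a,b,0)\in\dom\alpha\}$ and prints it in whatever list format $\langle\cdot\rangle$ denotes; since this $X$ is a coordinated shape with $X\in\widetilde X$, it is a valid witness for $K(\widetilde X)$. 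This completes the plan.
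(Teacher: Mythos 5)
Your proof is correct and takes essentially the same route as the paper: the paper's one-line proof just cites the corresponding 2D bound of Soloveichik and Winfree and observes that the argument carries over verbatim to the 3D case, while you reconstruct that argument explicitly (encode the directed TAS in $O(n\log n)$ bits, simulate it until the guaranteed-finite terminal assembly is reached, project to $z=0$, output the point list). The subtleties you flag — termination because $\widetilde X$ is finite, order-independence because $\calT$ is directed, and the clean recovery of $X$ from $X\times\{0\}\subseteq\dom\alpha\subseteq X\times\{0,1\}$ — are precisely the ones the cited argument implicitly relies on, so your write-up is a correct and more self-contained version of the paper's proof.
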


\begin{proof}
Soloveichik and Winfree \cite{SolWin07} showed that $K\left(\widetilde{X}\right) = O\left(K^{\tau}_{SA}\left(\widetilde{X}\right)\log K^{\tau}_{SA}\left(\widetilde{X}\right)\right)$, which still holds when $K^{\tau}_{SA}$ is replaced with $K^{\tau}_{3DSA}$, for any $\tau \in \mathbb{Z}^+$.
\end{proof}

The main contribution of this paper is the following lemma, the proof of which mimics the proof of $K^2_{SA}\left( \widetilde{X}\right) \log K^2_{SA}\left( \widetilde{X}\right) = O\left(K\left(\widetilde{X}\right)\right)$ from \cite{SolWin07}. We give the details of the proof here for the sake of completeness.

\begin{lemma}
\label{lem:main-upper}
$K^{1}_{3DSA}\left(\widetilde{X}\right)\log K^{1}_{3DSA}\left(\widetilde{X}\right) = O\left(K\left(\widetilde{X}\right)\right)$.
\end{lemma}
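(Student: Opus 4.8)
The plan is to mimic the Soloveichik--Winfree construction from \cite{SolWin07}, but replacing the temperature-2 cooperative machinery with just-barely 3D temperature-1 gadgets drawn from \cite{CookFuSch11} and \cite{FurcyMickaSummers}. Fix a shape $\widetilde{X}$ and let $p$ be a shortest program witnessing $K(\widetilde{X})$, so $U(p) = \langle X \rangle$ for some $X \in \widetilde{X}$ and $|p| = K(\widetilde{X})$. We must exhibit a directed temperature-1 TAS $\calT = (T,\sigma,1)$ with $|T| = O\!\left(\frac{K(\widetilde{X})}{\log K(\widetilde{X})}\right)$ whose terminal assembly $\alpha$ satisfies $X' \times \{0\} \subseteq \dom{\alpha} \subseteq X' \times \{0,1\}$ for some $X' \cong X$ (after applying a scale factor $c$, which is permitted since $K^1_{3DSA}$ is a property of the shape class $\widetilde{X}$, not of a particular coordinated representative). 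Taking logs and multiplying then gives $K^1_{3DSA}(\widetilde{X}) \log K^1_{3DSA}(\widetilde{X}) = O(K(\widetilde{X}))$, exactly as the analogous arithmetic works in \cite{SolWin07}.

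The construction has the usual four conceptual stages. First, a \emph{seed block} encodes the program $p$: using the optimal base-conversion / packing trick of Adleman--Cheng--Goel--Huang (as adapted to 3D temperature 1 in \cite{FurcyMickaSummers}), one builds a $O\!\left(\frac{|p|}{\log |p|}\right)$-tile gadget that self-assembles a row of tiles spelling out $p$ on its surface; this is the only place the $\frac{n}{\log n}$ savings enters, and everything downstream uses $O(1)$ tile types. Second, a \emph{universal Turing machine simulation} reads $p$ and runs $U$ to produce $\langle X \rangle$; here I would use the 3D temperature-1 Turing-machine simulation of Cook--Fu--Schweller \cite{CookFuSch11}, which confines all tiles to the $z=0$ and $z=1$ planes — this is what makes the whole construction ``just barely'' 3D. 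Third, a \emph{shape-building phase} consumes the list $\langle X \rangle$ and grows, at scale factor $c$ (chosen large enough to fit the seed block and the computation into one $c \times c$ supertile), a spanning-tree-guided assembly of $c \times c$ blocks, one for each point of $X$, laying down $X^c \times \{0\}$ as required. Fourth, \emph{filler tiles} complete each block and the interior. Directedness is maintained because each phase is locally deterministic and the phases are sequenced by unique glue interfaces; I would verify that no block can grow ``backwards'' into a neighbor it did not spawn, using the single-tile-thickness of the $z=1$ layer to route signals without cooperative blocking.

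The main obstacle is faithfully realizing, at temperature 1, the \emph{geometry} that temperature-2 cooperation gives for free in \cite{SolWin07}: in particular, (i) keeping the machine simulation and the block-growth process from racing into or prematurely terminating one another without a cooperative ``wait for two signals'' primitive, and (ii) ensuring that when a block finishes it correctly and unambiguously triggers exactly the neighboring blocks dictated by the spanning tree of $X$, despite only being able to push one glue at a time. The standard fix, which I would adopt, is to route all control signals through the $z=1$ plane as thin single-tile paths that physically cannot fork except where intended, exploiting the extra dimension exactly as \cite{CookFuSch11} does — geometric blocking in 3D substitutes for cooperative binding in 2D. A secondary technical point is bounding the scale factor $c$ and the block-internal tile count by constants independent of $\widetilde{X}$, so that the tile-complexity bottleneck is genuinely the $O\!\left(\frac{|p|}{\log|p|}\right)$ seed encoding and nothing else; this is routine given the gadgets of \cite{FurcyMickaSummers} but must be checked. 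Everything else — the combinatorics of the spanning tree, the correctness of the base conversion, the equivalence-class bookkeeping for $\widetilde{X}$ — transfers from \cite{SolWin07} essentially verbatim, which is why we reproduce it below only for completeness.
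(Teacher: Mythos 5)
Your high-level arithmetic at the top and bottom of the argument (reduce to exhibiting a TAS with $O(|p|/\log|p|)$ tile types and then compute
$\frac{n}{\log n}\log\frac{n}{\log n} \le n$) matches the paper exactly, and your phase decomposition (optimal seed encoding from \cite{FurcyMickaSummers}, just-barely-3D Turing simulation after \cite{CookFuSch11}, block growth, fillers) is the correct skeleton. The gap is in the block-growth phase, and it is not a cosmetic one.

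You propose a \emph{spanning-tree-guided} assembly of $c\times c$ blocks, i.e.\ directly porting the shape-building phase of Soloveichik--Winfree, and you flag the multi-child synchronization issue as an obstacle, but then dismiss it by ``route all control signals through the $z=1$ plane as thin single-tile paths.'' That does not resolve the obstacle. In the spanning-tree picture, a block may have two or three children, each of which must receive its own subtree description; at temperature~2 this is handled by cooperative binding, which lets a block commit to its output interfaces in a controlled order and lets neighbors verify agreement at seams. At temperature~1 in 3D you do get geometric blocking in place of cooperation, but you still cannot safely initiate growth on several sides of a block: the three children can grow into each other, or into a block grown along a different branch of the tree that happens to be geometrically adjacent in $X$, and nothing in your sketch prevents or even detects these collisions. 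Thin $z=1$ paths carry one glue label's worth of information and do not ``fork,'' which is precisely why they cannot implement the branching that a spanning tree demands.

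The paper's actual construction sidesteps this entirely by replacing the spanning tree with a \emph{Hamiltonian cycle} $H$ of $X^4$ (note the scale factor $4$, chosen so that $H$ is guaranteed to exist for any connected $X$ via a wicket/modified-DFS construction, and so that $H$ contains a triple of consecutive collinear points to anchor the seed block). Because $H$ is a cycle, every growth block has exactly one input side and exactly one output side; the entire block-growth phase is a single linear chain. Each block reads the remaining move sequence on its input side, erases the first move (which it just executed), and advertises the rest on the unique output side. No block ever has to spawn two children, so the synchronization problem you identified never arises. A further consequence you also miss: since growth is a cycle, the \emph{last} growth block returns to and completes the (initially partial) seed block by growing single-tile-wide filler paths until they are geometrically blocked, rather than the seed block being complete from the start. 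So your proposal would need the spanning tree replaced by the Hamiltonian-cycle machinery (and the associated $p_{hc}$ program concatenated onto $s$) before the ``directedness is maintained'' claim and the tile-count bound become defensible.
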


\begin{proof}
Let $s$ be a program (Turing machine) that outputs a list of points contained in some coordinated shape $X \in \widetilde{X}$. We develop a temperature 1 3D construction that takes $s$ as input and outputs a TAS $\mathcal{T}_{\widetilde{X}} = \left(T_{\widetilde{X}},\sigma,1\right)$ such that $\mathcal{T}_{\widetilde{X}}$ uniquely produces an assembly whose domain is some shape $X \in \widetilde{X}$ and $\left|T_{\widetilde{X}}\right| = O\left(\frac{\left|s\right|}{\log \left|s\right|}\right)$. This construction is discussed in Section~\ref{sec:main-construction}.

Now suppose that $s$ is the smallest Turing machine that outputs the list of points in some coordinated shape $X \in \widetilde{X}$. In other words, $s$ is such that $K\left( \widetilde{X} \right) = \left| s \right|$. Let $\mathcal{T}^*_{\widetilde{X}} = \left(T^*_{\widetilde{X}},\sigma,1\right)$ be the TAS produced by our construction, when given $s$ as input. Observe that, for any TAS $\mathcal{T}=(T,\sigma,1)$ in which the shape $\widetilde{X}$ uniquely self-assembles, we have $K^1_{3DSA}\left(\widetilde{X}\right) \leq \left|T\right|$. Then, for some constant $c \in \mathbb{Z}^+$, the following is true:
\begin{eqnarray*}
   K^1_{3DSA}\left(\widetilde{X}\right) \log K^1_{3DSA}\left(\widetilde{X}\right) & \leq & \left|T^*_{\widetilde{X}}\right| \log\left|T^*_{\widetilde{X}}\right| \leq c\frac{\left|s\right|}{\log\left|s\right|} \log \frac{\left|s\right|}{\log\left|s\right|} \\
   & = & c\frac{\left|s\right|}{\log\left|s\right|}\left( \log \left|s\right| - \log \log \left|s\right| \right) \leq c\left|s\right|.
\end{eqnarray*}
Thus, $K^1_{3DSA}\left(\widetilde{X}\right) \log K^1_{3DSA}\left(\widetilde{X}\right) = O\left(K\left(\widetilde{X}\right)\right)$.
\end{proof}

\section{Main construction}
\label{sec:main-construction}
In this section, we give an overview of our main construction. Full details can be found in Sections~\ref{sec:Turing_machine_simulation} and~\ref{sec:appendix}.

\subsection{Setup}
\label{sec:setup}
Our construction represents a Turing-universal way of guiding the self-assembly of a scaled-up, just-barely 3D version of an arbitrary input shape $X$ at temperature 1 with optimal tile complexity. Therefore, we let $U$ be a fixed universal Turing machine over a binary alphabet, with a one-way infinite tape (to the right), such that, upon termination, $U$ contains the output -- and only the output -- of the Turing machine being simulated on its tape, perhaps padded to the left and right with $0$ bits (the tape alphabet symbol $0$) and the tape head is reading the last bit of its output. We also assume that, to the left of the leftmost tape cell, there is a special left marker symbol \#, which can be read by $U$ but can neither be overwritten nor written elsewhere on the tape. In general, if $M$ is a Turing machine and $w$ is an input string such that $M(w) = y$, then, upon termination, $U(\langle M,w\rangle)$ leaves exactly a string of the form $\#0^*y0^*$ on its tape with its tape head reading the last bit of $y$ (\# is eventually converted to a 0 bit). Our construction is programmed by specifying the program to be executed by $U$.

\begin{figure}[htp]%
\vspace{-10pt}
\centering
    \subfloat[][The input shape $X$.]{%
        \label{fig:input_shape}%
        \includegraphics[width=.5in]{./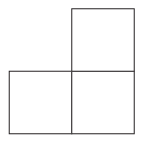}}%
        \hspace{10pt}%
    \subfloat[][$X^2$]{%
        \label{fig:input_scaled}%
        \includegraphics[width=.5in]{./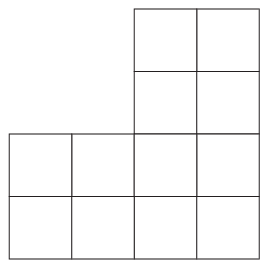}}
        \hspace{10pt}
    \subfloat[][Put a wicket in each $2 \times 2$ block.]{%
        \label{fig:input_scaled_spanning_tree}%
        \includegraphics[width=.75in]{./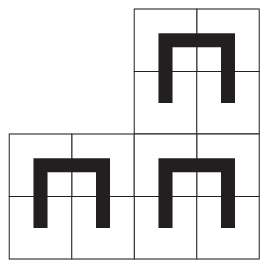}}
        \hspace{10pt}
    \subfloat[][Connect the wickets to get a spanning tree of $X^2$.]{%
        \label{fig:input_scaled_spanning_tree}%
        \includegraphics[width=.75in]{./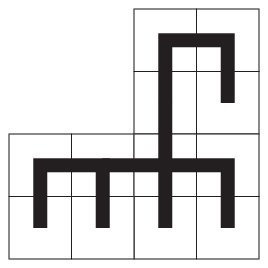}}
        \hspace{10pt}
    \subfloat[][Do a modified depth-first search to get a Hamiltonian cycle of $X^4$ that contains three consecutive, collinear points.]{%
        \label{fig:nice_ham_cycle}%
        \includegraphics[width=1.0in]{./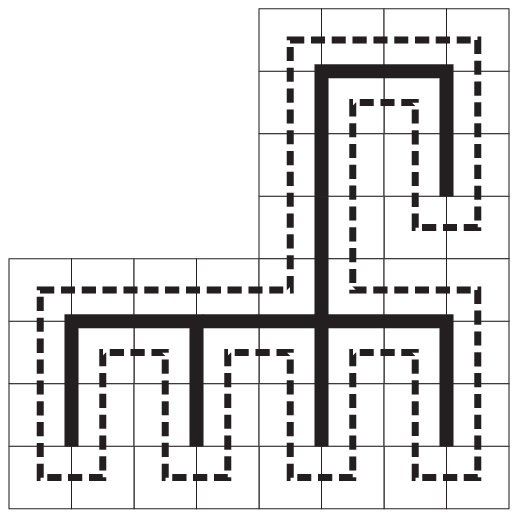}}
    \caption{\label{fig:input_shape_overview} An overview of the algorithm for plotting out a Hamiltonian cycle of an input shape (scaled up by a factor of $4$), such that the cycle has three consecutive, collinear points. Although a Hamiltonian cycle is hard to compute, in general, it is clear that, for an arbitrary finite shape $X$, a Hamiltonian cycle, as described above, of $X^4$, can be computed in polynomial time. }
\end{figure}

The input to $U$ is a program $p$, appropriately encoded as $\langle p \rangle$, using some fixed encoding scheme. The output of $p$ is used to guide the self-assembly of our construction. We assume $p$ is actually the concatenation of two programs $s$ and $p_{hc}$, where $s$, the input to our construction, is a program that outputs the list of points in $X$ and $p_{hc}$ is a fixed program (independent of $X$) that uses the output of $s$ as its input and builds a special Hamiltonian cycle $H$ of $X^4$, along which there are three consecutive, collinear points. The \emph{seed block}, which is described in the next subsection, is defined as the middle point of an arbitrarily chosen triplet of consecutive, collinear points of $H$ (by the way we construct $H$, there is always at least one such triplet of points). We further assume that $p$ outputs $H$ -- and only $H$ -- as a sequence of pairs of bits, such that, $00$, $10$, $01$ and $11$ correspond to ``no-move'', ``left'', ``right'' and ``straight'', respectively and possibly padded to the left with at least two no-moves and to the right with an even number of no-moves (see Figure~\ref{fig:input_shape_overview} for an overview of how $H$ is constructed). Thus, $p$ satisfies $|p| = |s| + \left|p_{hc}\right|$.

\subsection{Seed block}
\label{sec:seed_block}
The (upper portion of what will eventually become the) seed block of our construction grows from a single seed tile and carries out the following three logical phases: decoding, simulation and output. These three phases are depicted in Figure~\ref{fig:seed_block} with vertical, zig-zag and diagonal patterns, respectively.

\begin{figure}[htp]
\begin{minipage}{\textwidth}
\centering
 \includegraphics[width=\textwidth]{./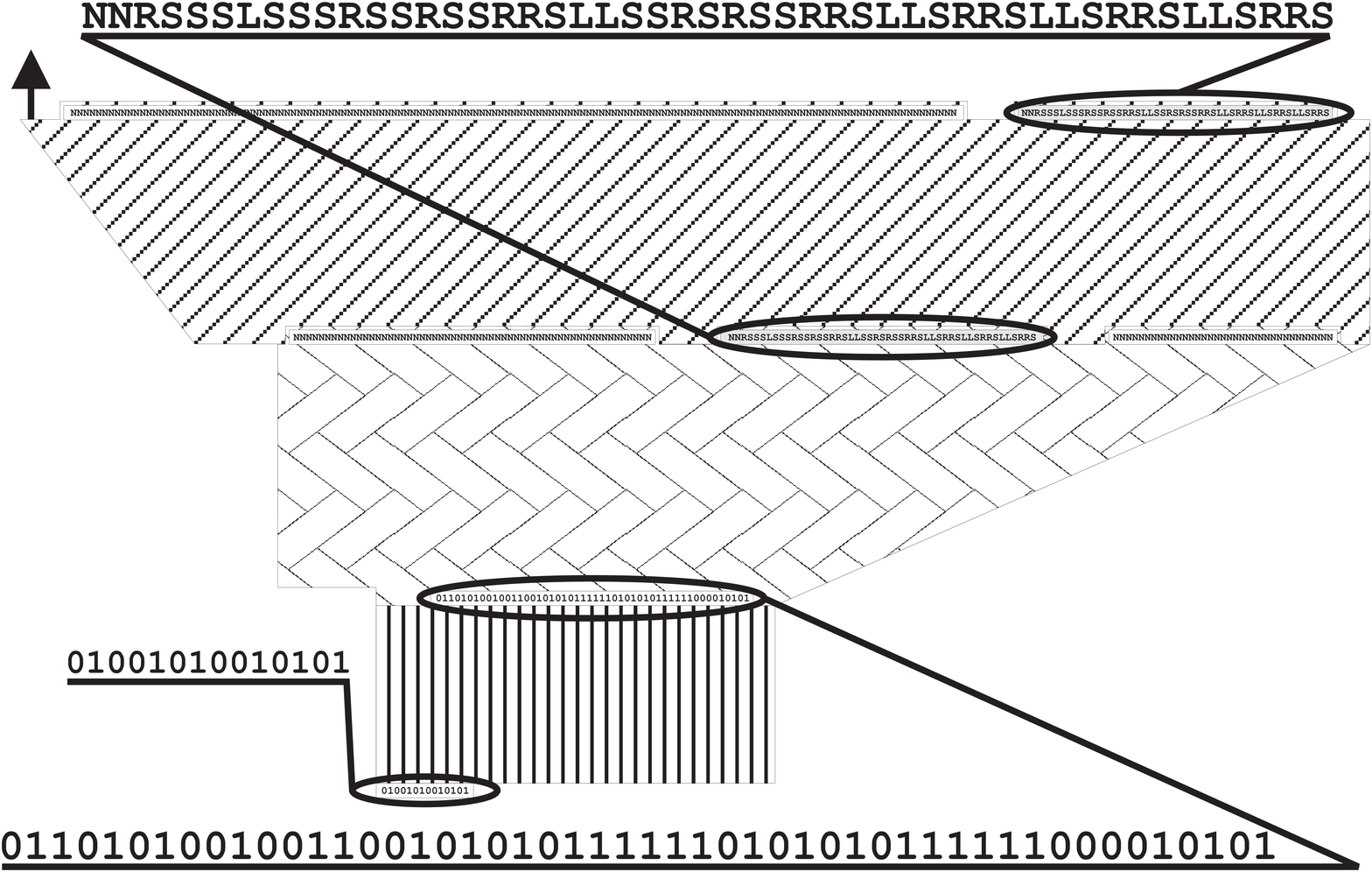}
\end{minipage}
\caption{Self-assembly of the upper portion of the seed block consists of three logical phases. In the first phase (the region filled with vertical lines), the bits of $p$ are decoded using the 3D, temperature 1 optimal encoding scheme of Furcy, Micka and Summers \cite{FurcyMickaSummers} (the encoded bits of $p$ are depicted as the shorter binary string). The decoded bits of $p$ (the longer binary string) are input to a fixed universal Turing machine $U$. Then, in the second phase, the simulation of $p$ on $U$ is carried out (in the region with the zig-zag pattern). We require that $U(\langle p\rangle)$ evaluates precisely to the sequence of moves in the Hamiltonian cycle of $X^4$, padded to the left and right with an even number of $0$ bits (the boxes that are not encircled in this figure). In other words, we require that $U(\langle p \rangle)$ evaluates to a string of the form $(00)^*(00|10|01|11)^*(00)^*$, with the tape head of $U$ reading the second bit in the last move of the Hamiltonian cycle. Finally, in the third phase (in the region with the diagonal line pattern), the moves in the Hamiltonian cycle are shifted to the right. Self-assembly of the first growth block begins from the upward-pointing arrow. Note that the moves in the Hamiltonian cycle are listed in the grey boxes and we use the characters `N', `L', `R' and `S' to represent ``no-move'', ``left'', ``right'' and ``straight'', respectively.}
\label{fig:seed_block}
\vspace{-10pt}
\end{figure}

\textbf{Decoding}. The first phase is the decoding phase. In the decoding phase, the bits of $\langle p \rangle$ are decoded from a $O(\log |\langle p \rangle|)$-bits-per-tile representation to a $1$-bit-per-tile representation (actually, we end up with a $1$-bit-per-gadget representation, which is sufficient to maintain the optimality of our construction). To accomplish this, we use the 3D, temperature 1 optimal encoding scheme of Furcy, Micka and Summers \cite{FurcyMickaSummers}. When the decoding phase completes, the decoded bits of $\langle p \rangle$ are advertised in a one-bit-per-gadget representation along the top of the optimal encoding region (the rectangle with the vertical lines in Figure~\ref{fig:seed_block}).

\textbf{Simulation}. Once the bits of $\langle p \rangle$ are decoded, the simulation phase begins. In this phase, $p$ is simulated on $U$ using a specialized temperature 1, just-barely 3D Turing machine simulation (the region with the zig-zag pattern in Figure~\ref{fig:seed_block}). Our specialized Turing machine simulation, which is described in detail in Section~\ref{sec:Turing_machine_simulation}, assumes an input Turing machine $M$ with (1) a binary alphabet, (2) a one-way infinite tape (to the right), the leftmost tape cell of which contains a special left marker symbol \#, which can be read by $M$ but can neither be overwritten nor written elsewhere on the tape. We simulate $M$ in a zig-zag fashion, similar to the temperature 1, just-barely 3D Turing machine simulation by Cook, Fu and Schweller \cite{CookFuSch11}. However, unlike that of Cook, Fu and Schweller, our simulation represents the contents of each tape cell of $M$ using a six-tile-wide gadget. This gives a more compact geometric representation of the output of $M$ and, as a result, simplifies the construction of the \emph{growth blocks} (see Section~\ref{sec:growth_block}).

By the definition of $U$ and $p$ and because of the compact geometry of our simulation of $p$ on $U$, the output of the simulation phase, i.e., $U(\langle p\rangle)$, is an even number of geometrically-encoded bits (each bit is represented by a six-tile-wide bit-bump gadget), possibly padded to the left and right with an even number of $0$ bits, such that each pair of bits corresponds to a move in the Hamiltonian cycle $H$ (not counting occurrences of the pair $00$, which represents a no-move).

\textbf{Output}. In order to satisfy certain geometric constraints, which are required by the growth blocks, after the simulation phase of $p$ on $U$ is complete, the (final) output phase begins. In the output phase, we use a special, constant-size tile set to shift the geometrically-encoded bits of $H$ to the right, so that the bits of $H$ are in a right-justified position along the top of the seed block (the region with diagonal lines in Figure~\ref{fig:seed_block}). For each right-shift, we add a pair of $0$ bits to the left, which ensures that the upper portion of the seed block will be wider than it is taller. After the output phase of the seed block, self-assembly of the first growth block, which is always to the north in our construction, as guaranteed by $p$, begins from the left side of the top of the upper portion of the seed block (see the upward-pointing arrow in Figure~\ref{fig:seed_block}).

\textbf{Scale factor}. Let $W_{decode}$ and $H_{decode}$ be the maximum horizontal and vertical extent, respectively, of the seed block after the decoding phase (the rectangle with vertical lines pattern in Figure~\ref{fig:seed_block}) completes. From \cite{FurcyMickaSummers}, we know that $W_{decode} > H_{decode}$. Next, in the simulation phase, the tape grows to the right by two tape cells for each transition. Each transition is comprised of two rows of gadgets, which are wider than they are tall (six tiles versus four tiles). Also, we may assume that, during the simulation phase, $p$ is programmed to initially scan the input from left-to-right and then from right-to-left before beginning. Let $W_{sim}$ and $H_{sim}$ be the maximum horizontal and vertical extent, respectively, of the seed block after the simulation phase (the zig-zag pattern in Figure~\ref{fig:seed_block}) completes. Then we have $W_{sim} > H_{sim}$. Finally, in the output phase, as the output bits of the simulation phase are shifted to the right, two tape cells are added to the left for each shift. Each shift is comprised of two rows of gadgets, which, like the simulation gadgets, are wider than they are tall (six versus four). Therefore, let $W_{sb}$ and $H_{sb}$ be the maximum horizontal and vertical extent, respectively, of the seed block after the simulation phase (the diagonal pattern in Figure~\ref{fig:seed_block}) completes. Then we have $W_{sb} > H_{sb}$. From this we may conclude that the seed block, once completely filled in by the last growth block (see Figure~\ref{fig:putting_it_all_together_big}) will be a square. The scale factor of our construction is $W_{sb}$.

\subsection{Growth blocks}
\label{sec:growth_block}
\begin{figure}[htp]
\begin{minipage}{\textwidth}
\centering
 \includegraphics[width=\textwidth]{./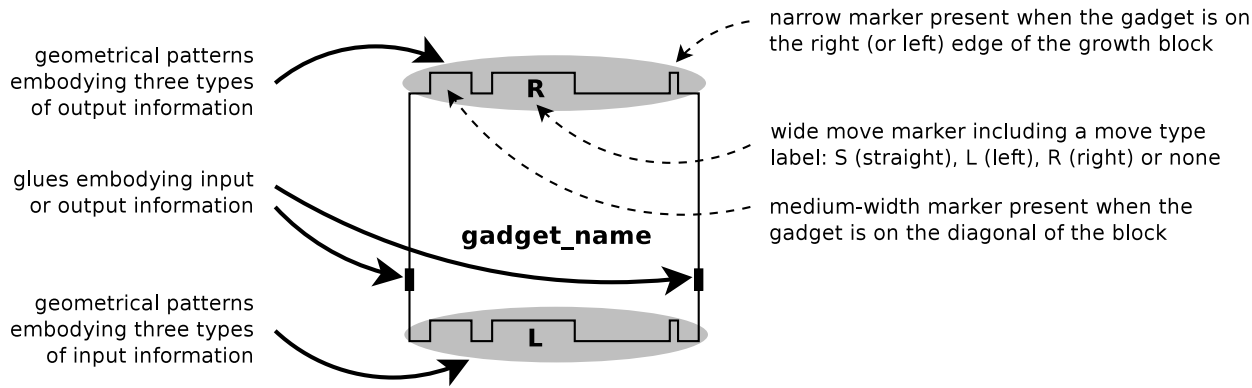}
\end{minipage}
\caption{Key to the representation of our gadgets}
\label{fig:generic_gadget}
\vspace{-10pt}
\end{figure}
Each growth block has a single input side, which reads the remaining
moves in the Hamiltonian cycle and a single output side, which
advertises the same remaining path but with its first move
erased. This first move determines the position of the output side in
relation to the input side. In this section, we assume that the input
side of the growth block is its south side (the construction simply
needs to be rotated for the three other possible positions of the
input side). So, if the first (erased) move in the remaining path is a
right turn, then the output side of the growth block is its east
side. We describe the construction for this case here (see
Figure~\ref{fig:Rgrowth_construction}). The ``left turn'' and
``straight move'' cases are described in Section~\ref{sec:appendix}. The overview
figure for the growth blocks uses gadgets whose structure is explained
in Figure~\ref{fig:generic_gadget}. The constructions of the individual
gadgets are described in the appendix.
\begin{figure}[htp]
\begin{minipage}{\textwidth}
\centering
 \includegraphics[width=\textwidth]{./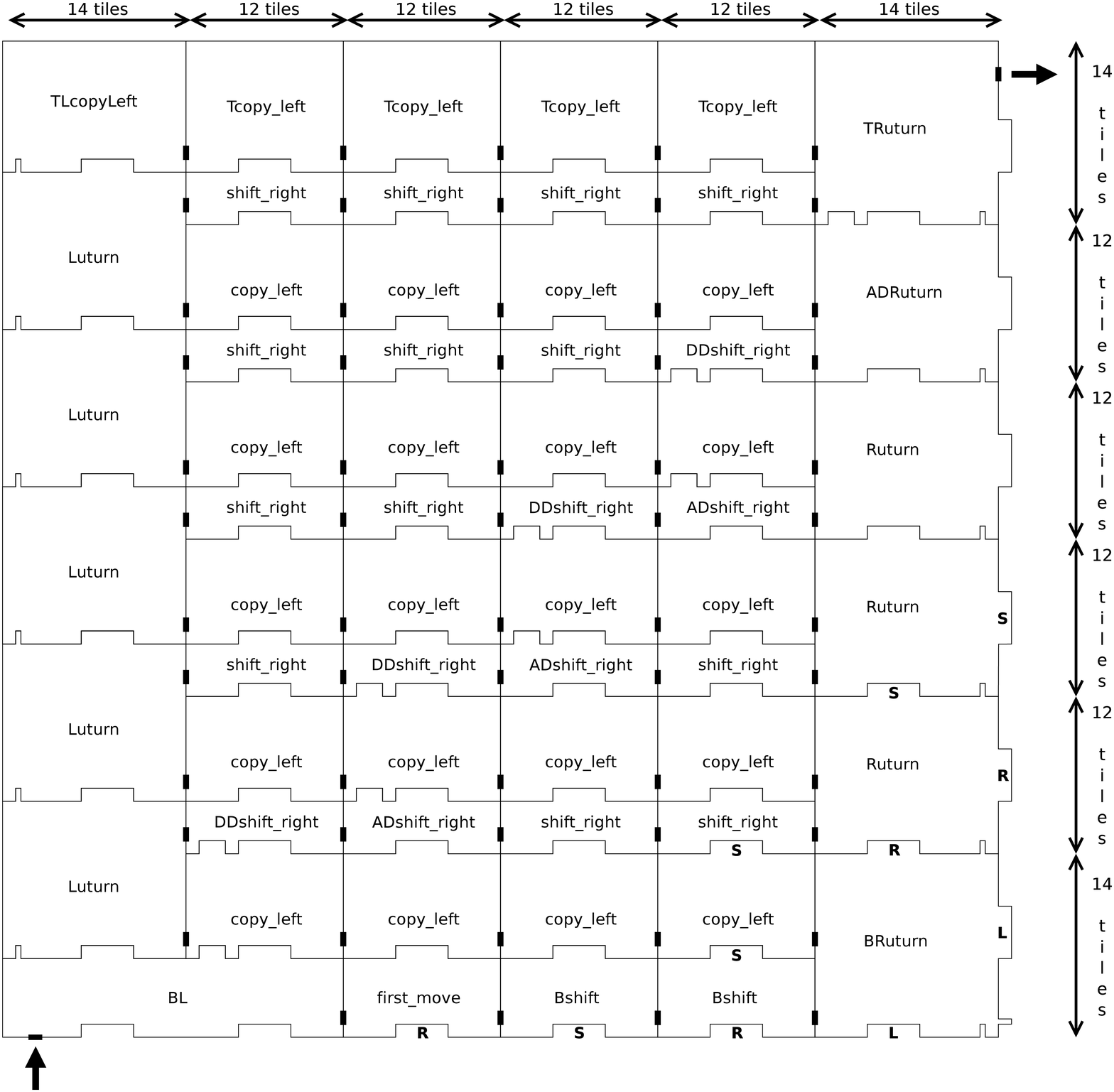}
\end{minipage}
\caption{Overall construction of a growth block whose input path
  starts with a right turn}
\label{fig:Rgrowth_construction}
\vspace{0pt}
\end{figure}

The growth block starts assembling in its southwest corner and
progresses in a zig-zag pattern. The first row of gadgets, moving from
left to right, starts by copying all of the leading no-moves, of which
there are exactly two in Figure~\ref{fig:Rgrowth_construction} but
generally many more. Once the first actual move is found, a set of
gadgets specific to its type is activated. In the case of a right
turn, all of the moves are shifted by one position to the right (and
the first move is replaced by a no-move). The last move in the
remaining path is advertised at the bottom of the output (or east)
side of the block. Then the construction switches direction and moves
from right to left, simply copying the shifted path, which completes
the first iteration. In each subsequent iteration, the left-to-right
pass shifts the whole path to the right by one position and advertises
one more move on the output (or east) side. In addition to the right
shift, each zig-zag iteration moves a diagonal marker by one position
to the right, starting from the southwest corner. Once this diagonal
marker reaches the east side of the block, the top row, moving from
right to left, can complete the block. Note that in this case, the
remaining path is not advertised on the west nor the north sides. If
the first move in the remaining path were a straight move, the
remaining moves would not be shifted but simply copied at each
iteration and eventually advertised on the north side of the block. If
the first move in the remaining path were a left turn, then the
remaining moves would be shifted to the left and advertised on the
west side instead. In other words, for a straight move, Figure~\ref{fig:Rgrowth_construction} would have smooth east and west sides, with bit-bumps along its north side and for a left move, Figure~\ref{fig:Rgrowth_construction} would have smooth east and north sides, with bit-bumps along its west side. Section~\ref{sec:appendix} describes these two cases.


\subsection{Putting it all together}
\begin{figure}[htp]%
\centering
    \subfloat[][In the upper portion (pattern-filed regions) of the seed block, the points in $X$ are decoded and a special Hamiltonian cycle $H$ of $X^4$. As moves of $H$ are carried out within each growth block, the bits of $H$ are propagated to the next growth block but the move that was just executed is erased (depicted as dashes). The last growth block assembles the remaining portion of the seed block with a sequence of single-tile-wide paths that assemble northward until running into the existing portion of the seed block.]{%
        \label{fig:putting_it_all_together_big}%
        \includegraphics[width=.7\textwidth]{./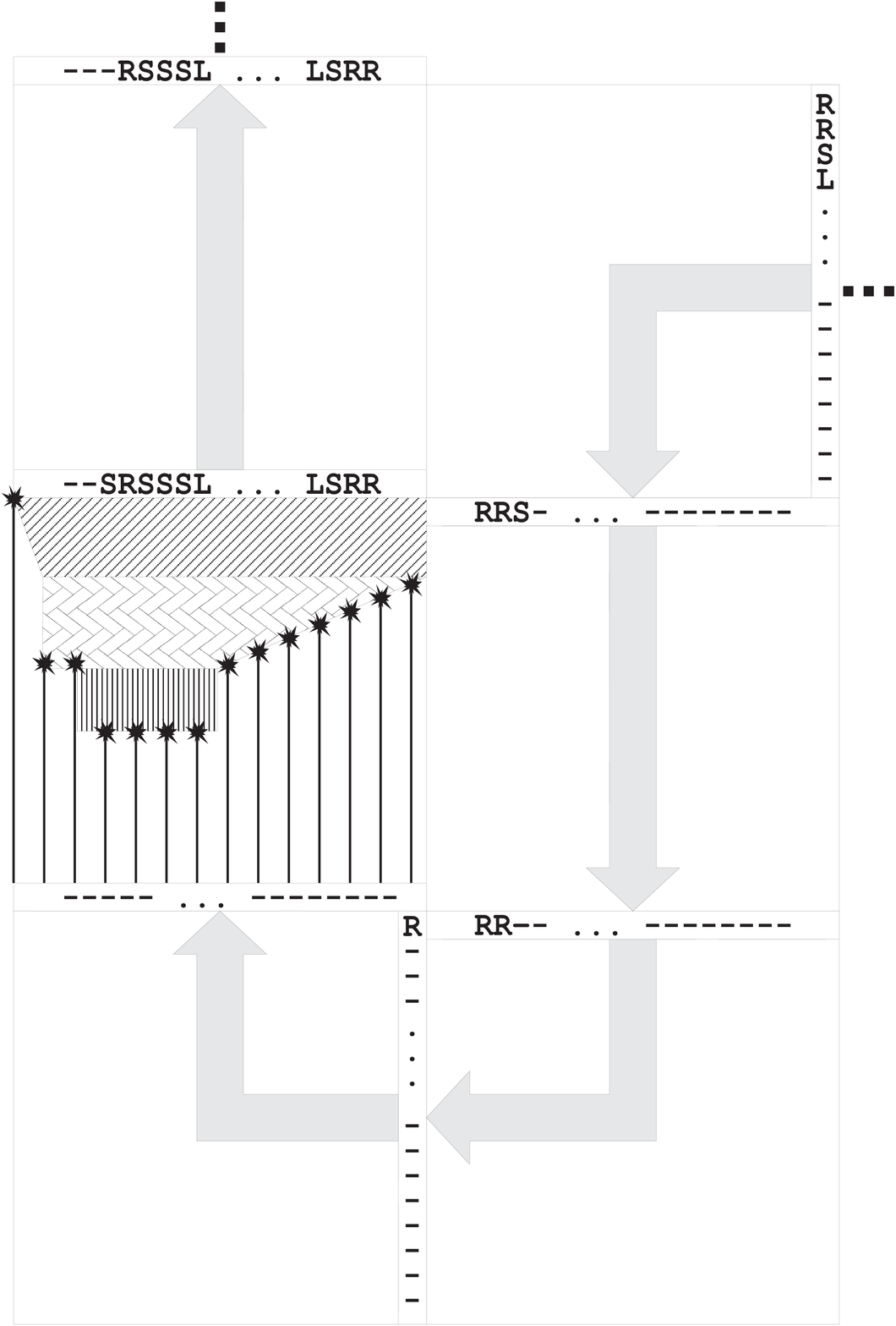}}%
        \hspace{5pt}%
    \subfloat[][The grey squares indicate the portion being assembled in Figure~\ref{fig:putting_it_all_together_big}. The darker square is the seed block. Self-assembly always proceeds north from the seed block.]{%
        \label{fig:putting_it_all_together_small}%
        \includegraphics[width=.2\textwidth]{./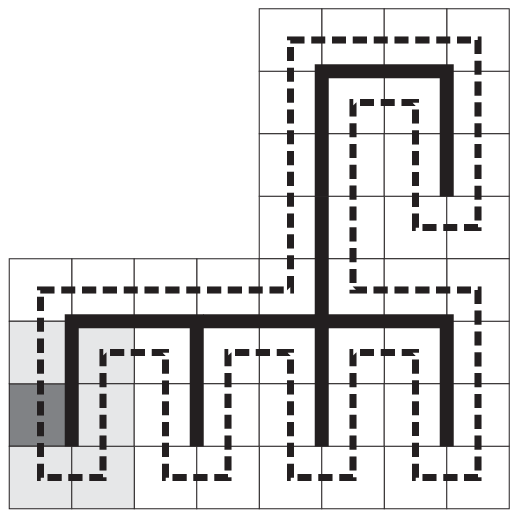}}
        \hspace{5pt}
    \caption{\label{fig:putting_it_all_together} Putting it all together. In this figure, we depict the moves in the Hamiltonian cycle of $X^4$ as `-', `L', `R' and `S' for ``no-move'', ``left'', ``right'' and ``straight'', respectively. In our construction, these moves are represented using the pairs of bits 00, 10, 01 and 11, for ``no-move'', ``left'', ``right'' and ``straight'', respectively.}
\end{figure}
After the final growth block completes, the remaining portion of the seed block, i.e., its lower portion, is assembled. Note that, up until this point, the seed block is not a $c \times c$ square. However, the horizontal extent of the upper portion of the seed block defines the scale factor $c$ of our construction. This scale factor is dominated by running time of $p$ on $U$, which is the sum of the running times of $s$ and $p_{hc}$. The final growth block fills in the remaining portion of the seed block by initiating the assembly of a sequence of $c$ single-tile-wide, vertically and uncontrollably assembling paths that are inhibited only by existing portions of the seed block (see the explosion icons in Figure~\ref{fig:putting_it_all_together_big}). Thus, the final, uniquely-produced terminal assembly of our construction is an assembly made up of $c \times c$ blocks of tiles, where each block is mapped to some point in $X$. Figure~\ref{fig:putting_it_all_together} gives a high-level overview of how all of the major components of our construction work together.

\section{Conclusion}
In this paper, we develop a Turing-universal way of guiding the self-assembly of a scaled-up, just-barely 3D version of an arbitrary input shape $X$ at temperature 1 with optimal tile complexity. This result is essentially a just-barely 3D temperature 1 simulation of a similar 2D temperature 2 result by Soloveichik and Winfree \cite{SolWin07}. One possibility for future research is to resolve the tile complexity of an arbitrary shape $\widetilde{X}$ at temperature 1 in 2D, i.e., what is the quantity $K^1_{SA}\left(\widetilde{X}\right)$?

\section*{Acknowledgement}
We thank Matthew Patitz for offering helpful improvements to the presentation of our main construction.

\bibliographystyle{amsplain}
\bibliography{tam}

\newpage
\clearpage
\section{Appendix: Turing machine simulation}
\label{sec:Turing_machine_simulation}
In this section, we describe our Turing machine simulation.

\subsection{Notation for figures}

In the figures of the gadgets, we use big squares to represent tiles placed in the $z=0$ plane and small squares to represent tiles placed in the $z=1$ plane. A glue between a $z=0$ tile and $z=1$ tile is denoted as a small black disk. Glues between $z=0$ tiles are denoted as thick lines. Glues between $z=1$ tiles are denoted as thin lines. This convention is also used in Section~\ref{sec:appendix}.

In what follows, we assume that the glues on each tile are implicitly defined to ensure deterministic assembly.

\subsection{Overview}
\begin{figure}[htp]
\centering
\includegraphics[width=\textwidth]{./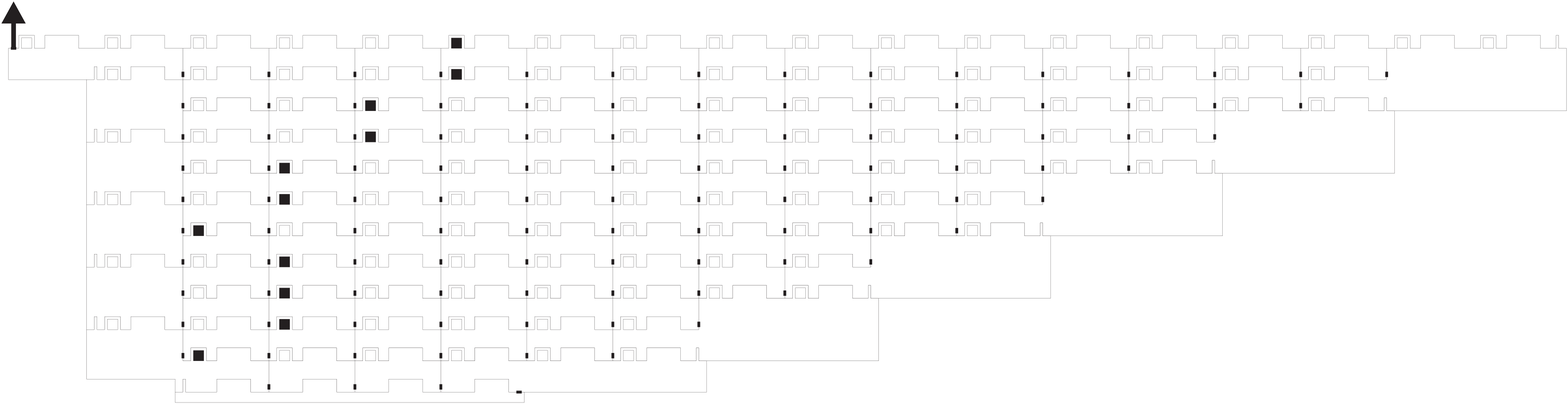}
\caption{Our simulation assembles the configuration history of the computation of a Turing machine $M$ on input $w$. Tape cells are represented as pairs of bits. Most gadgets represent one tape cell, but some gadgets represent two or three tape cells. In this, and other high-level overview figures, a wide bump on a gadget indicates a binary value for a tape cell, a medium bump with a black (white) square indicates the tape head is (not) reading that tape cell and a very narrow bump indicates the left or right boundary. As the Turing machine is being simulated, the tape is expanded to the right by two blank tape cells for each transition. At the end of the simulation, either one or two blank cells are added to the left, depending on the parity of the length of the input (in this example, two extra tape cells are added to the left because there are an even number of input bits). The north-facing glue of the last tile placed is indicated by the upward pointing arrow. Note that the post-processing phase is not shown (but it is shown in Figure~\ref{fig:overview_post_process}).}
\label{fig:overview}
\vspace{0pt}
\end{figure}
Our Turing machine simulation assumes an input Turing machine $M = \left(Q,\{0,1\},\{0,1\},\delta,q_0,q_{halt}\right)$, such that, $M$ has a one-way infinite tape (to the right), the leftmost tape cell of which contains a special left marker symbol \#, which can be read by $M$ but can neither be overwritten nor written elsewhere on the tape (the left marker symbol is added by our construction so that it is left of the leftmost input bit). We further require that $M$ begins with its tape head reading the leftmost bit of the input word $w \in \{0,1\}^*$ and $M$ halts with its tape head reading the rightmost bit of output.

We simulate $M$ on $w$ in a zig-zag fashion, similar to Cook, Fu and Schweller \cite{CooFuSch11}, where each transition in our simulation is carried out by a pair of rows of gadgets. First, a row of gadgets assembles left-to-right, followed by a row of gadgets that assembles right-to-left. Right-moving transitions are performed in the former whereas the latter handles left-moving transitions.

Our simulation differs from that of Cook, Fu and Schweller in the following way: we represent the contents of each tape cell with one six-tile-wide gadget (Cook, Fu and Schweller represent each tape cell with a non-constant number of gadgets, depending on the size of the transition function of $M$). Consequently, in our construction, and unlike in Cook, Fu and Schweller, the transition function of $M$ is stored in the glues of the tiles. Also, since the tape may grow unbounded, we must store the position of the tape head geometrically. Therefore, in our construction, each tape cell gadget, in addition to storing its binary value using a bit-bump, also stores whether or not the tape head is reading that cell in a similar kind of bit-bump (the tape head is either reading this cell or not). This approach has a more compact geometric representation of the output, which is more convenient for our purposes.

The first phase in the simulation is \emph{preprocessing}. In the preprocessing phase, each input bit is augmented with a bit indicating whether or not the tape head is reading that bit, where 0 (1) represents the absence (presence) of the tape head. The preprocessing gadgets assemble from right-to-left. After the preprocessing phase completes, the \emph{transition} phase begins.

During the transition phase, the computation of $M$ on $w$ is simulated according to the transition function of $M$, which is encoded into the glues of our simulation. The gadgets that implement the transition phase automatically grow the tape to the right by two tape cells in each time step. The tape is expanded to the left by one tape cell in the preprocessing phase of the simulation. Then, in the last step of the transition phase, one additional tape cell is added to the left if there is an even number of input bits.

The last phase of the simulation is \emph{post-processing}, in which the output bits are shifted to the right. In this phase, the tape does not expand to the right, but is expanded to the left by two tape cells for each shift. Note that, in this phase, each shift is carried out in two steps. First, a row of gadgets, assembling left-to-right, shifting all the tape cells over to the right by one (gadget) position. Then, a row of gadgets, assembling right-to-left, copying the (binary values of the) just-shifted tape cells up and, after growing the tape to the left by two tape cells, restarts another left-to-right-assembling shift row. The shifting is complete when the geometrically-encoded bit that represents the position of the tape head reaches the rightmost tape cell. The last row of post-processing gadgets assembles from right-to-left and erases the tape-head-indicator, leaving only the binary value of the tape cell.

When the post-processing phase completes, thus completing the simulation of $M$ on $w$, the number of bits along the top of the simulation assembly is even. This is because: (1) the preprocessing phase computes the parity of the length of the input, which causes an appropriate number of tape cells to be added to the left immediately before the output phase begins, (2) the transition phase grows the tape to the right by two tape cells for each transition, and (3) the post-processing phase grows the tape to the left by two tape cells for every shift operation. Moreover, because of the way the tape is expanded in the transition and post-processing phases, the completed simulation of $M$ on $w$ is wider than it is tall.

We use the following notation: For $b,c \in \{0,1\}$ and $p,q\in Q - \{q_{halt}\}$, $move(p,b) = L$ if $\delta(p,b) = (q,c,L)$ and $move(p,b) = R$ if $\delta(p,b) = (q,c,R)$, i.e., $move(p,b)$ gives the direction $M$ will move in when in state $p$ reading bit $b$.


\subsection{Input}
\begin{figure}[htp]
\centering
\includegraphics[width=\textwidth]{./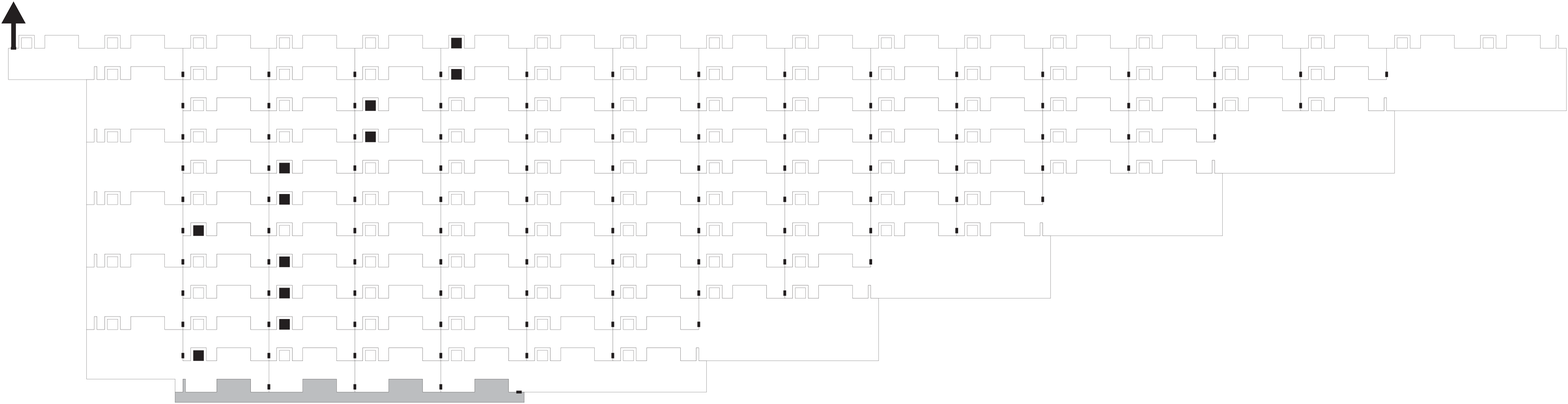}
\caption{The input bits are indicated with dark grey.}
\label{fig:overview_input}
\vspace{0pt}
\end{figure}

The input to the simulation is a sequence of appropriately-spaced, geometrically-encoded bits (bit-bumps), where a bump in the $z=0$ ($z=1$) plane represents the bit 0 (1). Figure~\ref{fig:Turing_machine_input} gives an example of an input sequence.

\begin{figure}[htp]
\begin{minipage}{\textwidth}
\centering
\fbox{
 \includegraphics[width=0.85\textwidth]{./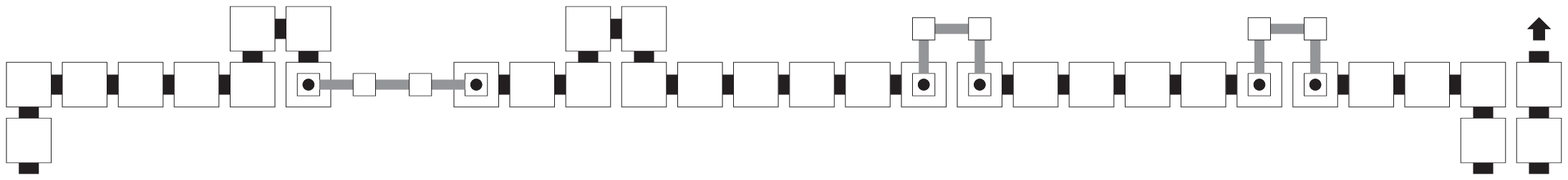}
}
\end{minipage}

\caption{An example of a 4-bit input to the Turing machine simulation, which matches the geometry of the output of the optimal encoding region from \cite{FurcyMickaSummers}. The bit string depicted here is $w=0011$. Each bit is encoded with a gadget that is six tiles wide and two tiles tall (with a bump in the two middle tiles). Observe that the leftmost bit is specially marked by the two missing $z=0$ tiles (seventh and eighth tiles from the left in the middle row). A subsequent gadget will detect this gap and thus learn the location of the leftmost edge of the input. }
\label{fig:Turing_machine_input}
\vspace{0pt}
\end{figure}

\subsection{Preprocessing}
\begin{figure}[htp]
\centering
\includegraphics[width=\textwidth]{./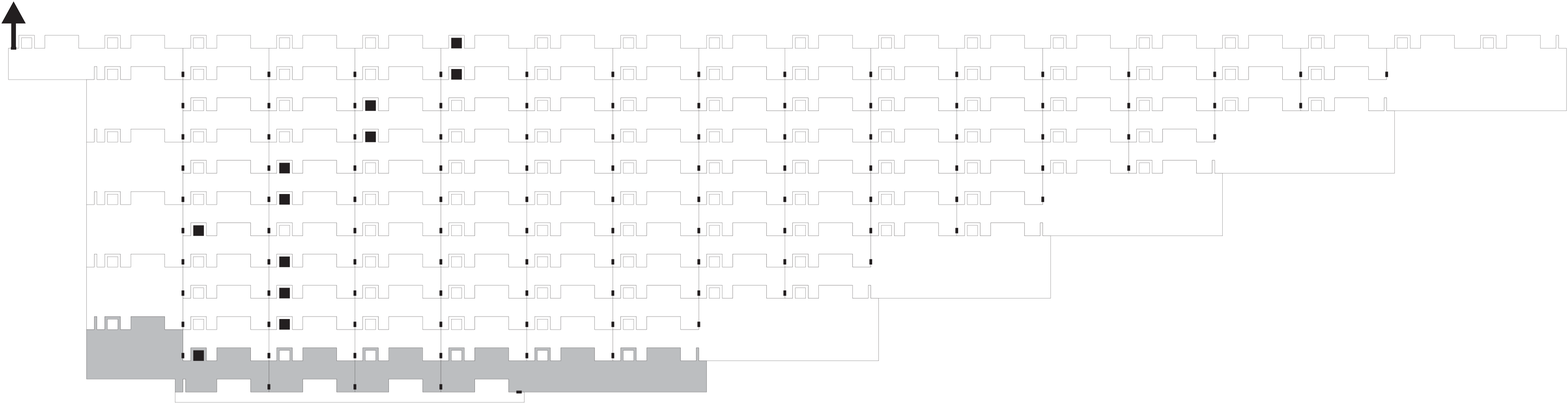}
\caption{The preprocessing gadgets are indicated with dark grey.}
\label{fig:overview_preprocessing}
\vspace{0pt}
\end{figure}
The first phase in our Turing machine simulation is \emph{preprocessing}. In this phase, the input bits are read from right-to-left and copied toward the north by bit-reading gadgets. Each input bit is augmented with another (geometrically-represented) bit, which indicates whether or not the tape head is currently reading that cell. The tape cell that contains the first input bit receives a 1 bit, so as to indicate that the tape head is currently reading the first bit of input. The rest of the tape cells receive a 0 bit. The gadgets that carry out the preprocessing phase compute the parity of the length of the input. If there is an even number of input bits, then two additional tape cells are added to the left. If there is an odd number of input bits, then only one additional tape cell will be added to the left. Technically, in either case, only one additional tape cell is added to the left by the preprocessing stage (this additional tape cell logically contains the left tape marker symbol $\#$, which is geometrically encoded). However, the leftmost tape cell gadget will encode the parity of the length of the input and, in the case of even parity, an additional 0 bit will be added in the post-processing phase of the simulation.

The gadgets in this subsection are shown in Figure~\ref{fig:pre_rightmost},~\ref{fig:pre_middle} and~\ref{fig:pre_leftmost}.

\begin{figure}[htp] 
\fbox{
\begin{minipage}[t][40mm][b]{0.98\textwidth}
\centering
 \includegraphics[width=0.7\textwidth]{./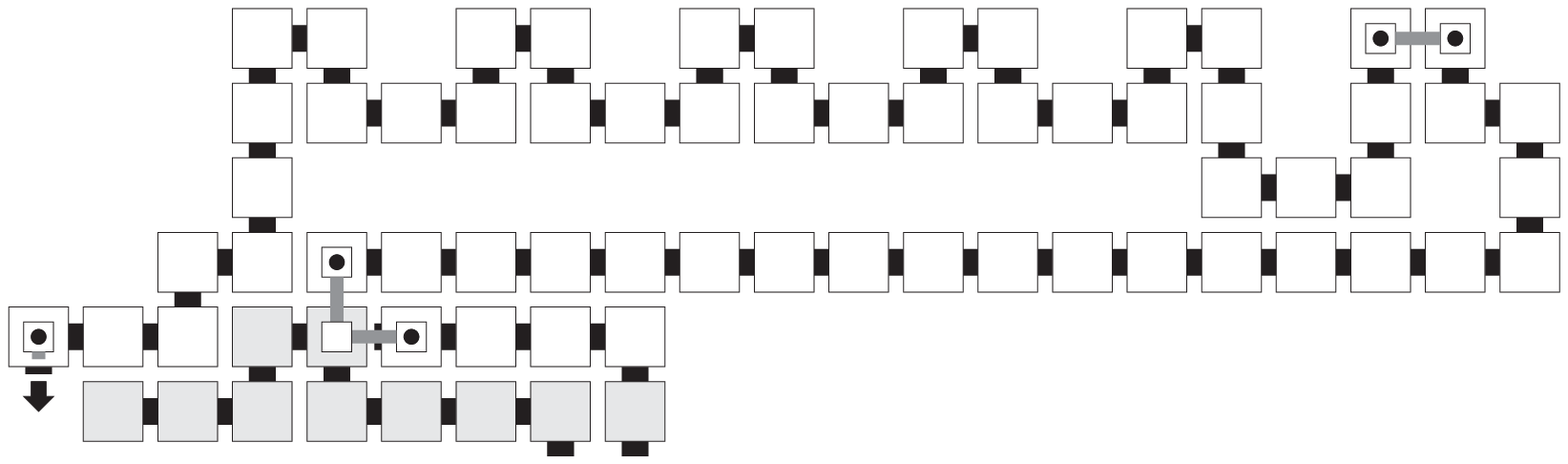}
\medskip

\begin{minipage}[t][][t]{\textwidth}
\centering\small
(a) {\tt Pre\_rightmost\_0}: the last input bit is 0.
\end{minipage}
\end{minipage}}
\medskip

\begin{minipage}[t]{0.98\textwidth}
\fbox{
\begin{minipage}[t][40mm][b]{\textwidth}
\vspace{0pt}
\centering
 \includegraphics[width=0.7\textwidth]{./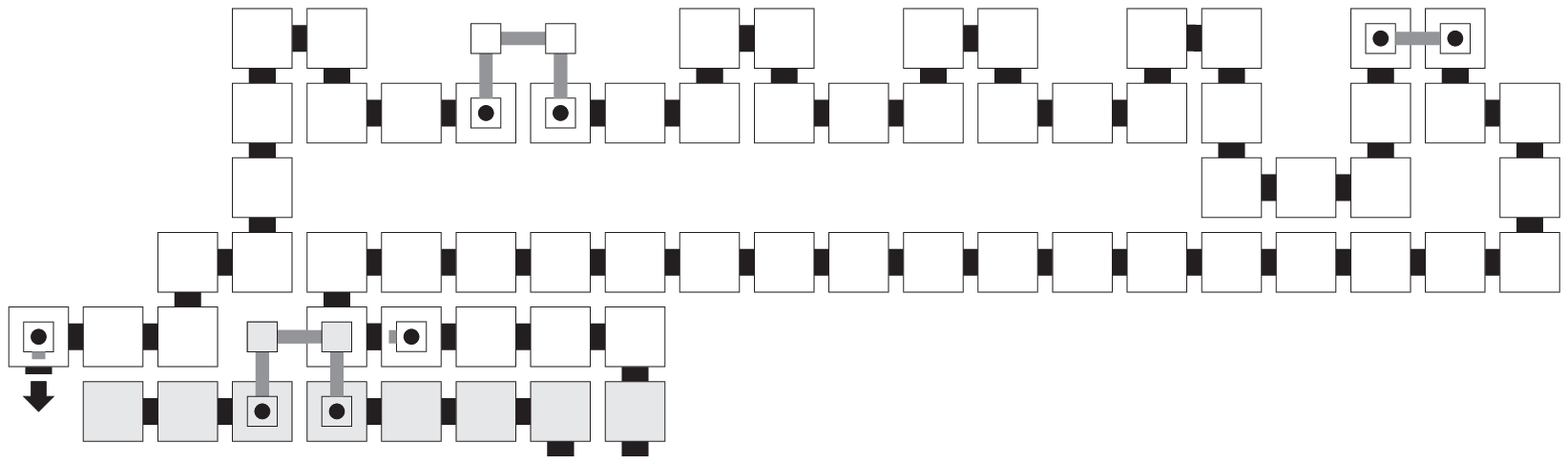}
\medskip

\begin{minipage}[t][][t]{\textwidth}
\centering\small
(b) {\tt Pre\_rightmost\_1}: the last input bit is 1.
\end{minipage}
\end{minipage}}

\end{minipage}
\caption{The {\tt Pre\_rightmost} gadgets, which copy the rightmost input bit and grow the tape to the right by two additional tape cells. These gadgets actually shift the input bit over to the right by three tiles. Note the first two (leftmost) bumps along the top of this gadget represent the last input bit, the first (leftmost) of which is in the $z=0$ plane to indicate that the tape head is currently not reading the last bit of input (this convention is followed in subsequent gadgets). The deep dent between the final two bit-bumps indicates the position of the rightmost tape cell (this convention is followed in subsequent gadgets). In this, and subsequent figures, the grey tiles represent previously-placed tiles that this gadget is using as geometrically-encoded input. The arrow pointing away from the gadget represents its output glue. The empty regions in these gadgets (and subsequent gadgets) can be filled in with a constant number of filler tiles. }
\label{fig:pre_rightmost}
\end{figure}

\begin{figure}[htp] 
\centering
\begin{minipage}[t]{\textwidth}
\centering
\fbox{
\begin{minipage}[t][45mm][b]{.45\textwidth}
\vspace{0pt}
\centering
 \includegraphics[width=.6\textwidth]{./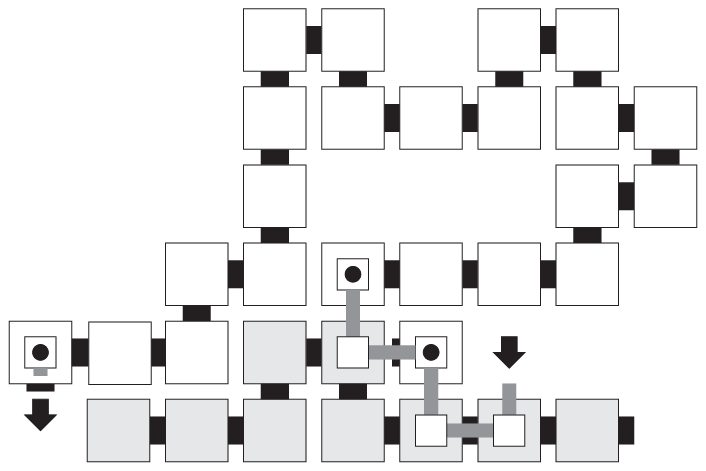}
\medskip
\begin{minipage}[t][][t]{\textwidth}
\centering\small
(a) {\tt Pre\_middle\_0}
\end{minipage}
\end{minipage}}
\fbox{
\begin{minipage}[t][45mm][b]{0.45\textwidth}
\centering
 \includegraphics[width=.6\textwidth]{./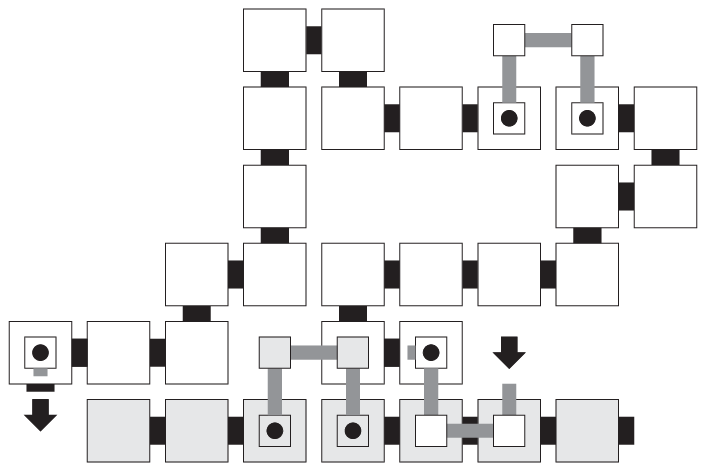}
\medskip
\begin{minipage}[t][][t]{\textwidth}
\centering\small
(b) {\tt Pre\_middle\_1}
\end{minipage}
\end{minipage}}
\end{minipage}
\caption{The {\tt Pre\_middle} gadgets. These gadgets copy the middle input bits (i.e., neither the leftmost nor the rightmost input bits). Observe that, in either case, the tape head bit-bump is in the $z=0$ plane and therefore indicates that the tape head is not present. The input glues of these gadgets bind to the output glues of {\tt Pre\_middle} or {\tt Pre\_right} gadgets.}
\label{fig:pre_middle}
\end{figure}

\begin{figure}[htp]
\centering


\begin{minipage}[t]{\textwidth}
\centering

\fbox{
\begin{minipage}[t][65mm][b]{0.45\textwidth}
\centering
 \includegraphics[width=.9\textwidth]{./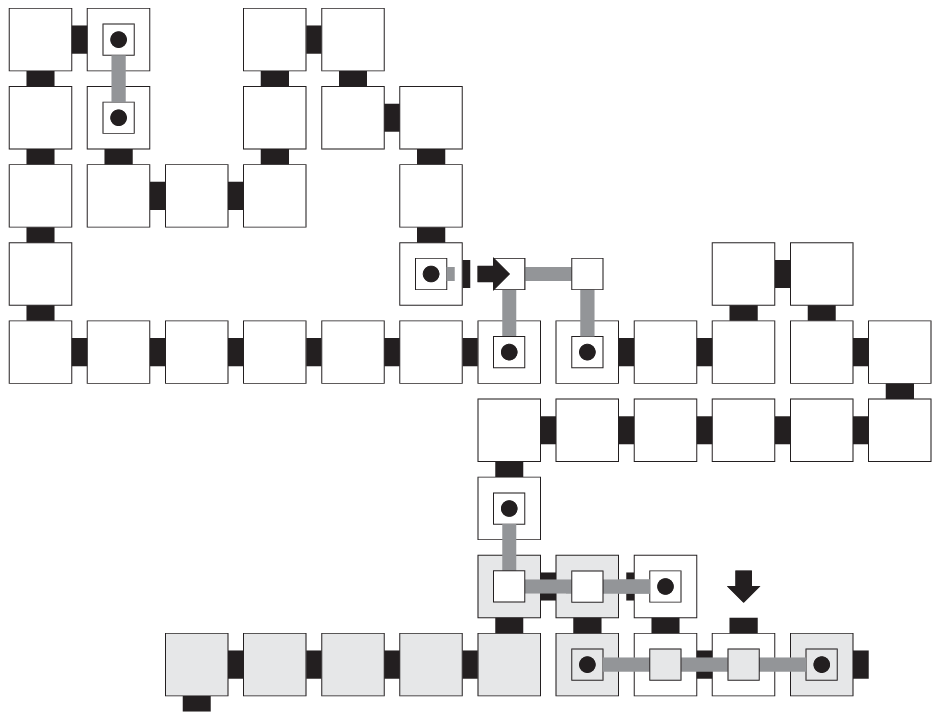}

(a) {\tt Pre\_leftmost\_even\_0}
\end{minipage}}
\fbox{
\begin{minipage}[t][65mm][b]{0.45\textwidth}
\centering
 \includegraphics[width=.9\textwidth]{./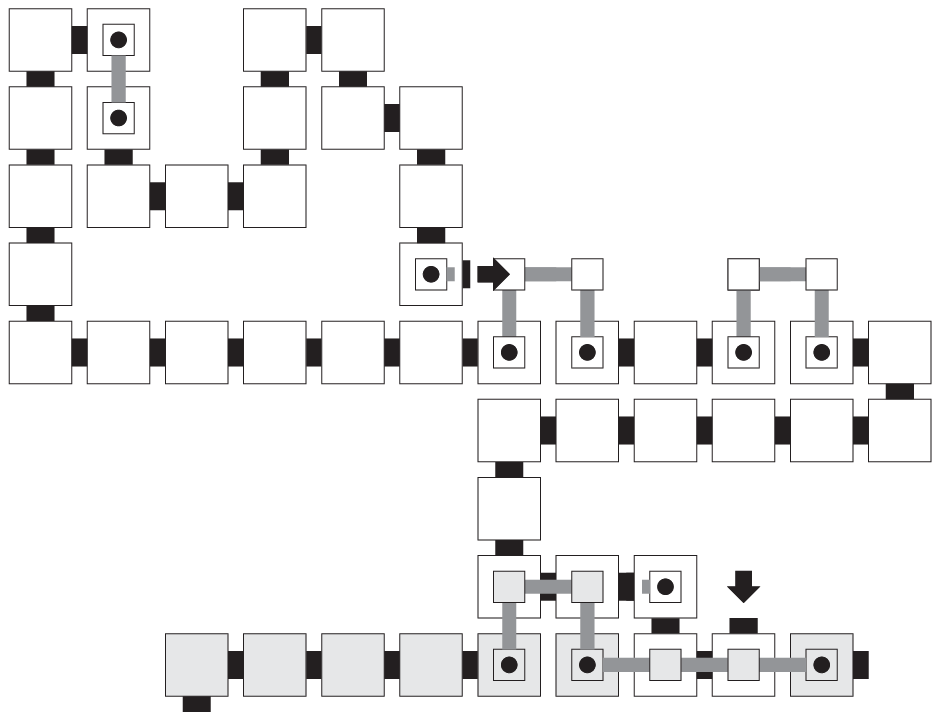}

(b) {\tt Pre\_leftmost\_even\_1}
\end{minipage}}
\end{minipage}

\bigskip


\begin{minipage}[t]{\textwidth}
\centering
\fbox{
\begin{minipage}[t][65mm][b]{0.45\textwidth}
\centering
 \includegraphics[width=.9\textwidth]{./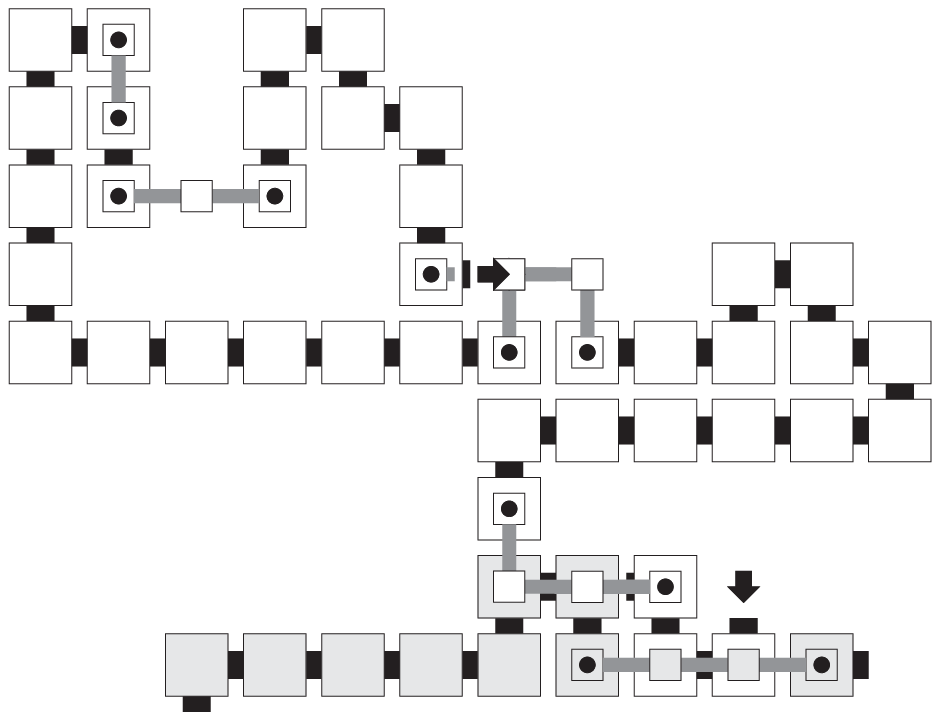}

(c) {\tt Pre\_leftmost\_odd\_0}
\end{minipage}}
\fbox{
\begin{minipage}[t][65mm][b]{0.45\textwidth}
\centering
 \includegraphics[width=.9\textwidth]{./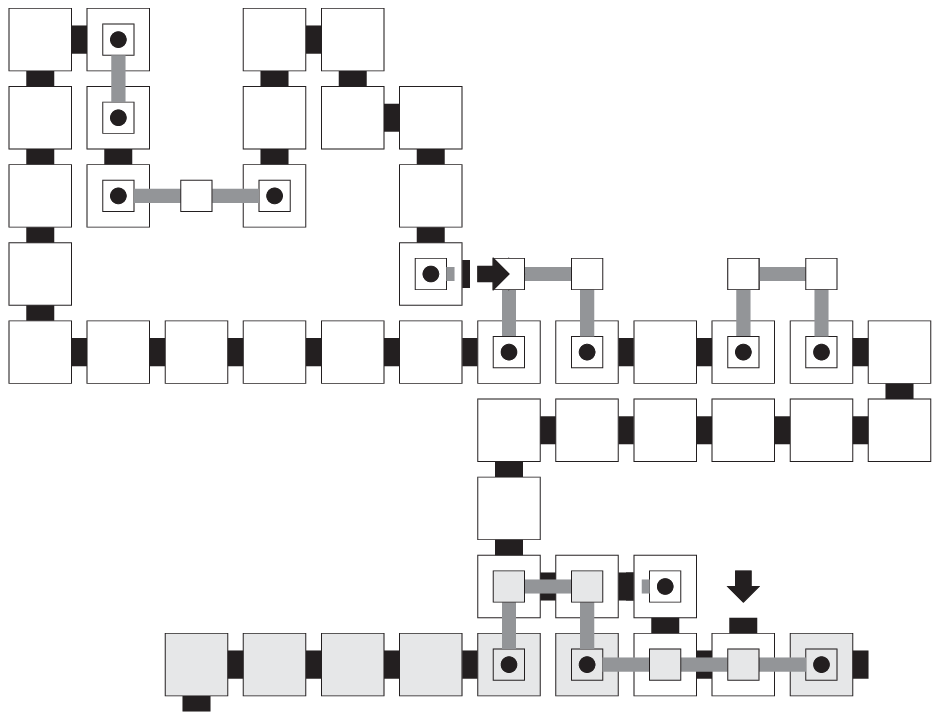}

(d) {\tt Pre\_leftmost\_odd\_1}
\end{minipage}}
\end{minipage}

\caption{The {\tt Pre\_leftmost} gadgets. These gadgets preprocess the leftmost input bit by copying its value as well as adding the bit indicating the presence of the tape head. They also add an additional tape cell to the left and it is assumed that, if $M$ is ever reading this write most tape cell, it is actually reading a special left marker symbol \#. These gadgets initiate the first left-to-right transition row. The deep dent in the upper leftmost portion of the gadget indicates the leftmost position of the tape. This dent also encodes the parity of the length of the input (see Figure~\ref{fig:RL_copy_leftmost_no_halt}, which contains gadgets that read and propagate this parity information). The input glues of these gadgets are not defined with a state of $M$. However, the output glue is $q_0$, i.e., the start state of the Turing machine being simulated. The output glue also encodes whether $\delta(q_0,w_0)$ is a left or right transition. Thus, the first row of left-to-right transition gadgets will assemble with the knowledge that $M$ is in state $q_0$ and which way $M$ is going to move. For example, if $\delta(q_0,w_0)$ is a left-moving transition, then the output glue is $\langle q_0, L\rangle$ (the arrow for the output glue overlaps with a tile in the $z=1$ plane). In general, the input and output glues for subsequent gadgets will encode, among other things, the state and direction of the next transition of $M$. The input glues for these gadgets bind to the output glues of {\tt Pre\_middle} gadgets.}
\label{fig:pre_leftmost}
\vspace{0pt}
\end{figure}

\clearpage
\subsection{Simulation (left-to-right)}
\begin{figure}[htp]
\centering
\includegraphics[width=\textwidth]{./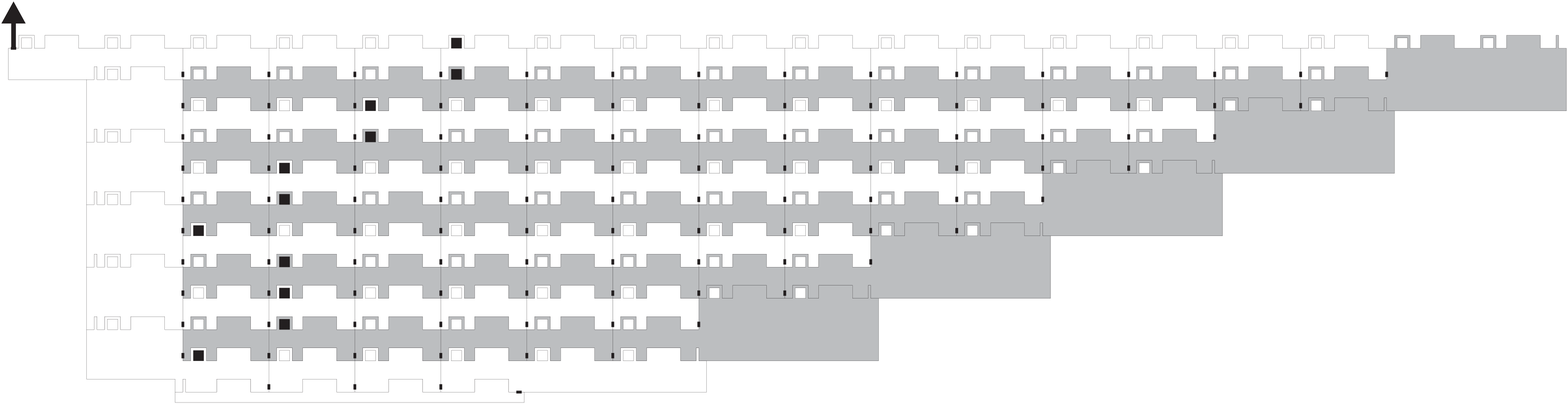}
\caption{The gadgets that assemble in a left-to-right fashion are indicated with dark grey.}
\label{fig:LR_overview}
\vspace{0pt}
\end{figure}
All the gadgets in this subsection have names that start with {\tt LR} because they all assemble in a left-to-right fashion. These gadgets are indicated in Figure~\ref{fig:LR_overview}.

\subsubsection{Move the tape head}
The gadgets in this subsection are shown in Figures~\ref{fig:LR_send_tape_head} and~\ref{fig:LR_receive_tape_head}.
\begin{figure}[htp]
\centering


\begin{minipage}[t]{\textwidth}
\centering

\fbox{
\begin{minipage}[t][55mm][b]{0.45\textwidth}
\centering
 \includegraphics[width=.55\textwidth]{./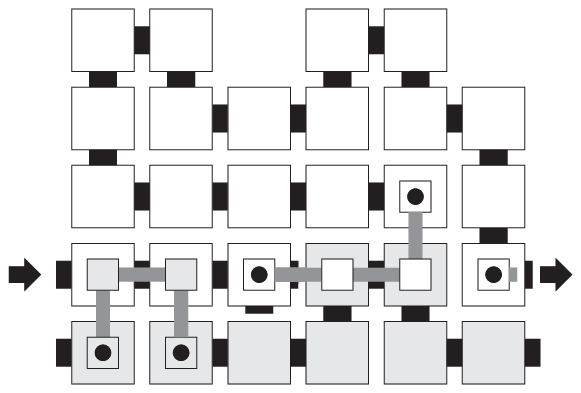}

(a) {\tt LR\_0\_0}: For each transition of the form $\delta(p,0) = (q,0,R)$, where $p \in Q - \{q_{halt}\}$, create one of these gadgets with input glue $\langle p, R\rangle$ and output glue $\langle q, R\rangle$.
\end{minipage}}
\fbox{
\begin{minipage}[t][55mm][b]{0.45\textwidth}
\centering
 \includegraphics[width=.55\textwidth]{./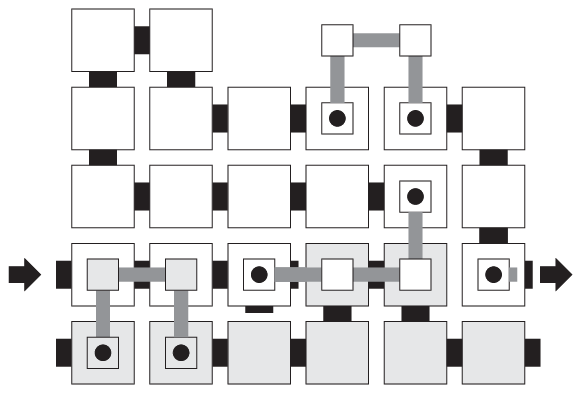}

(b) {\tt LR\_0\_1}: For each transition of the form $\delta(p,0) = (q,1,R)$, where $p \in Q - \{q_{halt}\}$, create one of these gadgets with input glue $\langle p, R\rangle$ and output glue $\langle q, R\rangle$.
\end{minipage}}
\end{minipage}

\bigskip


\begin{minipage}[t]{\textwidth}
\centering
\fbox{
\begin{minipage}[t][55mm][b]{0.45\textwidth}
\centering
 \includegraphics[width=.55\textwidth]{./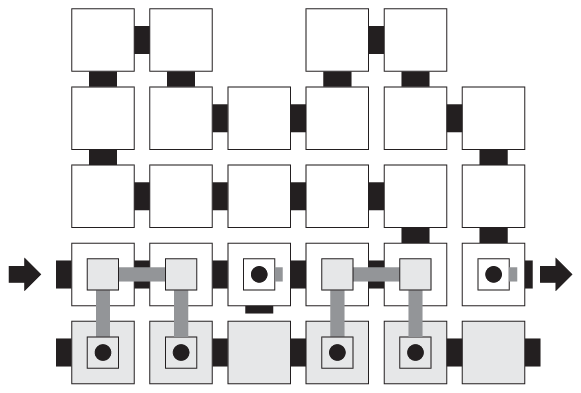}

(c) {\tt LR\_1\_0}: For each transition of the form $\delta(p,1) = (q,0,R)$, where $p \in Q - \{q_{halt}\}$, create one of these gadgets with input glue $\langle p, R\rangle $ and output glue $\langle q, R\rangle$.
\end{minipage}}
\fbox{
\begin{minipage}[t][55mm][b]{0.45\textwidth}
\centering
 \includegraphics[width=.55\textwidth]{./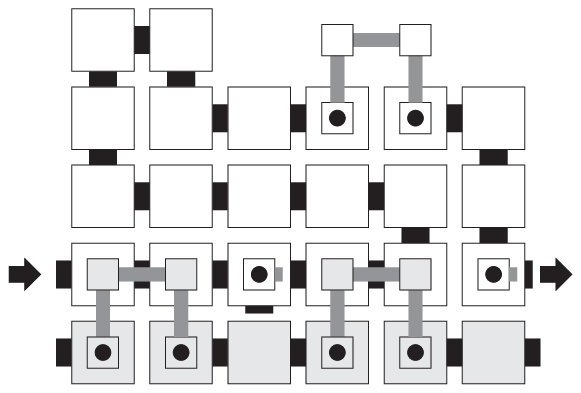}

(d) {\tt LR\_1\_1}: For each transition of the form $\delta(p,1) = (q,1,R)$, where $p \in Q - \{q_{halt}\}$, create one of these gadgets with input glue $\langle p, R\rangle $ and output glue $\langle q, R\rangle$.
\end{minipage}}
\end{minipage}

\caption{The {\tt LR\_to\_right} gadgets. These gadgets execute the first half of a right-moving transition, i.e., they ``send'' the tape head to the right. Their input glues bind to the output glues of {\tt LR\_copy} (see Figure~\ref{fig:LR_copy}), {\tt Pre\_leftmost} or {\tt RL\_copy\_leftmost\_no\_halt} (see Figure~\ref{fig:RL_copy_leftmost_no_halt}) gadgets.}
\label{fig:LR_send_tape_head}
\vspace{0pt}
\end{figure}

\begin{figure}[htp] 
\centering
\begin{minipage}[t]{\textwidth}
\centering
\fbox{
\begin{minipage}[t][65mm][b]{.45\textwidth}
\vspace{0pt}
\centering
 \includegraphics[width=.55\textwidth]{./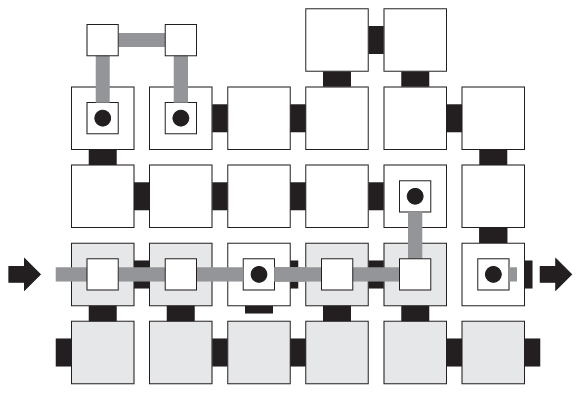}
\medskip
\begin{minipage}[t][][t]{\textwidth}
\centering\small
(a) {\tt LR\_from\_left\_0}: Receive the tape head from the left (now reading a 0). For each $q \in Q$, create one of these gadgets with input glue $\langle q,R\rangle$ and output glue $\langle q, R\rangle$ if $move(q,0) = R$ and $\langle q, L^*\rangle$ otherwise. We will use the notation $L^*$ to indicate that the next move is to the left, but the previous move was to the right (thus, the left move must wait until the next transition).
\end{minipage}
\end{minipage}}
\fbox{
\begin{minipage}[t][65mm][b]{0.45\textwidth}
\centering
 \includegraphics[width=.55\textwidth]{./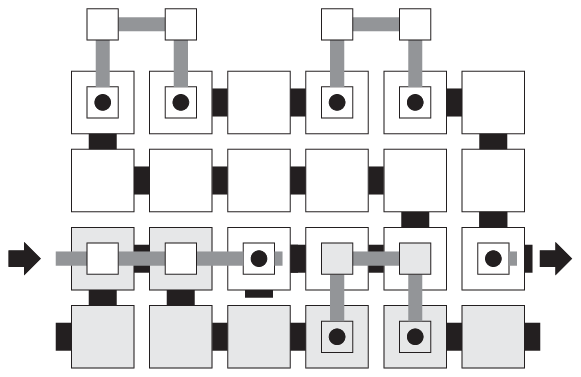}
\medskip
\begin{minipage}[t][][t]{\textwidth}
\centering\small
(b) {\tt LR\_from\_left\_1}: Receive the tape head from the left (now reading a 1). For each $q \in Q$, create one of these gadgets with input glue $\langle q,R\rangle$ and output glue $\langle q, R\rangle$ if $move(q,1) = R$ and $\langle q, L^*\rangle$ otherwise.
\end{minipage}
\end{minipage}}
\end{minipage}
\caption{The {\tt LR\_from\_left} gadgets. These gadgets execute the second half of a right-moving transition, i.e., they ``receive'' the tape head from the left. The input glues for these gadgets bind to the output glues of the {\tt LR\_to\_right} gadgets.}
\label{fig:LR_receive_tape_head}
\end{figure}

\subsubsection{Copy}
The gadgets in this subsection are shown in Figure~\ref{fig:LR_copy}.
\begin{figure}[htp]
\centering


\begin{minipage}[t]{\textwidth}
\centering

\fbox{
\begin{minipage}[t][55mm][b]{0.45\textwidth}
\centering
 \includegraphics[width=.55\textwidth]{./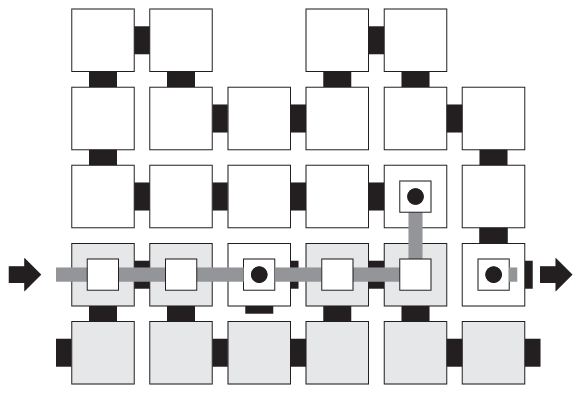}

(a) {\tt LR\_copy\_no\_state\_0}: Copy a 0 bit that is not being read by the tape head. For each $q \in Q$ and $d \in \{L,R,L^*\}$, create one of these gadgets with input glue $\langle q,d\rangle$ and output glue $\langle q,d\rangle$.
\end{minipage}}
\fbox{
\begin{minipage}[t][55mm][b]{0.45\textwidth}
\centering
 \includegraphics[width=.55\textwidth]{./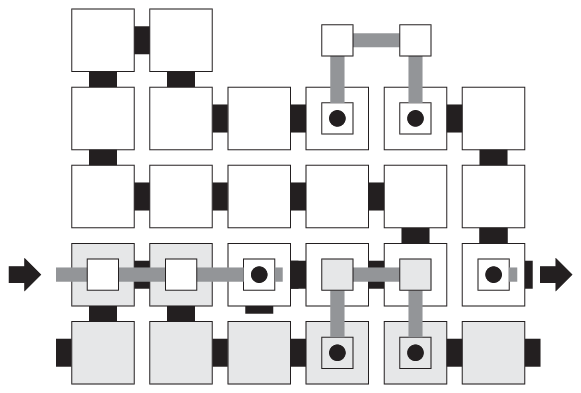}

(b) {\tt LR\_copy\_no\_state\_1}: Copy a 1 bit that is not being read by the tape head. For each $q \in Q$ and $d \in \{L,R,L^*\}$, create one of these gadgets with input glue $\langle q,d\rangle$ and output glue $\langle q,d\rangle$.
\end{minipage}}
\end{minipage}

\bigskip


\begin{minipage}[t]{\textwidth}
\centering
\fbox{
\begin{minipage}[t][55mm][b]{0.45\textwidth}
\centering
 \includegraphics[width=.55\textwidth]{./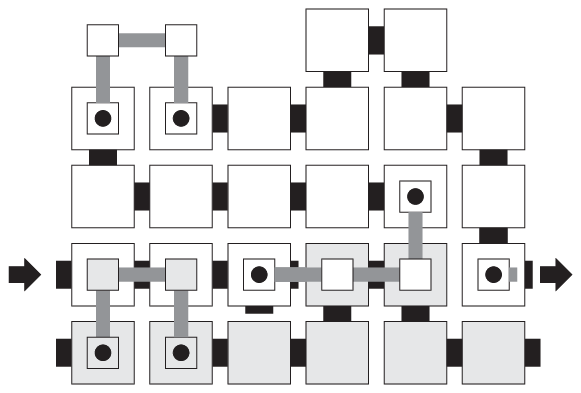}

(c) {\tt LR\_copy\_with\_state\_0}: Copy a 0 bit that is being read by the tape head. For each $q \in Q$, create one of these gadgets with input glue $\langle q,L\rangle$ and output glue $\langle q, L\rangle$.
\end{minipage}}
\fbox{
\begin{minipage}[t][55mm][b]{0.45\textwidth}
\centering
 \includegraphics[width=.55\textwidth]{./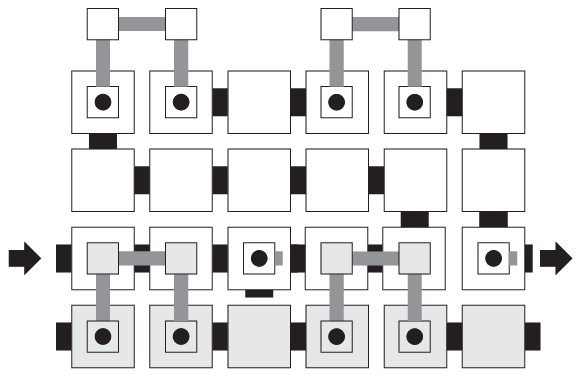}

(d) {\tt LR\_copy\_with\_state\_1}: Copy a 1 bit that is being read by the tape head. For each $q \in Q$, create one of these gadgets with input glue $\langle q,L\rangle$ and output glue $\langle q, L\rangle$.
\end{minipage}}
\end{minipage}

\caption{The {\tt LR\_copy} gadgets. These gadgets propagate the contents of a tape cell up to the next (right-to-left) transition row. The input glues of these gadgets bind to the output glues of either {\tt LR\_from\_left}, {\tt LR\_copy}, {\tt Pre\_leftmost} or {\tt RL\_copy\_leftmost} (see Figure~\ref{fig:RL_copy_leftmost_no_halt}) gadgets. }
\label{fig:LR_copy}
\vspace{0pt}
\end{figure}

\subsubsection{Grow the tape}
The gadget in this subsection is shown in Figure~\ref{fig:LR_grow_the_tape}.
\begin{figure}[htp]
\begin{minipage}{\textwidth}
\centering
\fbox{
 \includegraphics[width=0.65\textwidth]{./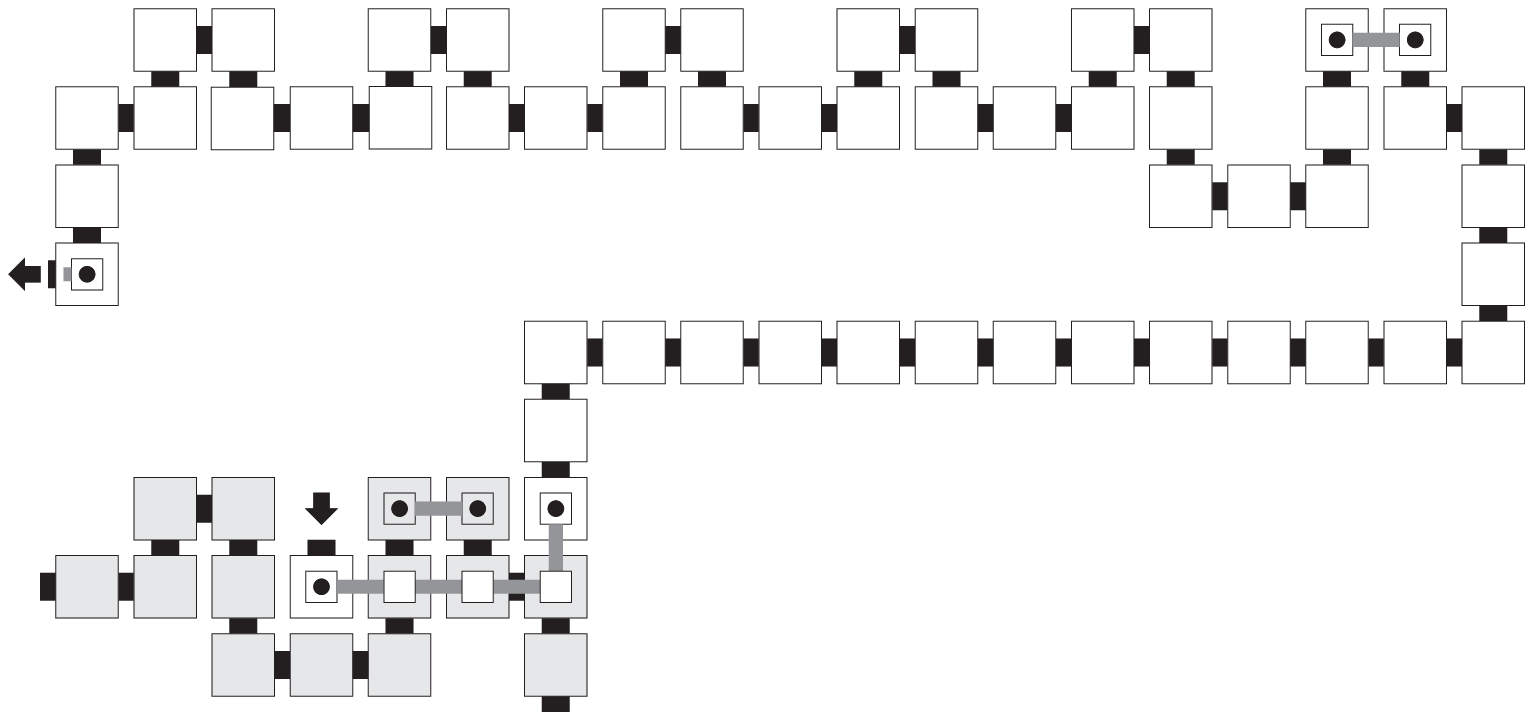}
}
\end{minipage}

\caption{The {\tt Grow\_tape} gadget. This gadget expands the tape to the right by two cells. Note that the deep dent indicates the position of the rightmost tape cell. Therefore, for each $q\in Q$ and $d \in \{L,R,L^*\}$, create one of these gadgets with input and output glues $\langle q, d\rangle$. The input glue of this gadget binds to the south-facing output glues of {\tt LR\_copy} gadgets (these glues face south from the tile that binds after the {\tt LR\_copy} gadget reads the tape head bit but before it reads the tape cell value bit.}
\label{fig:LR_grow_the_tape}
\vspace{0pt}
\end{figure}

\subsection{Simulation (right-to-left)}
\begin{figure}[htp]
\centering
\includegraphics[width=\textwidth]{./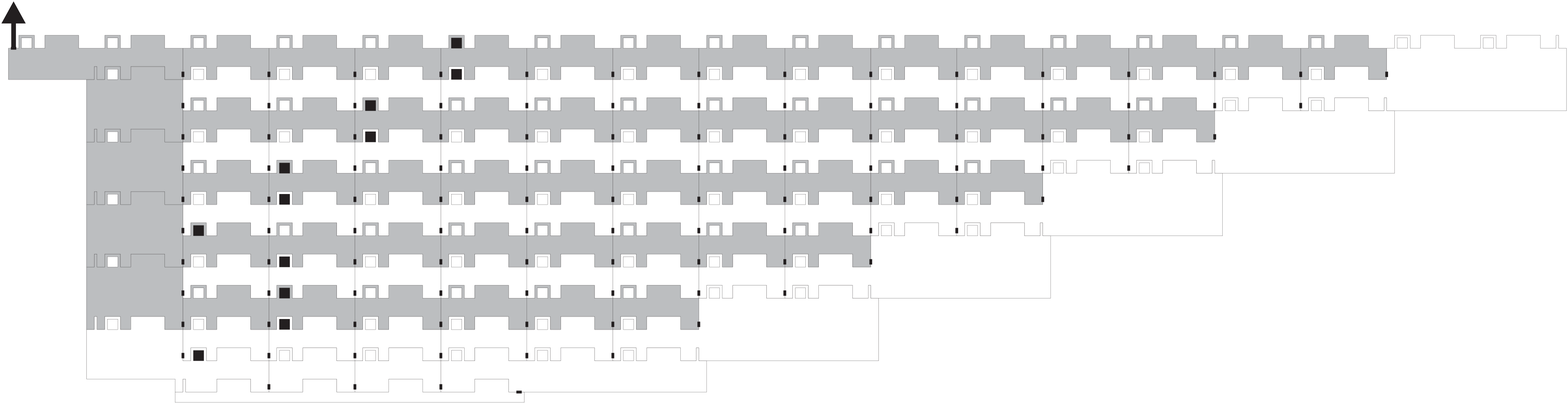}
\caption{The gadgets that assemble in a right-to-left fashion are indicated with dark grey.}
\label{fig:RL_overview}
\vspace{0pt}
\end{figure}
All the gadgets in this subsection have names that start with {\tt RL} because they all assemble in a right-to-left fashion. These gadgets are indicated in Figure~\ref{fig:RL_overview}.

\subsubsection{Guess a bit}
Since the tape head indicator bit is to the left of the tape cell value indicator bit in each gadget, we use generic bit-guessing gadgets to first guess the value of a tape cell, before determining whether the tape head is reading that tape cell.

The gadgets in this subsection are shown in Figure~\ref{fig:RL_guess}.

\begin{figure}[htp] 
\centering
\begin{minipage}[t]{\textwidth}
\centering
\fbox{
\begin{minipage}[t][30mm][b]{.45\textwidth}
\vspace{0pt}
\centering
 \includegraphics[width=.55\textwidth]{./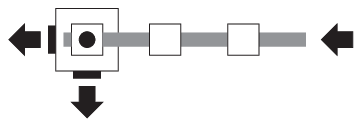}
\medskip
\begin{minipage}[t][][t]{\textwidth}
\centering\small
(a) {\tt RL\_guess\_0}: Guess a 0 bit.
\end{minipage}
\end{minipage}}
\fbox{
\begin{minipage}[t][30mm][b]{0.45\textwidth}
\centering
 \includegraphics[width=.55\textwidth]{./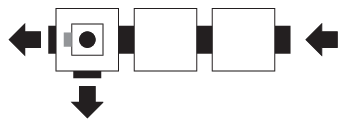}
\medskip
\begin{minipage}[t][][t]{\textwidth}
\centering\small
(b) {\tt RL\_guess\_1}: Guess a 1 bit.
\end{minipage}
\end{minipage}}
\end{minipage}
\caption{The {\tt RL\_guess} gadgets. These gadgets guess the binary value of a tape cell. For each $q\in Q$ and $d \in \{L,R,L^*\}$, create one of each of these gadgets with input glue $\langle q,d\rangle$ and output glue $\langle q,d\rangle$. The input glues for these gadgets bind to {\tt Grow\_tape}, {\tt RL\_copy} (see Figure~\ref{fig:RL_copy}) and {\tt RL\_from\_right} (see Figure~\ref{fig:RL_receive_tape_head}) gadgets.}
\label{fig:RL_guess}
\end{figure}

\subsubsection{Move the tape head}
The gadgets in this subsection are shown in Figures~\ref{fig:RL_send_tape_head} and~\ref{fig:RL_receive_tape_head}.
\begin{figure}[htp]
\centering


\begin{minipage}[t]{\textwidth}
\centering

\fbox{
\begin{minipage}[t][55mm][b]{0.45\textwidth}
\centering
 \includegraphics[width=.55\textwidth]{./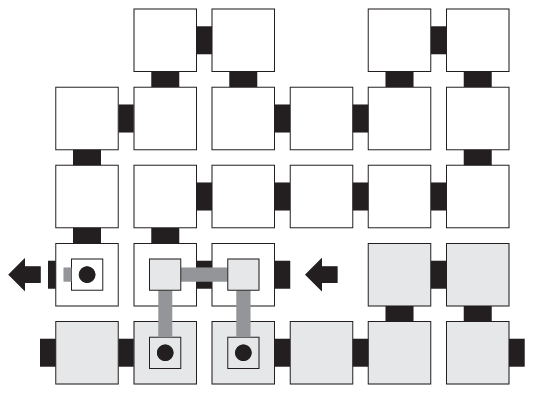}

(a) {\tt RL\_0\_0}: For each transition of the form $\delta(p,0) = (q,0,L)$, where $p \in Q - \{q_{halt}\}$, create one of these gadgets with input glue $\langle p,L\rangle$ and output glue $\langle q,L\rangle$.
\end{minipage}}
\fbox{
\begin{minipage}[t][55mm][b]{0.45\textwidth}
\centering
 \includegraphics[width=.55\textwidth]{./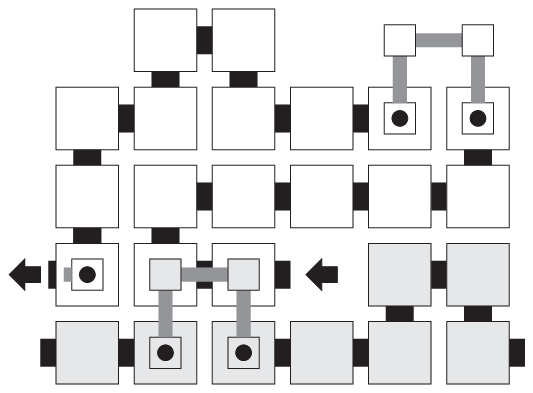}

(b) {\tt RL\_0\_1}: For each transition of the form $\delta(p,0) = (q,1,L)$, $p \in Q - \{q_{halt}\}$, create one of these gadgets with input glue $\langle p,L\rangle$ and output glue $\langle q,L\rangle$.
\end{minipage}}
\end{minipage}

\bigskip


\begin{minipage}[t]{\textwidth}
\centering
\fbox{
\begin{minipage}[t][55mm][b]{0.45\textwidth}
\centering
 \includegraphics[width=.55\textwidth]{./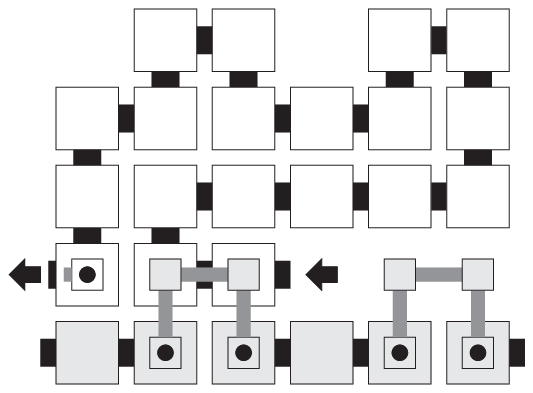}

(c) {\tt RL\_1\_0}: For each transition of the form $\delta(p,1) = (q,0,L)$, $p \in Q - \{q_{halt}\}$, create one of these gadgets with input glue $\langle p,L\rangle$ and output glue $\langle q,L\rangle$.
\end{minipage}}
\fbox{
\begin{minipage}[t][55mm][b]{0.45\textwidth}
\centering
 \includegraphics[width=.55\textwidth]{./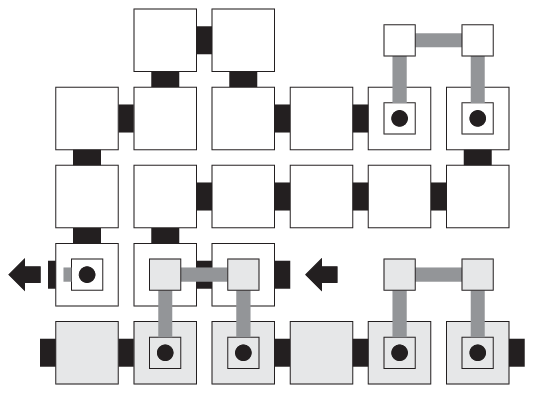}

(d) {\tt RL\_1\_1}: For each transition of the form $\delta(p,1) = (q,1,L)$, $p \in Q - \{q_{halt}\}$, create one of these gadgets with input glue $\langle p,L\rangle$ and output glue $\langle q,L\rangle$.
\end{minipage}}
\end{minipage}

\caption{The {\tt RL\_to\_left} gadgets. These gadgets execute the first half of a left-moving transition, i.e., they ``send'' the tape head to the left. The input glues of these gadgets bind to the output glues of {\tt RL\_guess} gadgets.}
\label{fig:RL_send_tape_head}
\vspace{0pt}
\end{figure}

\begin{figure}[htp] 
\centering
\begin{minipage}[t]{\textwidth}
\centering
\fbox{
\begin{minipage}[t][50mm][b]{.45\textwidth}
\vspace{0pt}
\centering
 \includegraphics[width=.55\textwidth]{./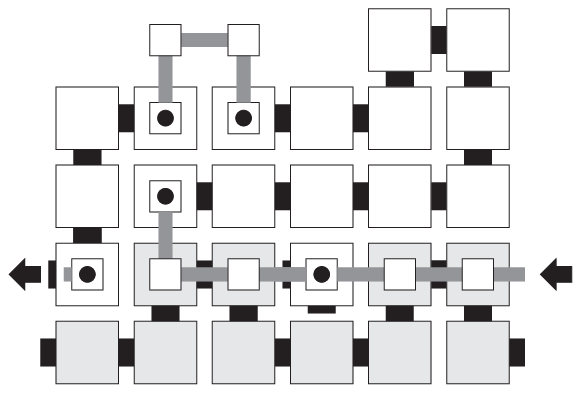}
\medskip
\begin{minipage}[t][][t]{\textwidth}
\centering\small
(a) {\tt RL\_from\_right\_0}: Receive the tape head from the right (now reading a 0). For each $q\in Q$, create one of these gadgets with input glue $\langle q,L\rangle$ and output glue $\langle q,move(q,0)\rangle$.
\end{minipage}
\end{minipage}}
\fbox{
\begin{minipage}[t][50mm][b]{0.45\textwidth}
\centering
 \includegraphics[width=.55\textwidth]{./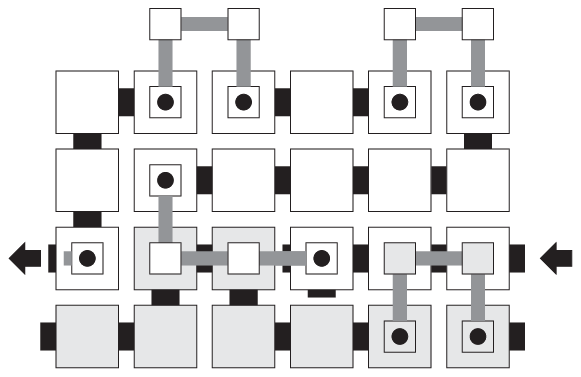}
\medskip
\begin{minipage}[t][][t]{\textwidth}
\centering\small
(b) {\tt RL\_from\_right\_1}: Receive the tape head from the right (now reading a 1). For each $q\in Q$, create one of these gadgets with input glue $\langle q,L\rangle$ and output glue $\langle q,move(q,1)\rangle$.
\end{minipage}
\end{minipage}}
\end{minipage}
\caption{The {\tt RL\_from\_right} gadgets. These gadgets execute the second half of a left-moving transition, i.e., they ``receive'' the tape head from the right. The input glues for these gadgets bind to the output glue of an {\tt RL\_to\_left} gadget.}
\label{fig:RL_receive_tape_head}
\end{figure}

\subsubsection{Copy}
The gadgets in this subsection are shown in Figures~\ref{fig:RL_copy},~\ref{fig:RL_copy_leftmost_no_halt} and~\ref{fig:RL_copy_leftmost_halt}.
\begin{figure}[htp]
\centering


\begin{minipage}[t]{\textwidth}
\centering

\fbox{
\begin{minipage}[t][55mm][b]{0.45\textwidth}
\centering
 \includegraphics[width=.55\textwidth]{./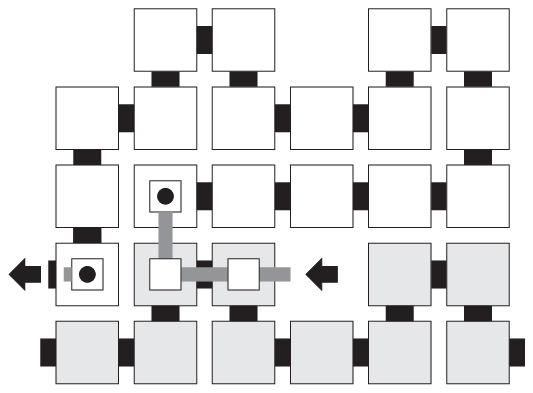}

(a) {\tt RL\_copy\_no\_state\_0}: Copy a 0 bit that is not being read by the tape head. For each $q \in Q$ and $d \in \{L,R,L^*\}$, create one of these gadgets with input glue $\langle q,d\rangle$ and output glue $\langle q,d\rangle$.
\end{minipage}}
\fbox{
\begin{minipage}[t][55mm][b]{0.45\textwidth}
\centering
 \includegraphics[width=.55\textwidth]{./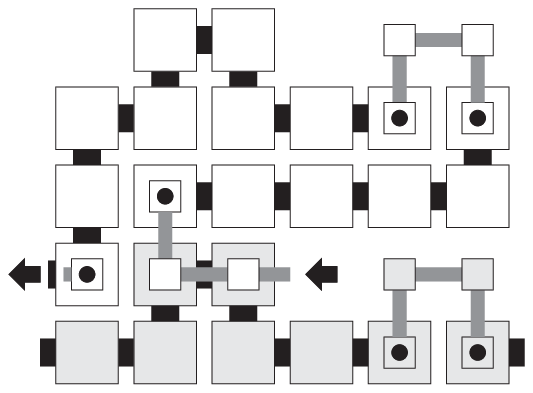}

(b) {\tt RL\_copy\_no\_state\_1}: Copy a 1 bit that is not being read by the tape head. For each $q \in Q$ and $d \in \{L,R,L^*\}$, create one of these gadgets with input glue $\langle q,d\rangle$ and output glue $\langle q,d\rangle$.
\end{minipage}}
\end{minipage}

\bigskip


\begin{minipage}[t]{\textwidth}
\centering
\fbox{
\begin{minipage}[t][55mm][b]{0.45\textwidth}
\centering
 \includegraphics[width=.55\textwidth]{./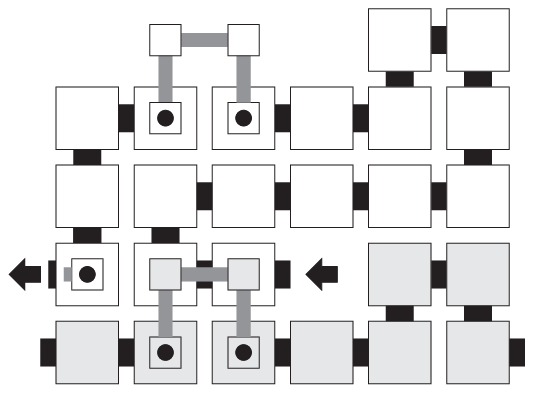}

(c) {\tt RL\_copy\_with\_state\_0}: Copy a 0 bit that is being read by the tape head. For each $q \in Q$ and $d \in \{R,L^*\}$, create one of these gadgets with input glue $\langle q,d\rangle$ and output glue $\langle q,d\rangle$.
\end{minipage}}
\fbox{
\begin{minipage}[t][55mm][b]{0.45\textwidth}
\centering
 \includegraphics[width=.55\textwidth]{./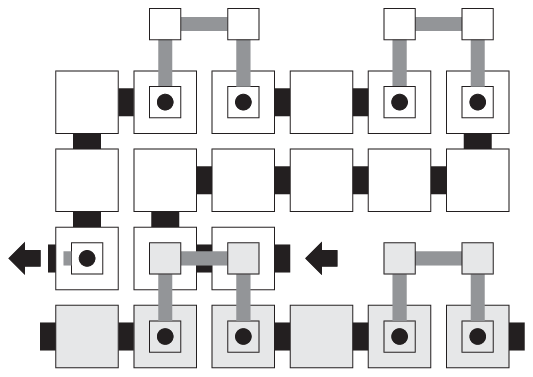}

(d) {\tt RL\_copy\_with\_state\_1}: Copy a 1 bit that is being read by the tape head. For each $q \in Q$ and $d \in \{R,L^*\}$, create one of these gadgets with input glue $\langle q,d\rangle$ and output glue $\langle q,d\rangle$.
\end{minipage}}
\end{minipage}

\caption{The {\tt RL\_copy} gadgets. These gadgets propagate the contents of a tape cell up to the next (left-to-right) transition row. The input glues of these gadgets bind to the output glues of {\tt RL\_guess} gadgets. }
\label{fig:RL_copy}
\vspace{0pt}
\end{figure}

\begin{figure}[htp] 
\centering
\begin{minipage}[t]{\textwidth}
\centering
\fbox{
\begin{minipage}[t][95mm][b]{.45\textwidth}
\vspace{0pt}
\centering
 \includegraphics[width=.55\textwidth]{./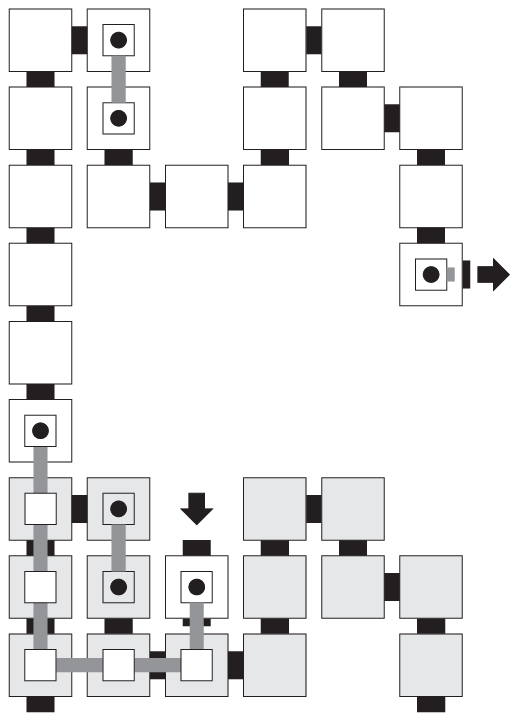}
\medskip
\begin{minipage}[t][][t]{\textwidth}
\centering\small
(a) {\tt RL\_copy\_leftmost\_even\_no\_halt}: The leftmost tape cell. Do not grow the tape to the left but propagate the parity of the length of the input by blocking the deep dent to the south in the $z=0$ plane.

\end{minipage}
\end{minipage}}
\fbox{
\begin{minipage}[t][95mm][b]{0.45\textwidth}
\centering
 \includegraphics[width=.55\textwidth]{./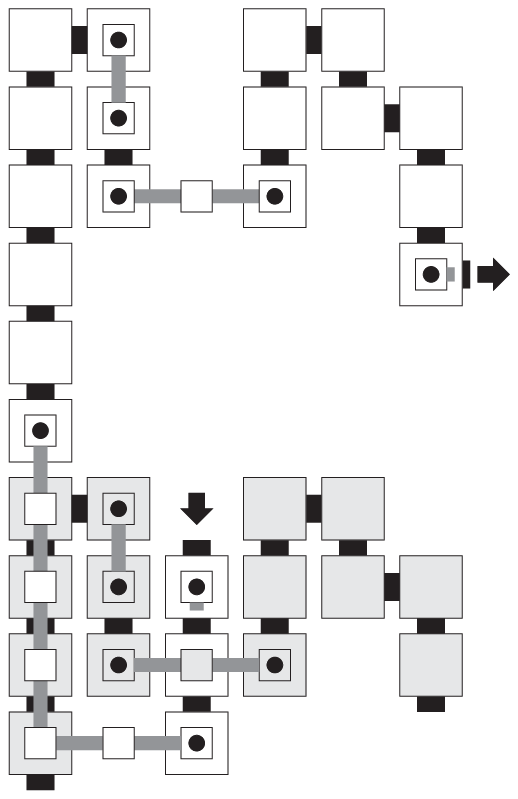}
\medskip
\begin{minipage}[t][][t]{\textwidth}
\centering\small
(b) {\tt RL\_copy\_leftmost\_odd\_no\_halt}: The leftmost tape cell. Do not grow the tape to the left but propagate the parity of the length of the input by blocking the deep dent to the south in the $z=1$ plane.
\end{minipage}
\end{minipage}}
\end{minipage}
\caption{The {\tt RL\_leftmost\_no\_halt} gadgets. These gadgets detect the leftmost tape cell. They geometrically propagate the parity of the length of the input and then initiate the next left-to-right row. Create one of each of these gadgets such that its input glue is $\langle q,L\rangle$, where $q \in Q - \{q_{halt}\}$, its output glue is $\left\langle \delta(q,\#),move(\delta(q,\#),\#)\right\rangle$, the former binds to the output glue of a {\tt RL\_to\_left} gadget. Also, create one of each of these gadgets such that, for any $q \in Q - \{q_{halt}\}$ and $d\in\{L,R,L^*\}$, its input glue is $\langle q,d \rangle$ and its output glue is $\langle q,d\rangle$ if $d \in \{L,R\}$ and $\langle q,L\rangle$ if $d = L^*$. Its input glue binds to the output glue of a {\tt RL\_copy} or {\tt RL\_from\_right} gadget.}
\label{fig:RL_copy_leftmost_no_halt}
\end{figure}

\begin{figure}[htp] 
\centering
\begin{minipage}[t]{\textwidth}
\centering
\fbox{
\begin{minipage}[t][80mm][b]{.475\textwidth}
\vspace{0pt}
\centering
 \includegraphics[width=.99\textwidth]{./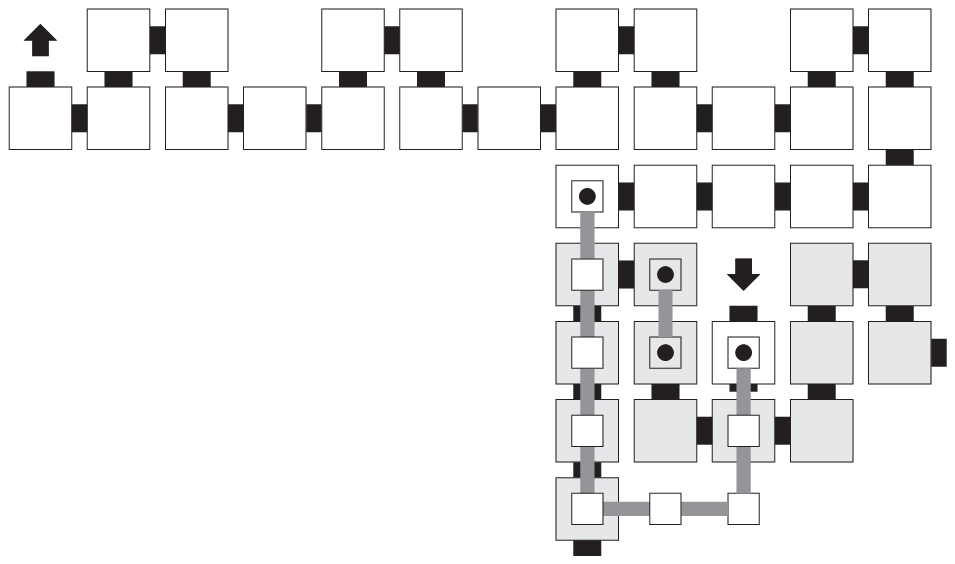}
\medskip
\begin{minipage}[t][][t]{\textwidth}
\centering\small
(a) {\tt RL\_copy\_leftmost\_even\_halt}: The leftmost tape cell and in a halting state (the input state is $q_{halt}$). The length of the input was even, so add one additional tape cell to the left and terminate the transition phase of the simulation (recall that the {\tt Pre\_leftmost} gadgets always added one additional tape cell to the left).
\end{minipage}
\end{minipage}}
\fbox{
\begin{minipage}[t][80mm][b]{0.425\textwidth}
\centering
 \includegraphics[width=.55\textwidth]{./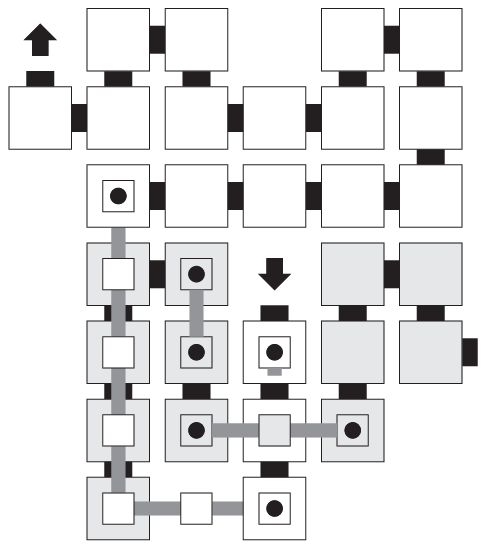}
\medskip
\begin{minipage}[t][][t]{\textwidth}
\centering\small
(b) {\tt RL\_copy\_leftmost\_odd\_halt}: The leftmost tape cell and in a halting state (the input state is $q_{halt}$). The length of the input was odd, so do not add an additional tape cell to the left and terminate the transition phase of the simulation.
\end{minipage}
\end{minipage}}
\end{minipage}
\caption{The {\tt RL\_leftmost\_halt} gadgets. These gadgets detect the leftmost tape cell and the input glue is $\langle q_{halt}, d\rangle$, where $d \in \{L,R\}$. They add either one(zero) additional tape cell(s) to the left if the length of the input is even(odd), respectively. The input glues of these gadgets bind to the output glue of an {\tt RL\_guess} gadget.}
\label{fig:RL_copy_leftmost_halt}
\end{figure}

\clearpage
\subsection{Post-processing}
In the post-processing stage, all of the bits of the gadgets, starting from the leftmost tape cell, to the tape cell that contains the tape head (including the bit contained in this gadget), are shifted to the right most position on the tape cell. The basic idea is that bits are shifted to the right in a zig-zag fashion, similar to how the diagonal signal is shifted to the right in the growth blocks (see Section~\ref{sec:appendix}). Bits are shifted in left-to-right-assembling \emph{shift rows}, where as bits are copied in right-to-left-assembling \emph{copy rows}. Although many of the gadgets in this subsection are similar to those defined the previous subsections, we list them here for the sake of completeness.

\begin{figure}[htp]
\centering
\includegraphics[width=\textwidth]{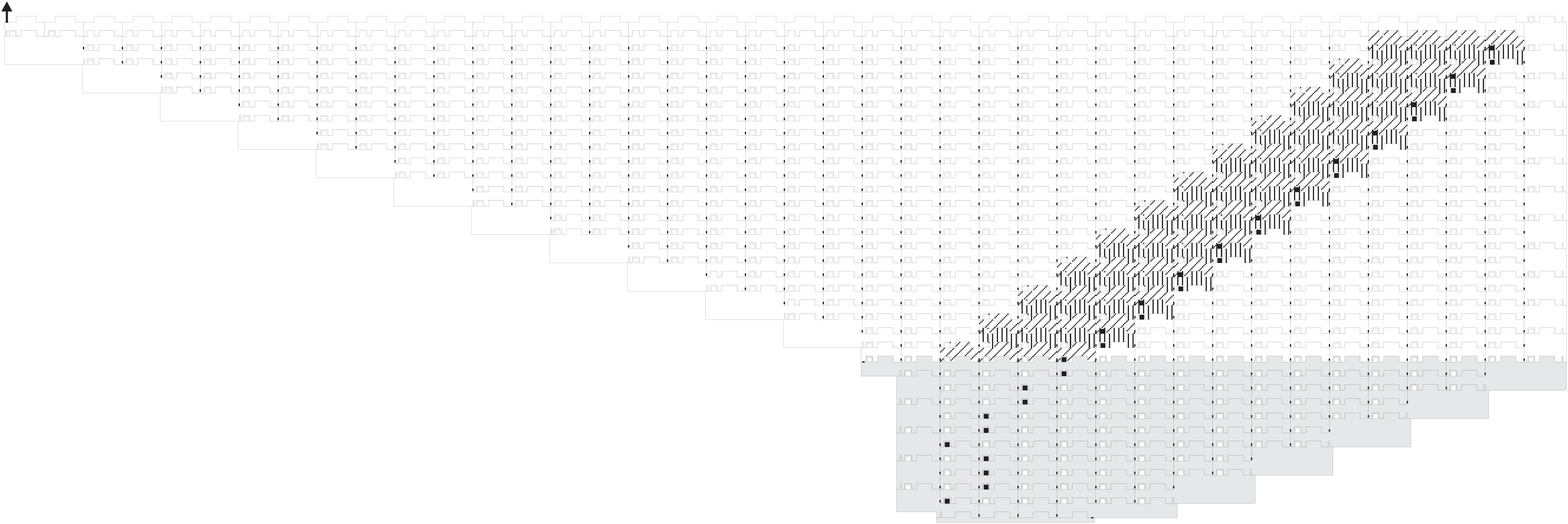}
\caption{Overview of the Turing machine simulation with post-processing stage shown. }
\label{fig:overview_post_process}
\vspace{0pt}
\end{figure}

\begin{figure}[htp]
\begin{minipage}{\textwidth}
\centering
\fbox{
 \includegraphics[width=0.45\textwidth]{./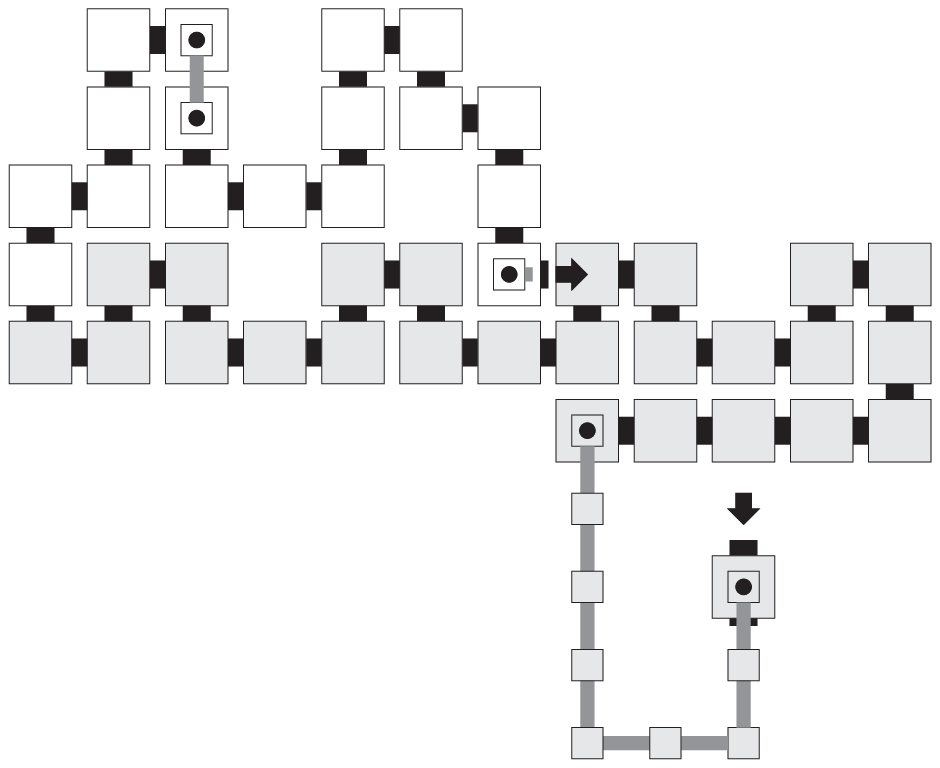}
}
\end{minipage}

\caption{{\tt Start}: The first gadget in the post-processing phase. The grey tiles are from a {\tt RL\_copy\_leftmost\_halt} gadget. The deep indentation toward the left side of the gadget indicates that this is the leftmost tape cell. The input, preprocessing and transition phases are shown in grey. The rest is the post-processing phase.}
\label{fig:Post_process_start}
\vspace{0pt}
\end{figure}

\begin{figure}[htp]
\centering


\begin{minipage}[t]{\textwidth}
\centering

\fbox{
\begin{minipage}[t][45mm][b]{0.45\textwidth}
\centering
 \includegraphics[width=.6\textwidth]{./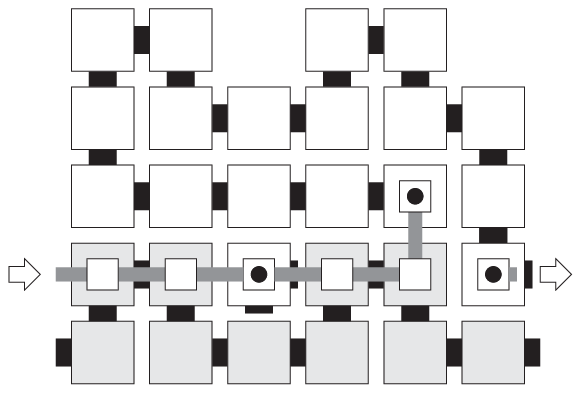}

(a) {\tt LR\_no\_state\_0\_0}: Accept a 0 bit from the left and shift a 0 bit to the right.
\end{minipage}}
\fbox{
\begin{minipage}[t][45mm][b]{0.45\textwidth}
\centering
 \includegraphics[width=.6\textwidth]{./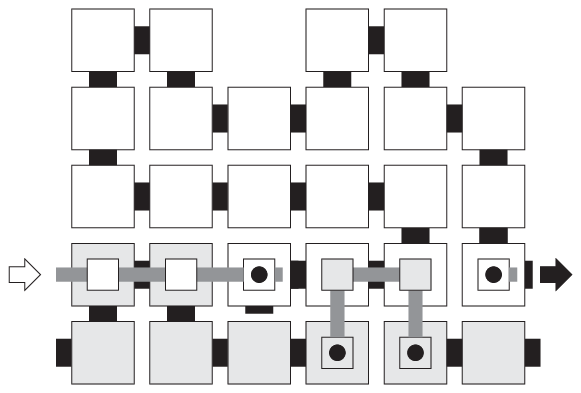}

(b) {\tt LR\_no\_state\_0\_1}: Accept a 0 bit from the left and shift a 1 bit to the right.
\end{minipage}}
\end{minipage}

\bigskip


\begin{minipage}[t]{\textwidth}
\centering
\fbox{
\begin{minipage}[t][45mm][b]{0.45\textwidth}
\centering
 \includegraphics[width=.6\textwidth]{./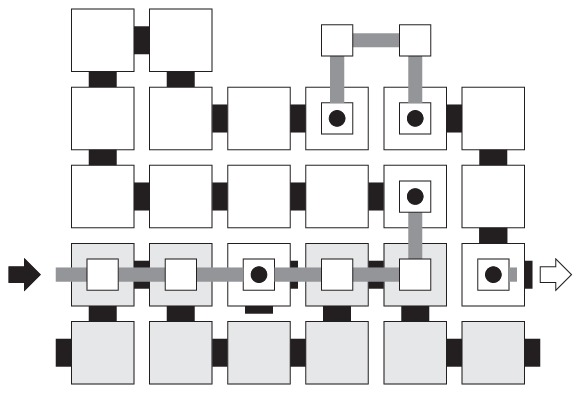}

(c) {\tt LR\_no\_state\_1\_0}: Accept a 1 bit from the left and shift a 0 bit to the right.
\end{minipage}}
\fbox{
\begin{minipage}[t][45mm][b]{0.45\textwidth}
\centering
 \includegraphics[width=.6\textwidth]{./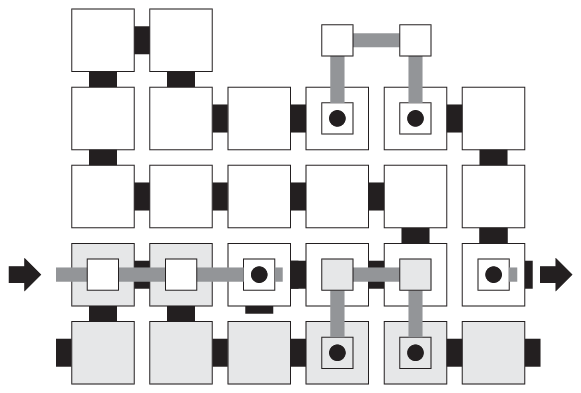}

(d) {\tt LR\_no\_state\_1\_1}: Accept a 1 bit from the left and shift a 1 bit to the right.
\end{minipage}}
\end{minipage}

\caption{The {\tt LR\_no\_state} gadgets. These gadgets accept a bit from the left and shift the bit currently being read to the right. The bit from the left is encoded geometrically along the top of the gadget and the geometrically-encoded bit being read is shifted to the right via the output glue. Here, a black arrow represents a glue that encodes a $1$ bit, while a white arrow represents a glue that encodes a $0$ bit. These gadgets are used to shift bits that are not being read by the tape head. One of these gadgets will initially bind to the {\tt Start} gadget. Then, these gadgets will bind to either other copies of {\tt LR\_no\_state} gadgets or a {\tt LR\_receive\_state} gadget. }
\label{fig:LR_no_state}
\vspace{0pt}
\end{figure}

\begin{figure}[htp]
\centering


\begin{minipage}[t]{\textwidth}
\centering

\fbox{
\begin{minipage}[t][45mm][b]{0.45\textwidth}
\centering
 \includegraphics[width=.63  \textwidth]{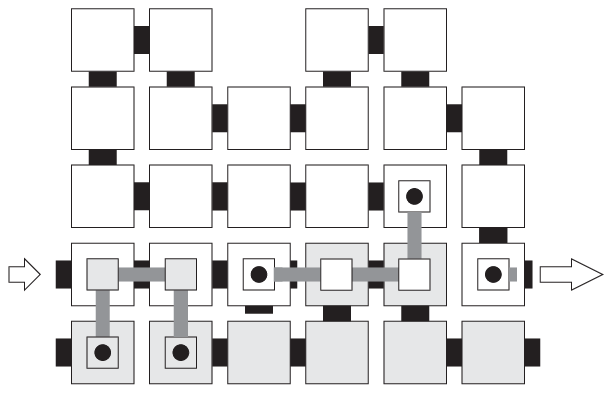}

(a) {\tt LR\_send\_state\_0\_0}: Accept a 0 bit from the left and shift a 0 bit, which is currently read by the tape head, to the right.
\end{minipage}}
\fbox{
\begin{minipage}[t][45mm][b]{0.45\textwidth}
\centering
 \includegraphics[width=.63\textwidth]{./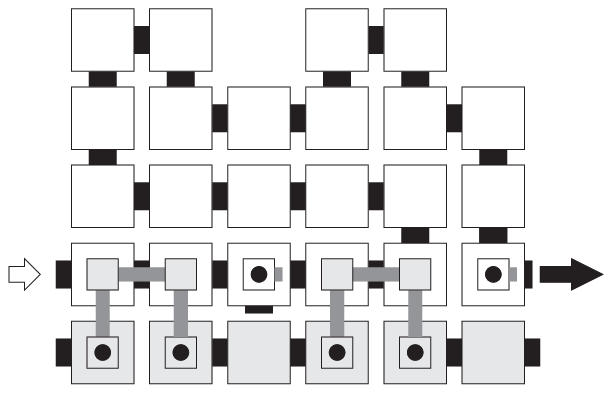}

(b) {\tt LR\_send\_state\_0\_1}: Accept a 0 bit from the left and shift a 1 bit, which is currently read by the tape head, to the right.
\end{minipage}}
\end{minipage}

\bigskip


\begin{minipage}[t]{\textwidth}
\centering
\fbox{
\begin{minipage}[t][45mm][b]{0.45\textwidth}
\centering
 \includegraphics[width=.63\textwidth]{./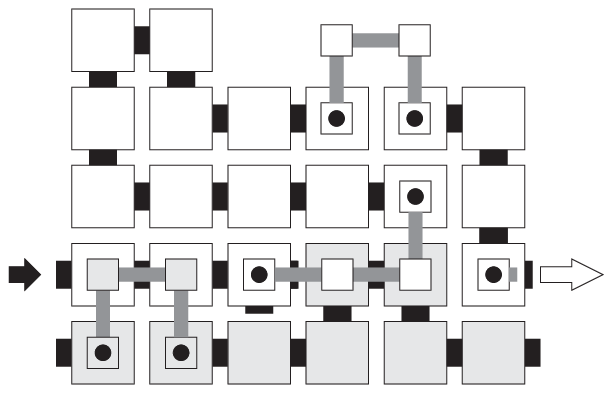}

(c) {\tt LR\_send\_state\_1\_0}: Accept a 1 bit from the left and shift a 0 bit, which is currently read by the tape head, to the right.
\end{minipage}}
\fbox{
\begin{minipage}[t][45mm][b]{0.45\textwidth}
\centering
 \includegraphics[width=.63\textwidth]{./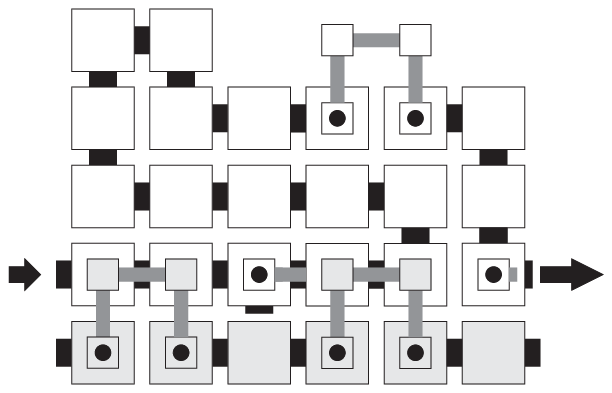}

(d) {\tt LR\_send\_state\_1\_1}: Accept a 1 bit from the left and shift a 1 bit, which is currently read by the tape head, to the right.
\end{minipage}}
\end{minipage}

\caption{The {\tt LR\_send\_state} gadgets. These gadgets work exactly the same as the {\tt LR\_no\_state} gadgets, but they also shift the tape head to the right. Here, a long arrow represents a glue that encodes both a bit value as well as the presence of the tape head. These gadgets will bind to an {\tt LR\_no\_state} gadget.  }
\label{fig:LR_send_state}
\vspace{0pt}
\end{figure}

\begin{figure}[htp]
\centering


\begin{minipage}[t]{\textwidth}
\centering

\fbox{
\begin{minipage}[t][45mm][b]{0.45\textwidth}
\centering
 \includegraphics[width=.63\textwidth]{./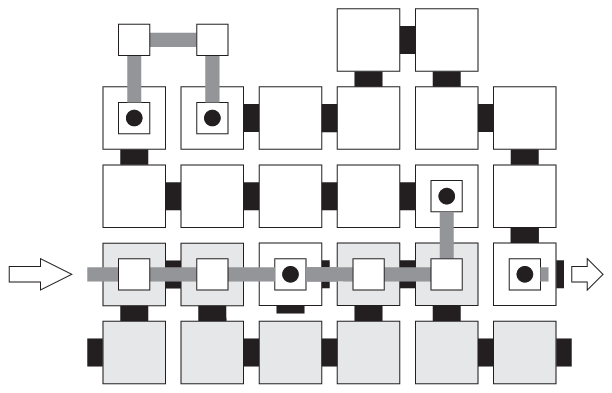}

(a) {\tt LR\_receive\_state\_0\_0}: Accept a 0 bit, which is currently being read by the tape head, from the left and shift a 0 bit to the right.
\end{minipage}}
\fbox{
\begin{minipage}[t][45mm][b]{0.45\textwidth}
\centering
 \includegraphics[width=.63\textwidth]{./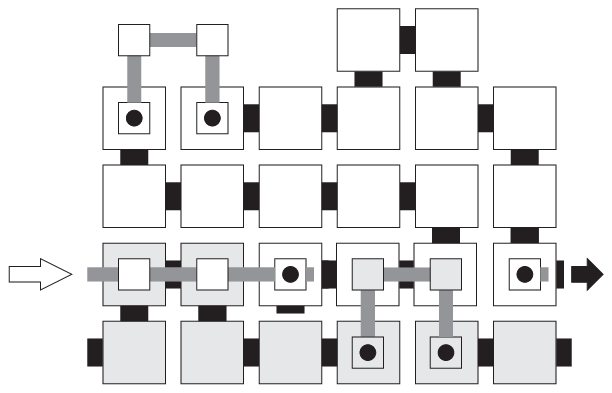}

(b) {\tt LR\_receive\_state\_0\_1}: Accept a 0 bit, which is currently being read by the tape head, from the left and shift a 1 bit to the right.
\end{minipage}}
\end{minipage}

\bigskip


\begin{minipage}[t]{\textwidth}
\centering
\fbox{
\begin{minipage}[t][45mm][b]{0.45\textwidth}
\centering
 \includegraphics[width=.63\textwidth]{./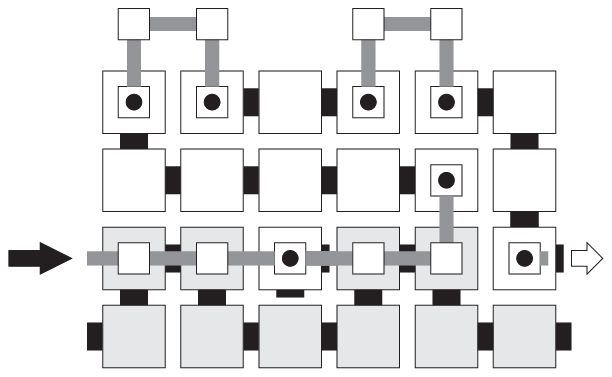}

(c) {\tt LR\_receive\_state\_1\_0}: Accept a 1 bit, which is currently being read by the tape head, from the left and shift a 0 bit to the right.
\end{minipage}}
\fbox{
\begin{minipage}[t][45mm][b]{0.45\textwidth}
\centering
 \includegraphics[width=.63\textwidth]{./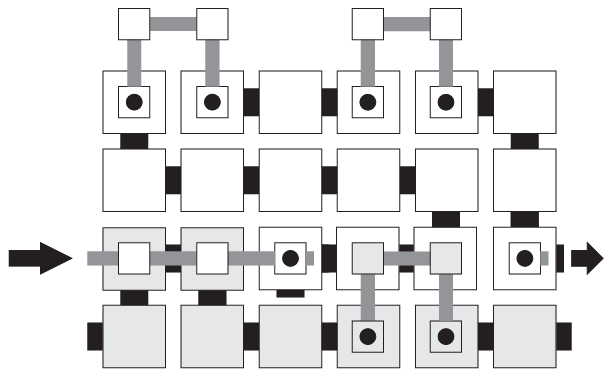}

(d) {\tt LR\_receive\_state\_1\_1}: Accept a 1 bit, which is currently being read by the tape head, from the left and shift a 1 bit to the right.
\end{minipage}}
\end{minipage}

\caption{The {\tt LR\_receive\_state} gadgets. These gadgets work exactly the same as the {\tt LR\_no\_state} and {\tt LR\_receive\_state} gadgets, but they receive the tape head from the left. Here, as is the case in Figure~\ref{fig:LR_send_state}, a long arrow represents a glue that encodes both a bit value as well as the presence of the tape head. These gadgets will bind to an {\tt LR\_send\_state} gadget.  }
\label{fig:LR_receive_state}
\vspace{0pt}
\end{figure}

\begin{figure}[htp]
\begin{minipage}{\textwidth}
\centering
\fbox{
 \includegraphics[width=0.27\textwidth]{./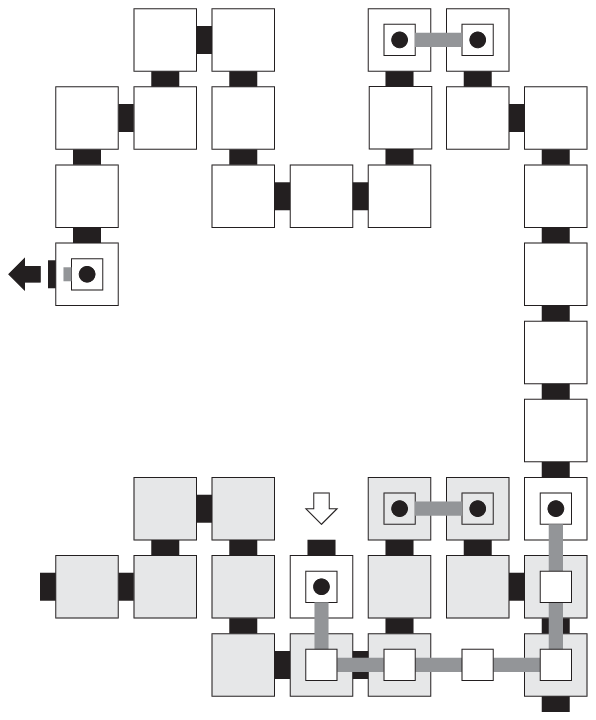}
}
\end{minipage}

\caption{{\tt LR\_rightmost\_no\_state}: terminate a left-to-right-assembling shift row and initiate a right-to-left-assembling copy row. Note that this gadget cannot receive a $1$ bit, thus, the input arrow is white. The fact that the output arrow is black serves no purpose. This gadget will bind to an {\tt LR\_no\_state} gadget.}
\label{fig:Post_process_lr_rightmost_no_state}
\vspace{0pt}
\end{figure}

\begin{figure}[htp] 
\centering
\begin{minipage}[t]{\textwidth}
\centering
\fbox{
\begin{minipage}[t][25mm][b]{.45\textwidth}
\vspace{0pt}
\centering
 \includegraphics[width=.55\textwidth]{./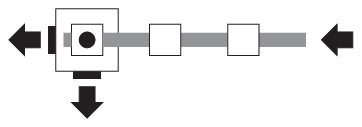}
\medskip
\begin{minipage}[t][][t]{\textwidth}
\centering\small
(a) {\tt RL\_guess\_0}: Guess a 0 bit.
\end{minipage}
\end{minipage}}
\fbox{
\begin{minipage}[t][25mm][b]{0.45\textwidth}
\centering
 \includegraphics[width=.55\textwidth]{./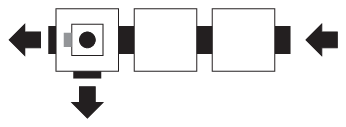}
\medskip
\begin{minipage}[t][][t]{\textwidth}
\centering\small
(b) {\tt RL\_guess\_1}: Guess a 1 bit.
\end{minipage}
\end{minipage}}
\end{minipage}
\caption{The {\tt RL\_guess} gadgets. These gadgets guess the binary value of a tape cell in a copy row of gadgets, which assemble from right-to-left. They work exactly the same as the {\tt RL\_guess} gadgets from Figure~\ref{fig:RL_guess}. In this figure, the color of the arrows serves no purpose. These gadgets will bind to either a {\tt RL\_copy} or {\tt LR\_rightmost\_no\_state} gadget.}
\label{fig:RL_post_process_guess}
\end{figure}

\begin{figure}[htp]
\centering


\begin{minipage}[t]{\textwidth}
\centering

\fbox{
\begin{minipage}[t][45mm][b]{0.45\textwidth}
\centering
 \includegraphics[width=.55\textwidth]{./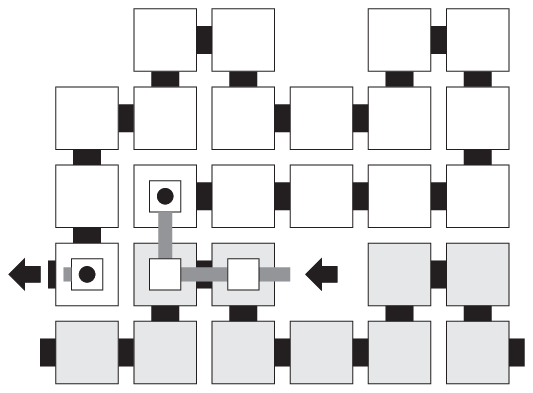}

(a) {\tt RL\_copy\_no\_state\_0}: Copy a 0 bit that is not currently being read by the tape head.
\end{minipage}}
\fbox{
\begin{minipage}[t][45mm][b]{0.45\textwidth}
\centering
 \includegraphics[width=.55\textwidth]{./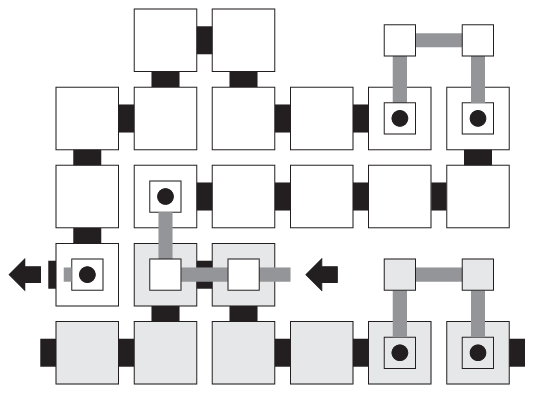}

(b) {\tt RL\_copy\_no\_state\_1}: Copy a 1 bit that is not currently being read by the tape head.
\end{minipage}}
\end{minipage}

\bigskip


\begin{minipage}[t]{\textwidth}
\centering
\fbox{
\begin{minipage}[t][45mm][b]{0.45\textwidth}
\centering
 \includegraphics[width=.55\textwidth]{./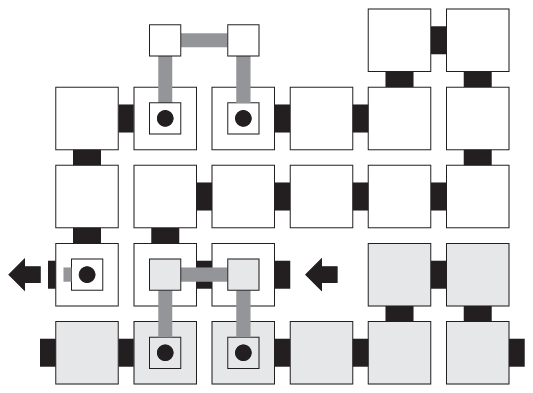}

(c) {\tt RL\_copy\_state\_0}: Copy a 0 bit that is currently being read by the tape head.
\end{minipage}}
\fbox{
\begin{minipage}[t][45mm][b]{0.45\textwidth}
\centering
 \includegraphics[width=.55\textwidth]{./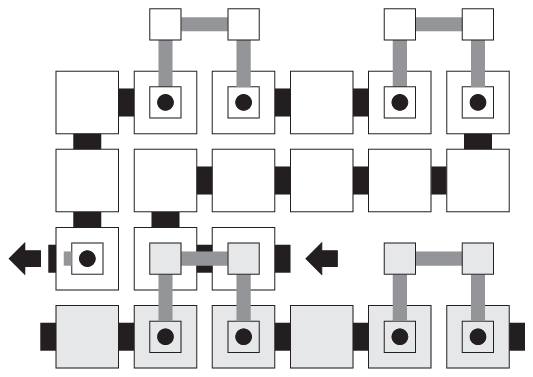}

(d) {\tt RL\_copy\_state\_1}: Copy a 1 bit that is currently being read by the tape head.
\end{minipage}}
\end{minipage}

\caption{The {\tt RL\_copy} gadgets. These gadgets assemble right-to-left while copying bits straight up without shifting. In this figure, the color of the arrows serves no purpose. These gadgets will bind to an {\tt RL\_guess} gadget.}
\label{fig:RL_copy}
\vspace{0pt}
\end{figure}

\begin{figure}[htp]
\begin{minipage}{\textwidth}
\centering
\fbox{
 \includegraphics[width=0.68\textwidth]{./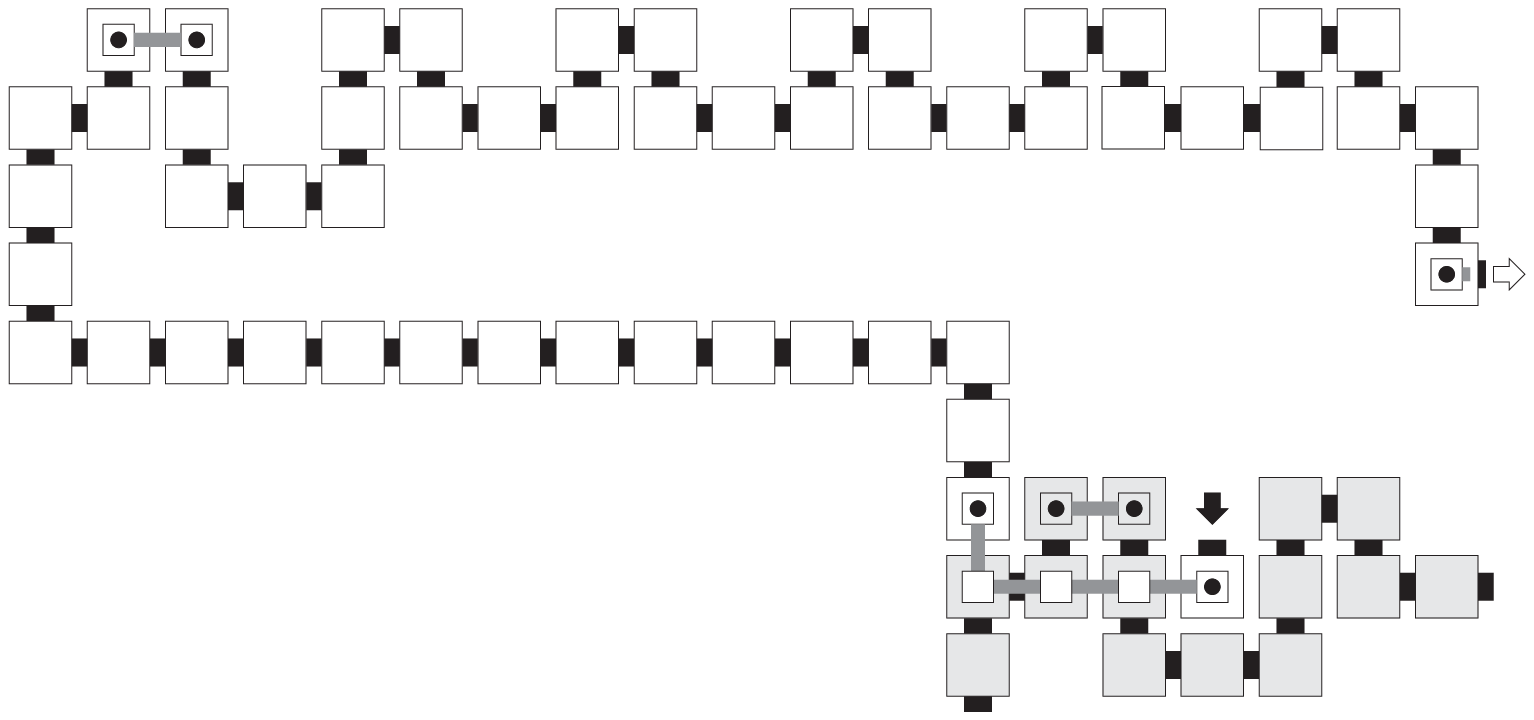}
}
\end{minipage}

\caption{The {\tt RL\_leftmost\_grow\_tape} gadget. This gadget grows the tape to the left by two blank tape cells. The color of the input arrow serves no purpose. However, the color of the output arrow is white because this gadget initiates the next shift row and therefore shifts a 0 bit to the right. This gadget will bind to a {\tt RL\_copy} gadget. The color of the input arrow of this gadget servers no purpose. However, the output arrow is white because a 0 bit is being shifted to the right. }
\label{fig:RL_leftmost_grow_tape}
\vspace{0pt}
\end{figure}

\begin{figure}[htp] 
\centering
\begin{minipage}[t]{\textwidth}
\centering
\fbox{
\begin{minipage}[t][70mm][b]{.45\textwidth}
\vspace{0pt}
\centering
 \includegraphics[width=.65\textwidth]{./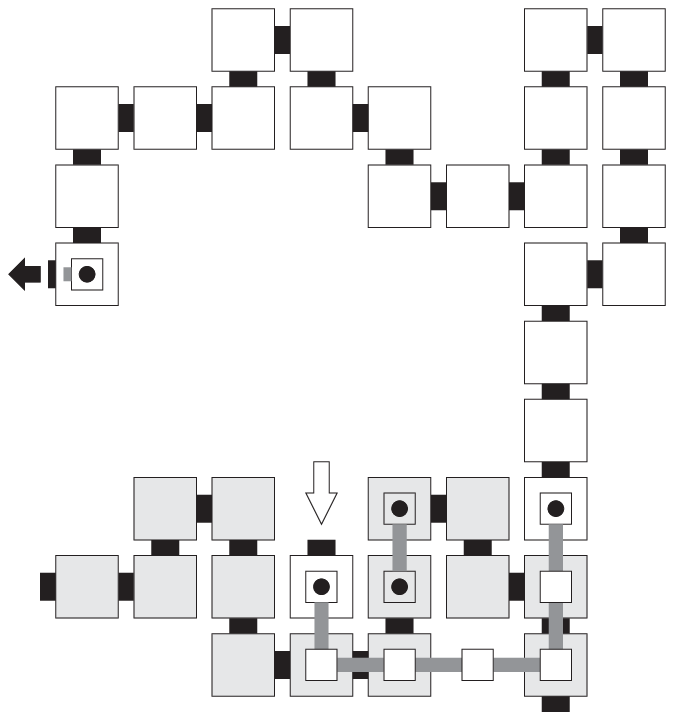}
\medskip
\begin{minipage}[t][][t]{\textwidth}
\centering\small
(a) {\tt LR\_right\_receive\_state\_0}: Receive a 0 bit, which is currently being read by the tape head, into the rightmost position.
\end{minipage}
\end{minipage}}
\fbox{
\begin{minipage}[t][70mm][b]{0.45\textwidth}
\centering
 \includegraphics[width=.65\textwidth]{./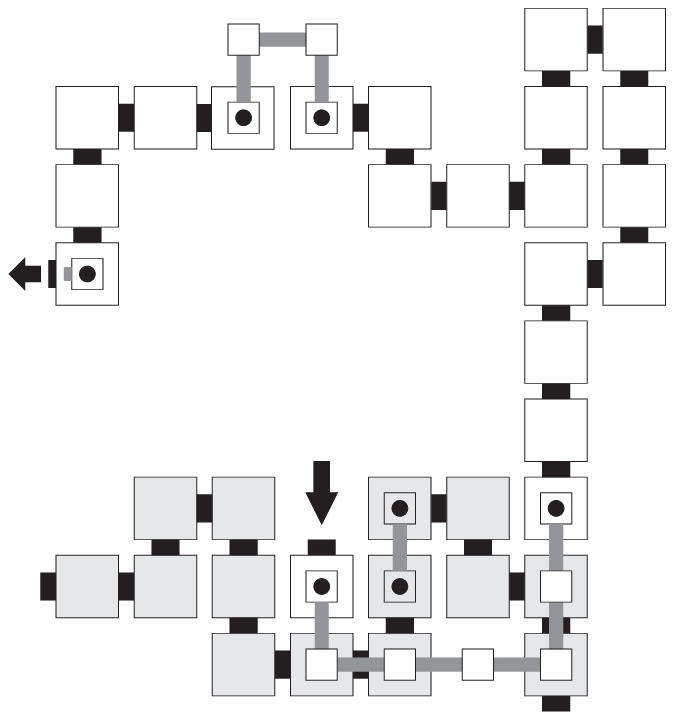}
\medskip
\begin{minipage}[t][][t]{\textwidth}
\centering\small
(b) {\tt LR\_right\_receive\_state\_1}: Receive a 1 bit, which is currently being read by the tape head, into the rightmost position.
\end{minipage}
\end{minipage}}
\end{minipage}
\caption{The {\tt LR\_from\_receive\_state} gadgets. These gadgets receive the tape head (along with the bit value that it is currently reading) into the rightmost gadget position. These gadgets initiate the final gadget row in the post-processing phase, which assembles right-to-left. These gadgets will initially bind to -- and receive the tape head from -- a {\tt LR\_send\_state} gadget. In this figure, a black arrow is carrying a 1 bit, a white arrow is carrying a 0 bit and a long arrow represents the presence of the tape head. The color/shape of the output arrow of this gadget serves no purpose. }
\label{fig:LR_rightmost_receive_state}
\end{figure}

\begin{figure}[htp]
\centering


\begin{minipage}[t]{\textwidth}
\centering

\fbox{
\begin{minipage}[t][45mm][b]{0.45\textwidth}
\centering
 \includegraphics[width=.55\textwidth]{./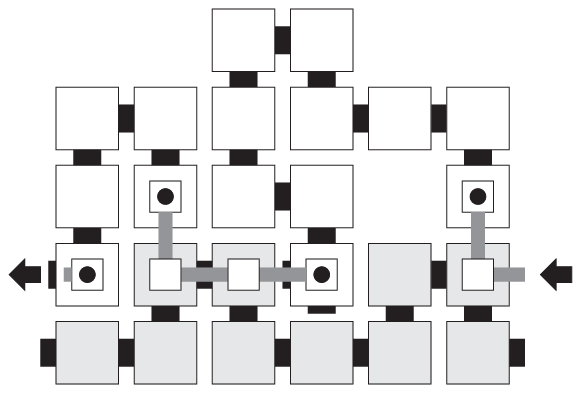}

(a) {\tt Two\_to\_one\_no\_state\_0}: Erase the tape head bit-bump from a gadget that represents a 0 bit that is not currently being read by the tape head.
\end{minipage}}
\fbox{
\begin{minipage}[t][45mm][b]{0.45\textwidth}
\centering
 \includegraphics[width=.55\textwidth]{./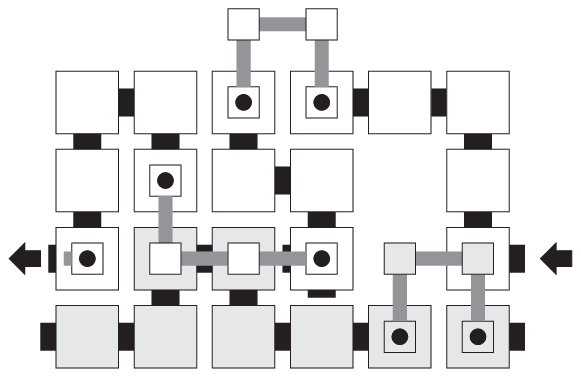}

(b) {\tt Two\_to\_one\_no\_state\_1}: Erase the tape head bit-bump from a gadget that represents a 1 bit that is not currently being read by the tape head.
\end{minipage}}
\end{minipage}

\bigskip


\begin{minipage}[t]{\textwidth}
\centering
\fbox{
\begin{minipage}[t][45mm][b]{0.45\textwidth}
\centering
 \includegraphics[width=.55\textwidth]{./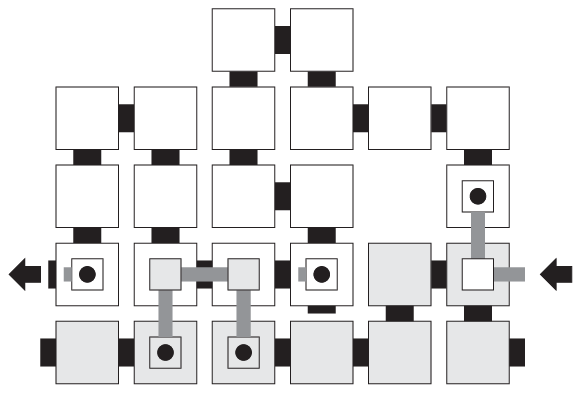}

(c) {\tt Two\_to\_one\_state\_0}: Erase the tape head bit-bump from a gadget that represents a 0 bit that is currently being read by the tape head.
\end{minipage}}
\fbox{
\begin{minipage}[t][45mm][b]{0.45\textwidth}
\centering
 \includegraphics[width=.55\textwidth]{./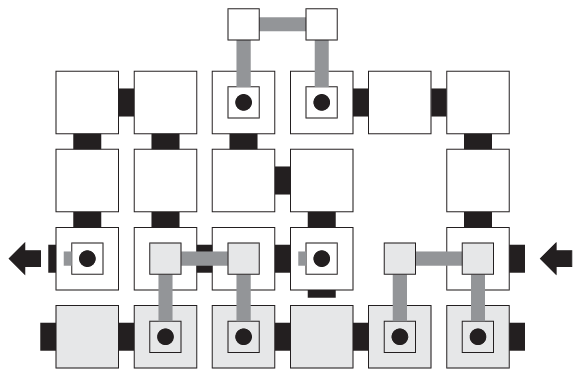}

(d) {\tt Two\_to\_one\_state\_1}: Erase the tape head bit-bump from a gadget that represents a 1 bit that is currently being read by the tape head.
\end{minipage}}
\end{minipage}

\caption{The {\tt Two\_to\_one} gadgets.  The final (topmost) row of gadgets in the post-processing phase erases the tape head indicator from each tape cell gadget, leaving only the bit value in the middle of the gadget. One of these gadgets will initially bind to a {\tt LR\_rightmost\_receive\_state} gadget. Then, these gadgets will bind to other copies of {\tt Two\_to\_one} gadgets. The color of the arrows serves no purpose in this figure. }
\label{fig:Two_to_one}
\vspace{0pt}
\end{figure}

\begin{figure}[htp]
\begin{minipage}{\textwidth}
\centering
\fbox{
 \includegraphics[width=0.2\textwidth]{./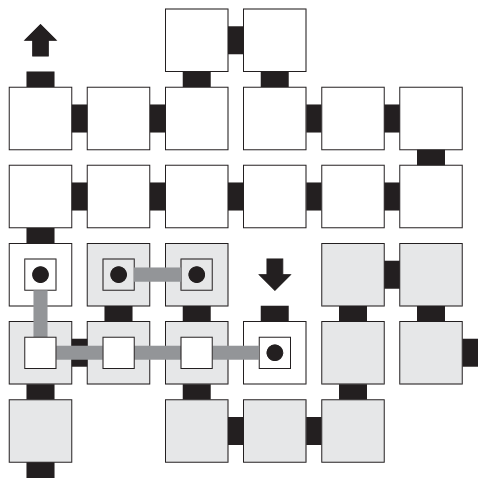}
}
\end{minipage}

\caption{The {\tt Two\_to\_one\_leftmost} gadget. Erase the tape head bit-bump from the leftmost gadget, which will always contain a 0 bit that is not being read by the tape head. The color of the arrows serves no purpose in this figure. }
\label{fig:Two_to_one_leftmost}
\vspace{0pt}
\end{figure}

\clearpage
\clearpage
\section{Appendix: Growth blocks}
\label{sec:appendix}

In this section, we provide details of our construction for the growth
blocks. For all gadgets described in this section, we assume that the
input side of the growth block to which the gadget belongs is its
south side. Therefore, for completeness, we have to build three
additional copies of each gadget, naemly after a 90-degree, a
180-degree, and a 270-degree rotation, respectively.  Throughout this
section, we assume that the glues on each tile are implicitly defined
to ensure deterministic assembly.

\subsection{Gadgets used by all three types of growth blocks}\label{app:shared_gadgets}

Recall that the assembly of each growth block starts in its
bottom-left corner. The gadgets in this section all assemble from left
to right on the bottom row of the growth block until the first actual
move of the remaining path (that is, the moves in the Hamiltonian
cycle that have not been performed and erased by previous growth
blocks) is found. Each move in the Hamiltonian path (as well as each
one of the leading no-moves) is encoded in two bits, each of which
takes up 6 consecutive tiles. Therefore, most gadgets in this
construction are 12 tiles wide. However, the ``edge gadgets'' (i.e.,
those on the left and right edges of the growth block) take up two
additional tiles.

Figure~\ref{fig:gadget_BL} depicts the {\tt BL} (for ``bottom-left'')
gadget whose assembly is initiated by the only output glue in the
top-left corner of the previous growth block, or the seed block if the
current block is the first growth block in the Hamiltonian cycle.  The
{\tt BL} gadget, like most gadgets in this subsection, is 6 tiles
high.  However, this gadget is the only gadget in our growth-block
construction that is 26 tiles wide, since it encodes two no-moves,
which we assume are always present at the beginning of the Hamiltonian
cycle and it is an edge gadget. This gadget advertises two no-moves on
its top edge and its output glue encodes the fact that the first
actual move in the remaining path has not yet been found. Finally,
this gadget also advertises a diagonal marker in the rightmost no-move
on its top side. This diagonal marker will be shifted to the right by
later gadgets in order to guarantee that each growth block has a
square shape.

\begin{figure}[htp]
\begin{minipage}{\textwidth}
\centering
\fbox{
 \includegraphics[width=\textwidth]{./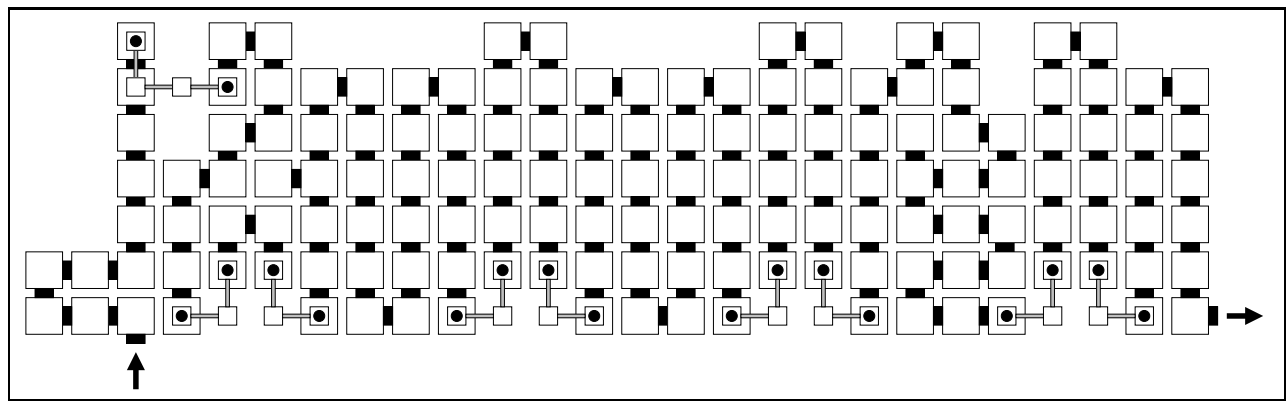} }
\end{minipage}
\caption{Bottom-left gadget ({\tt BL})}
\label{fig:gadget_BL}
\vspace{0pt}
\end{figure}

Figure~\ref{fig:gadget_before_first_move} depicts the gadget that
copies all of the additional no-moves that are included in front of
the Hamiltonian cycle. Since this gadget detects two
geometrically-encoded 0 bits on its bottom side, it also advertises
a no-move on its top side. Its input and output glues encode the fact that
the first actual move in the remaining path has not yet been found.

\begin{figure}[htp]
\begin{minipage}{\textwidth}
\centering
\fbox{
 \includegraphics[width=0.6\textwidth]{./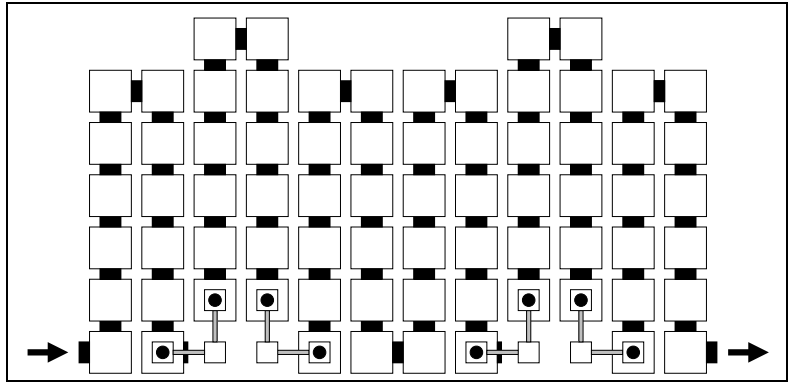}
}
\end{minipage}
\caption{Bottom-row gadget ({\tt before\_first\_move}) that looks for the first non-empty move in the remaining path}
\label{fig:gadget_before_first_move}
\vspace{0pt}
\end{figure}

Figure~\ref{fig:gadget_first_move} depicts the gadget that finds the
first move in the remaining path. This gadget erases this first move
by advertising a no-move on its top side, since this move is being
implemented in the current growth block. Its output glue, as well as
all of the input and output glues in the remaining gadgets will encode
the type of move (straight, left or right) that this gadget detects.

\begin{figure}[htp]
\begin{minipage}{\textwidth}
\centering
\fbox{
 \includegraphics[width=0.6\textwidth]{./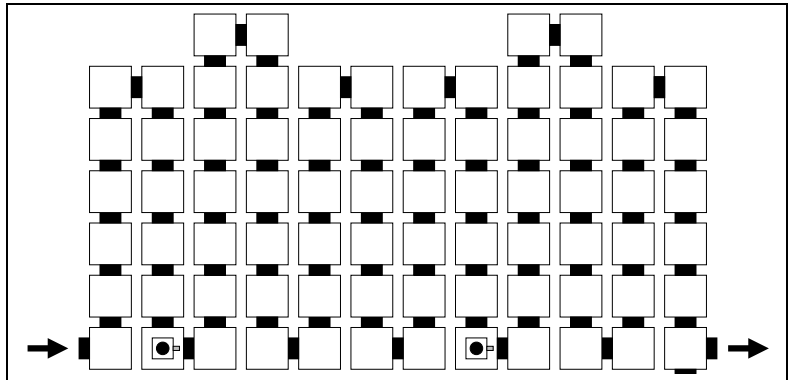}
}

a) The first move is a straight move
\end{minipage}\medskip

\begin{minipage}{\textwidth}
\centering
\fbox{
 \includegraphics[width=0.6\textwidth]{./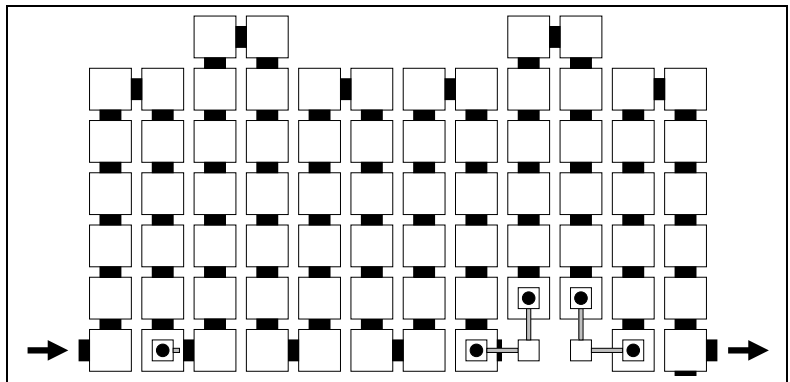}
}

b) The first move is a left turn
\end{minipage}\medskip

\begin{minipage}{\textwidth}
\centering
\fbox{
 \includegraphics[width=0.6\textwidth]{./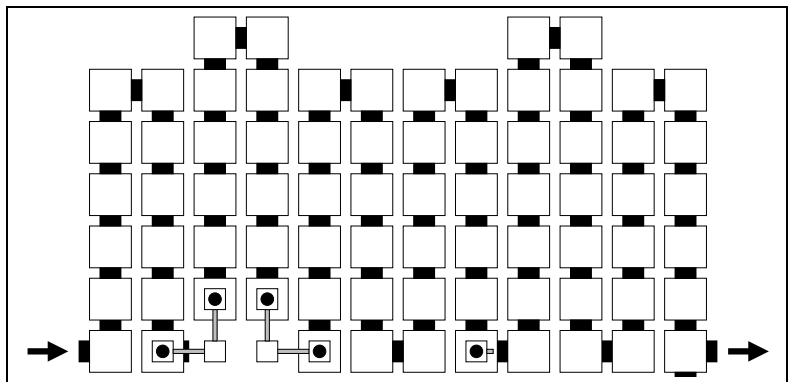}
}

c) The first move is a right turn
\end{minipage}

\caption{The three configurations of the bottom-row gadget ({\tt
    first\_move}) that finds and erases the first non-empty move in
  the remaining path}
\label{fig:gadget_first_move}
\vspace{0pt}
\end{figure}

\subsection{Growth block with a straight move}

The set of gadgets described in this section take over after the {\tt
first\_move} gadget and assemble to complete the growth block in the
case where the first move in the remaining path is a straight
move. Figure~\ref{fig:Sgrowth_construction} depicts the overall
construction for such a block.


\begin{figure}[htp]
\begin{minipage}{\textwidth}
\centering
 \includegraphics[width=0.9\textwidth]{./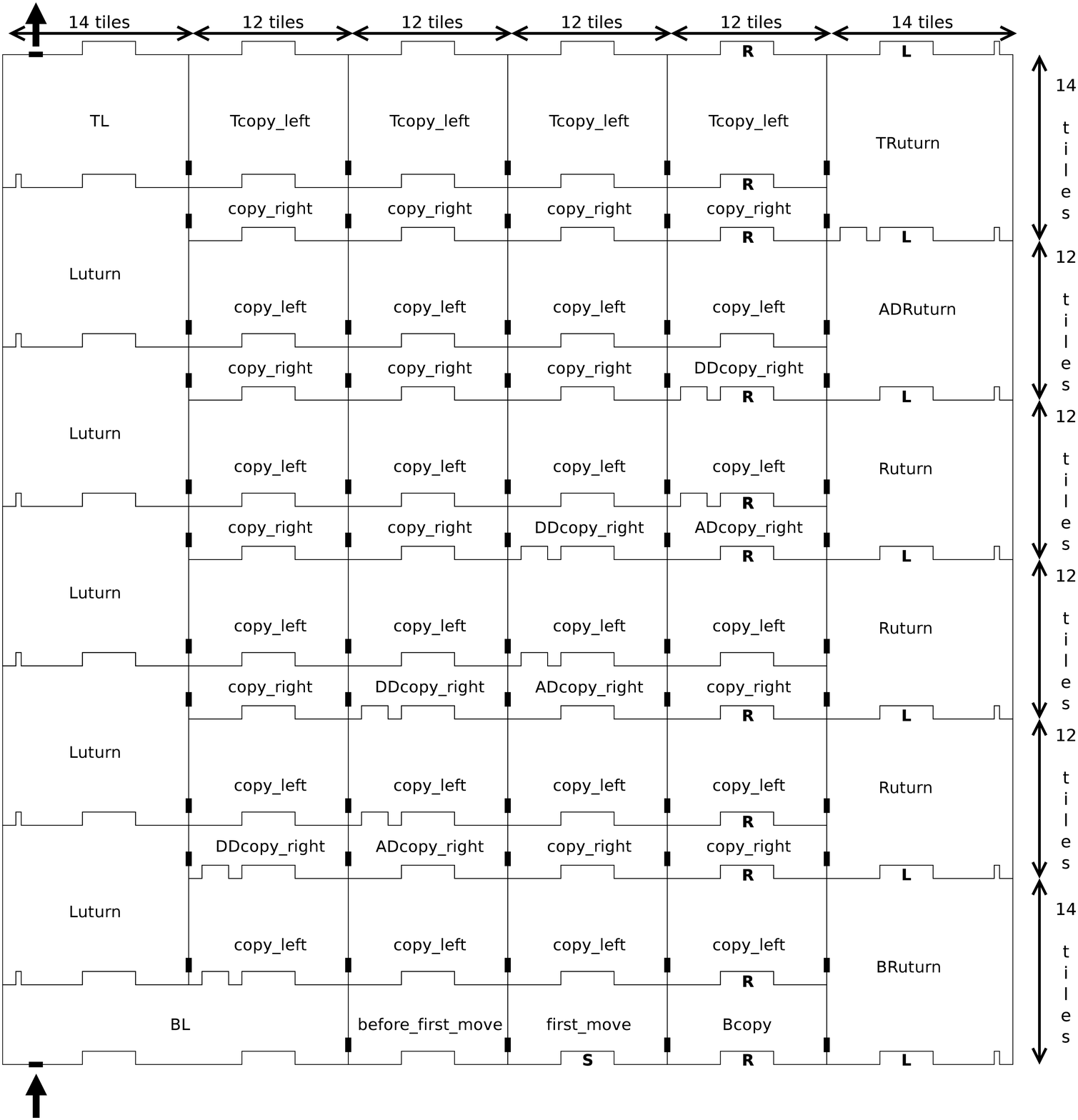}
\end{minipage}
\caption{Overall construction of a growth block whose input path
  starts with a straight move}
\label{fig:Sgrowth_construction}
\vspace{0pt}
\end{figure}

\subsubsection{Gadgets used in the bottom row of straight-move blocks}\label{app:Smove_gadgets_bottom_row}

Figure~\ref{fig:gadget_Sblock_Bcopy} depicts the bottom-row gadget
that copies all of the remaining moves, except for the first one,
which was erased by the {\tt first\_move} gadget, and the last one,
which is handled by the next gadget (see
Figure~\ref{fig:gadget_Sblock_BRuturn}). This gadget simply advertises
on its top side the move that is geometrically encoded on its bottom
side. Recall that its output glue encodes the type of the first move
and therefore the type of the growth block as a whole. Even though
this information is encoded in the glues of all remaining gadgets, we
will not repeat that fact in the following descriptions.

\begin{figure}[htp]
\begin{minipage}{\textwidth}
\centering
\fbox{
 \includegraphics[width=0.6\textwidth]{./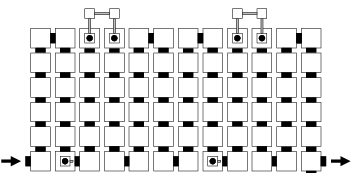}
}

a) The current move is a straight move
\end{minipage}\medskip

\begin{minipage}{\textwidth}
\centering
\fbox{
 \includegraphics[width=0.6\textwidth]{./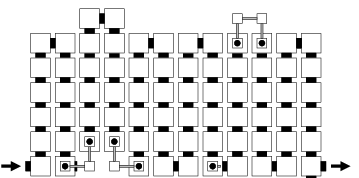}
}

b) The current move is a right turn
\end{minipage}\medskip

\begin{minipage}{\textwidth}
\centering
\fbox{
 \includegraphics[width=0.6\textwidth]{./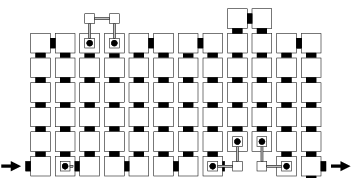}
}

c) The current move is a left turn
\end{minipage}

\caption{The three configurations of the bottom-row gadget ({\tt
    Bcopy}) that copies the remaining path, one move at a time from
  left to right, except for its first and last move}
\label{fig:gadget_Sblock_Bcopy}
\vspace{0pt}
\end{figure}

Figure~\ref{fig:gadget_Sblock_BRuturn} depicts the bottom-row gadget
that detects the right edge of the block or, equivalently the last
move in the remaining path. This {\tt BRuturn} gadget, which appears
in the bottom-right corner of the block, concludes the left-to-right
assembly of the bottom row and initiates the right-to-left assembly of
the next row of gadgets. Since the current block is a straight move,
the right edge of the block is a straight line. The top side of this
(copy) gadget advertises the move that it decodes on its bottom
side. While Figure~\ref{fig:gadget_Sblock_BRuturn} shows a left turn
being copied, two similar gadgets are needed for the cases where the
last move is a right turn or a straight move. This gadget is
differentiated from the {\tt Bcopy} gadget by the one-tile deep dent
in its bottom-right corner that is the marker we use for the right edge
of the block and the reason why we use two extra tiles on the edge
gadgets.

\begin{figure}[htp]
\begin{minipage}{\textwidth}
\centering
\fbox{
 \includegraphics[width=0.55\textwidth]{./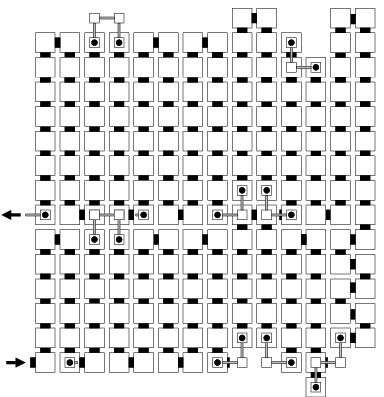}
}
\end{minipage}

\caption{The bottom-row rightmost gadget {\tt BRuturn} that copies the
  last move in the remaining path (in this case, a left turn)  and
  initiates the first row of {\tt copy\_left} gadgets}
\label{fig:gadget_Sblock_BRuturn}
\vspace{0pt}
\end{figure}

Note that some of the gadgets described in this section, even though
they have different names, are really different configurations of the
same gadget in that they are identical in the first ``half'' of the
gadget and then take different paths based on a geometric feature that
is detected later on during the assembly of the gadget. For example,
the {\tt before\_first\_move} and {\tt first\_move} gadgets are really
configurations of the same gadget that differ based on whether a
no-move or an actual move is detected, respectively. Similarly, the
{\tt BRuturn} and {\tt Bcopy} gadgets are really
configurations of the same gadget that differ based on whether the
right edge marker is detected or not, respectively. In fact, the same
is true for all four of the previous gadgets, since the first move is
detected in the same gadget as the edge marker in the last growth
block. This observation applies to other groups of gadgets in this
section.

\subsubsection{Gadgets used in the right-to-left rows of straight-move blocks}\label{app:Smove_gadgets_RL}

Figures~\ref{fig:gadget_Sblock_copy_left}~and~\ref{fig:gadget_Sblock_copy_left2}
depict the gadget that copies the remaining moves (or the leading
no-moves) of the Hamiltonian cycle, one at a time, in right-to-left
order. This gadget has four configurations, namely one for each type
of move (or no-move) being copied. This gadget also detects that the
current move is not positioned on the bottom-left to top-right
diagonal of the current block.

\begin{figure}[htp]
\begin{minipage}{\textwidth}
\centering
\fbox{
 \includegraphics[width=0.55\textwidth]{./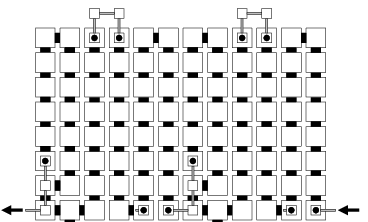}
}

a) The copied move is a straight move
\end{minipage}\bigskip

\begin{minipage}{\textwidth}
\centering
\fbox{
 \includegraphics[width=0.55\textwidth]{./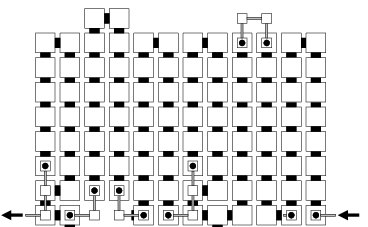}
}

b) The copied move is a right turn
\end{minipage}\medskip

\caption{The first two non-diagonal configurations of the {\tt
    copy\_left} gadget that copies the remaining path, one move at a
  time from right to left, except for the first and last moves}
\label{fig:gadget_Sblock_copy_left}
\vspace{0pt}
\end{figure}

\begin{figure}[htp]
\begin{minipage}{\textwidth}
\centering
\fbox{
 \includegraphics[width=0.55\textwidth]{./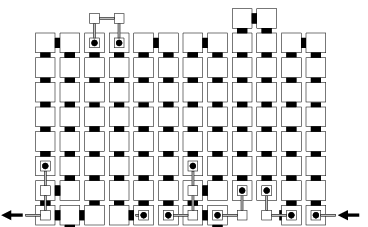}
}

c) The copied  move is a left turn
\end{minipage}\bigskip

\begin{minipage}{\textwidth}
\centering
\fbox{
 \includegraphics[width=0.55\textwidth]{./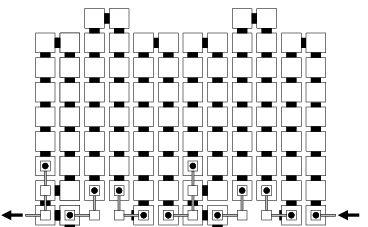}
}

d) There is no move to copy
\end{minipage}

\caption{The last two non-diagonal configurations of the {\tt
    copy\_left} gadget that copies the remaining path, one move at a
  time from right to left, except for the first and last moves}
\label{fig:gadget_Sblock_copy_left2}
\vspace{0pt}
\end{figure}

Figures~\ref{fig:gadget_Sblock_copy_leftD}~and~\ref{fig:gadget_Sblock_copy_leftD2}
depict the gadget that acts like the previous gadget but also detects
that the current move is positioned on the bottom-left to top-right
diagonal of the current block and therefore also advertises the
diagonal marker on its top side.

\begin{figure}[htp]
\begin{minipage}{\textwidth}
\centering
\fbox{
 \includegraphics[width=0.55\textwidth]{./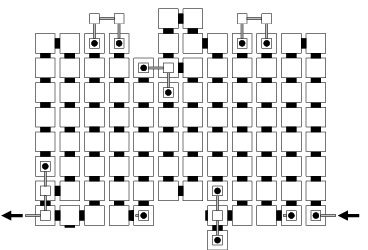}
}

a) The copied move is a straight move
\end{minipage}\bigskip

\begin{minipage}{\textwidth}
\centering
\fbox{
 \includegraphics[width=0.55\textwidth]{./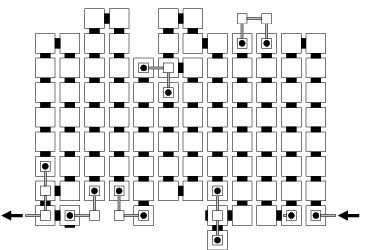}
}

b) The copied move is a right turn
\end{minipage}

\caption{The first two diagonal configurations of the {\tt
    copy\_left} gadget that copies the remaining path, one move at a
  time from right to left, except for the first and last moves}
\label{fig:gadget_Sblock_copy_leftD}
\vspace{0pt}
\end{figure}

\begin{figure}[htp]
\begin{minipage}{\textwidth}
\centering
\fbox{
 \includegraphics[width=0.55\textwidth]{./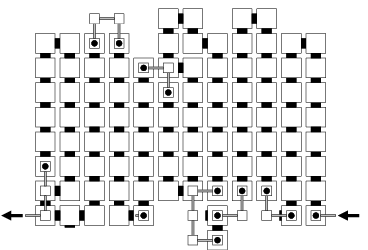}
}

c) The copied  move is a left turn
\end{minipage}\bigskip

\begin{minipage}{\textwidth}
\centering
\fbox{
 \includegraphics[width=0.55\textwidth]{./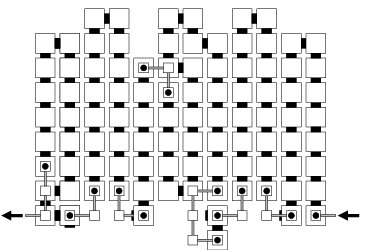}
}

d) There is no move to copy
\end{minipage}

\caption{The last two diagonal configurations of the {\tt
    copy\_left} gadget that copies the remaining path, one move at a
  time from right to left, except for the first and last moves}
\label{fig:gadget_Sblock_copy_leftD2}
\vspace{0pt}
\end{figure}

Figures~\ref{fig:gadget_Sblock_LuturnLowest}~and~\ref{fig:gadget_Sblock_Luturn}
depict the ``uturn'' gadget that is the counterpart on the left edge
of the block to the gadget depicted in
Figure~\ref{fig:gadget_Sblock_BRuturn}.  This gadget concludes the
assembly of the current right-to-left row and initiates the assembly
of the next left-to-right row of gadgets. Since the current block is a
straight move, the left edge of the block, and thus of this gadget, is
a straight line. The top side of this (copy) gadget advertises a
no-move since the move on the left edge of any block is always a
no-move. The reason this gadget has two configurations is because the
{\tt BL} gadget has a different top side than all of the other gadgets
on the left edge of the block (this is needed to accommodate the case
where the current block is a left turn block).
Figure~\ref{fig:gadget_Sblock_LuturnLowest} depicts the {\tt Luturn}
gadget that assembles right above the {\tt BL} gadget, while
Figure~\ref{fig:gadget_Sblock_Luturn} depicts the {\tt Luturn} gadget
that assembles in all higher rows of the block. Since the {\tt BL
gadget} has a deeper, two-tile-high dent in its top-left corner than
all other gadgets on the left edge of the block (whose dent in their
top-left corner is only one tile high), the gadgets in
Figures~\ref{fig:gadget_Sblock_LuturnLowest}~and~\ref{fig:gadget_Sblock_Luturn}
have slightly different configurations in their bottom-left corners.

\begin{figure}[htp]
\begin{minipage}{\textwidth}
\centering
\fbox{
 \includegraphics[width=0.55\textwidth]{./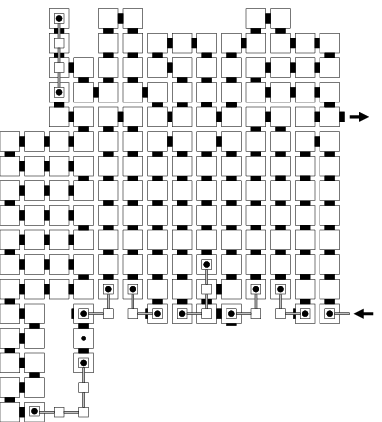}
}
\end{minipage}

\caption{Configuration of the leftmost-column gadget ({\tt Luturn}) that appears right above the {\tt BL} gadget}
\label{fig:gadget_Sblock_LuturnLowest}
\vspace{0pt}
\end{figure}

\begin{figure}[htp]
\begin{minipage}{\textwidth}
\centering
\fbox{
 \includegraphics[width=0.55\textwidth]{./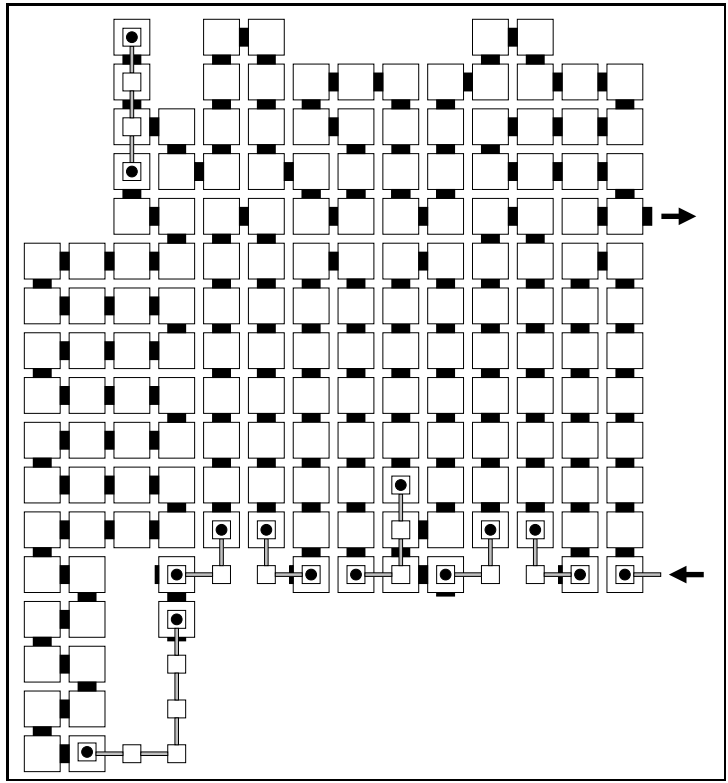}
}
\end{minipage}

\caption{Configuration of the leftmost-column gadget ({\tt Luturn}
  gadget {\it not} directly above the {\tt BL} gadget) that always
  copies the no-move marker and initiates the next row of {\tt
    copy\_right} gadgets}
\label{fig:gadget_Sblock_Luturn}
\vspace{0pt}
\end{figure}

\subsubsection{Gadgets used in the left-to-right rows of straight-move blocks}\label{app:Smove_gadgets_LR}

Figure~\ref{fig:gadget_Sblock_copy_right} depicts the gadget that
copies the moves and no-moves, one at a time, from left to right
between the edges of the block, as long as the current move is not
located on the bottom-left to top-right diagonal of the block.

\begin{figure}[htp]
\begin{minipage}{\textwidth}
\centering
\fbox{
 \includegraphics[width=0.55\textwidth]{./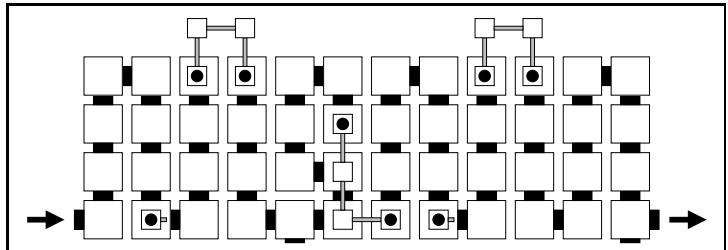}
}

a) The copied move is a straight move\bigskip

\fbox{
 \includegraphics[width=0.55\textwidth]{./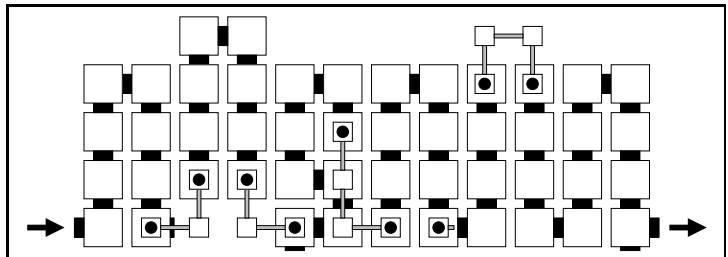}
}

b) The copied move is a right turn\bigskip

\fbox{
 \includegraphics[width=0.55\textwidth]{./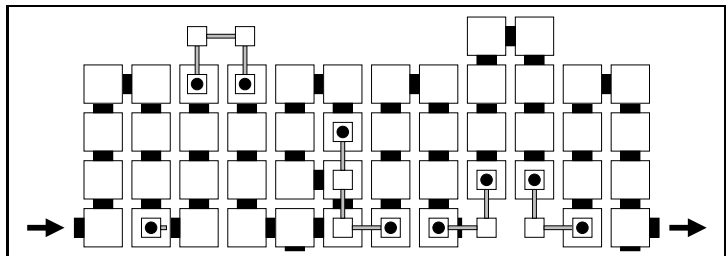}
}

c) The copied move is a left turn\bigskip

\fbox{
 \includegraphics[width=0.55\textwidth]{./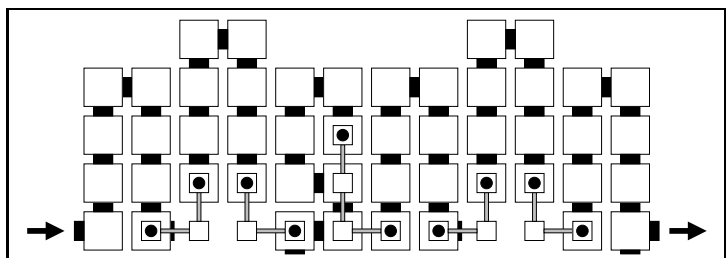}
}

d) There is no move to copy

\end{minipage}\medskip

\caption{The four configurations of the {\tt
    copy\_right} gadget that copies the remaining path, one move at a
  time from left to right, except for the first and last moves}
\label{fig:gadget_Sblock_copy_right}
\vspace{0pt}
\end{figure}

Figure~\ref{fig:gadget_Sblock_DDcopy_right} depicts the gadget that
behaves like the previous one but also detects the diagonal marker in
the current position. This gadget not only copies the current move to
its top side but also encodes in its output glue the fact that the
diagonal marker was just detected.

\begin{figure}[htp]
\begin{minipage}{\textwidth}
\centering
\fbox{
 \includegraphics[width=0.55\textwidth]{./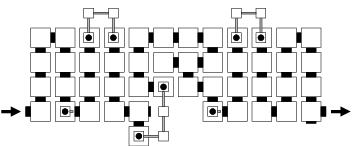}
}

a) The copied move is a straight move\bigskip

\fbox{
 \includegraphics[width=0.55\textwidth]{./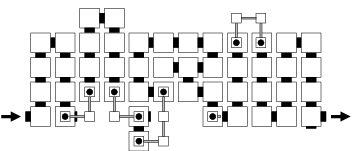}
}

b) The copied move is a right turn\bigskip

\fbox{
 \includegraphics[width=0.55\textwidth]{./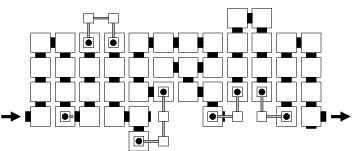}
}

c) The copied move is a left turn\bigskip

\fbox{
 \includegraphics[width=0.55\textwidth]{./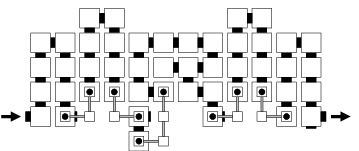}
}

d) There is no move to copy

\end{minipage}\medskip

\caption{The four configurations of the {\tt DDcopy\_right} gadget
  that detects the diagonal marker and copies the remaining path, one
  move at a time from left to right, except for the first and last
  moves}
\label{fig:gadget_Sblock_DDcopy_right}
\vspace{0pt}
\end{figure}

Figure~\ref{fig:gadget_Sblock_ADcopy_right} depicts the gadget that
not only copies the current move but also advertises the diagonal
marker on its top side. Therefore this gadget, together with the
previous one, makes sure that the diagonal marker is shifted by one
gadget position to the right in each left-to-right row.

\begin{figure}[htp]
\begin{minipage}{\textwidth}
\centering
\fbox{
 \includegraphics[width=0.55\textwidth]{./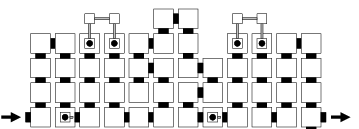}
}

a) The copied move is a straight move\bigskip

\fbox{
 \includegraphics[width=0.55\textwidth]{./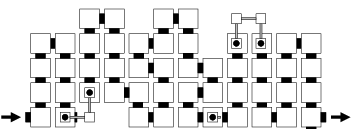}
}

b) The copied move is a right turn\bigskip

\fbox{
 \includegraphics[width=0.55\textwidth]{./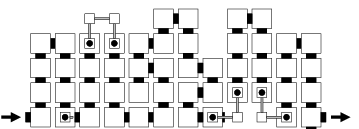}
}

c) The copied move is a left turn\bigskip

\fbox{
 \includegraphics[width=0.55\textwidth]{./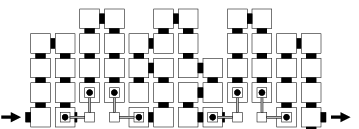}
}

d) There is no move to copy

\end{minipage}\medskip

\caption{The four configurations of the {\tt ADcopy\_right} gadget
  that advertises the diagonal marker and copies the remaining path, one
  move at a time from left to right, except for the first and last
  moves}
\label{fig:gadget_Sblock_ADcopy_right}
\vspace{0pt}
\end{figure}

Figure~\ref{fig:gadget_Sblock_Ruturn} depicts the {\tt Ruturn} gadget
that detects the right-edge marker and thus terminates the current
left-to-right row of gadgets and initiates the next right-to-left
row. The only difference between this gadget and the one shown in
Figure~\ref{fig:gadget_Sblock_BRuturn} is that the latter belongs to
the bottom row of the block and is thus two tiles higher.

\begin{figure}[htp]
\begin{minipage}{\textwidth}
\centering
\fbox{
 \includegraphics[width=0.55\textwidth]{./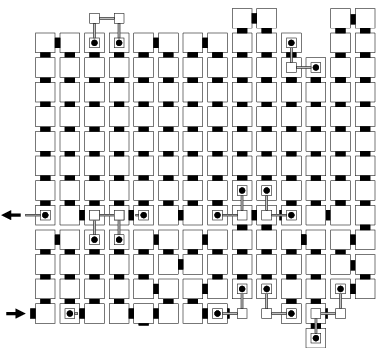}
}
\end{minipage}

\caption{The rightmost gadget {\tt Ruturn} that copies the
  last move (in this case, a left turn) in the remaining path and initiates
  the second and following rows of {\tt copy\_left} gadgets}
\label{fig:gadget_Sblock_Ruturn}
\vspace{0pt}
\end{figure}

Figure~\ref{fig:gadget_Sblock_ADRuturn} is similar to the previous
gadget except that its input glue binds to the output glue of the {\tt
DDcopy\_right} gadget. Therefore, this gadget must advertise the
diagonal marker on its top side, thereby detecting the fact that the
diagonal marker has reached the right edge of the block. As result,
the next ``uturn'' gadget that will bind to the top side of this gadget
will ``know'' to initiate the last, topmost row of the current block
(see next subsection).

While the last two figures show a left turn being copied (as the last
move in the Hamiltonian cycle), two similar configurations are needed for
each gadget for the cases where the last move is a right turn or a
straight move.

\begin{figure}[htp]
\begin{minipage}{\textwidth}
\centering
\fbox{
 \includegraphics[width=0.55\textwidth]{./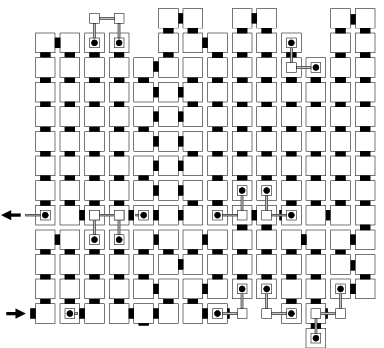}
}
\end{minipage}

\caption{The rightmost gadget {\tt ADRuturn} that copies the last move
  (in this case, a left turn) in the remaining path, advertises the diagonal
  marker and initiates the last row of {\tt
    copy\_left} gadgets}
\label{fig:gadget_Sblock_ADRuturn}
\vspace{0pt}
\end{figure}

\subsubsection{Gadgets used in the topmost row of straight-move blocks}\label{app:Smove_gadgets_topmost}

Figure~\ref{fig:gadget_Sblock_TRuturn} depicts the top-right ``uturn''
gadget that is identical to the one in
Figure~\ref{fig:gadget_Sblock_Ruturn}, except that it is two tiles
higher (since it belongs to the topmost row) and its output glue
encodes the fact that the topmost row needs to be constructed.

\begin{figure}[htp]
\begin{minipage}{\textwidth}
\centering
\fbox{
 \includegraphics[width=0.55\textwidth]{./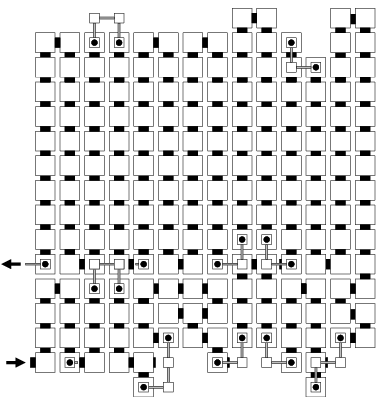}
}
\end{minipage}

\caption{The top-right gadget {\tt TRuturn} that copies the last move
  (in this case, a left turn) in the remaining path and initiates the row of {\tt
    Tcopy\_left} gadgets}
\label{fig:gadget_Sblock_TRuturn}
\vspace{0pt}
\end{figure}

We omitted figures for the topmost row, copy-left gadget (called {\tt
Tcopy\_left} in Figure~\ref{fig:Sgrowth_construction}) since it is
almost identical to the {\tt copy\_left} gadget shown in
Figures~\ref{fig:gadget_Sblock_copy_left}~and~\ref{fig:gadget_Sblock_copy_left2}
(it is two tiles higher and its input and output glues encode the fact
that it belongs to the topmost row).

Figure~\ref{fig:gadget_Sblock_TL} depicts the last gadget in the
construction of a straight move. This gadget assembles in the top-left
corner of the current block and has an output glue on its top side that will initiate the
assembly of the next growth block.

\begin{figure}[htp]
\begin{minipage}{\textwidth}
\centering
\fbox{
 \includegraphics[width=0.55\textwidth]{./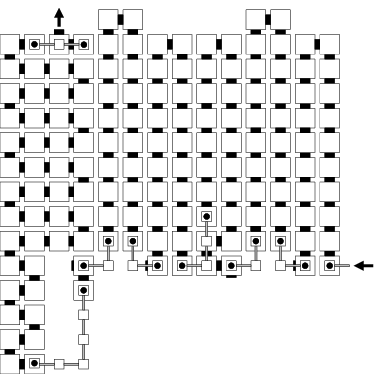}
}
\end{minipage}

\caption{The top-left gadget ({\tt TL}) that completes the current
  (straight) growth  block and initiates the next growth block}
\label{fig:gadget_Sblock_TL}
\vspace{0pt}
\end{figure}

\subsection{Growth block with a left turn}

The set of gadgets described in this section take over after the first
move gadget and assemble to complete the growth block in the case
where the first move in the remaining path is a left turn.
Figure~\ref{fig:Lgrowth_construction} depicts the overall
construction for such a block.

\begin{figure}[htp]
\begin{minipage}{\textwidth}
\centering
 \includegraphics[width=0.9\textwidth]{./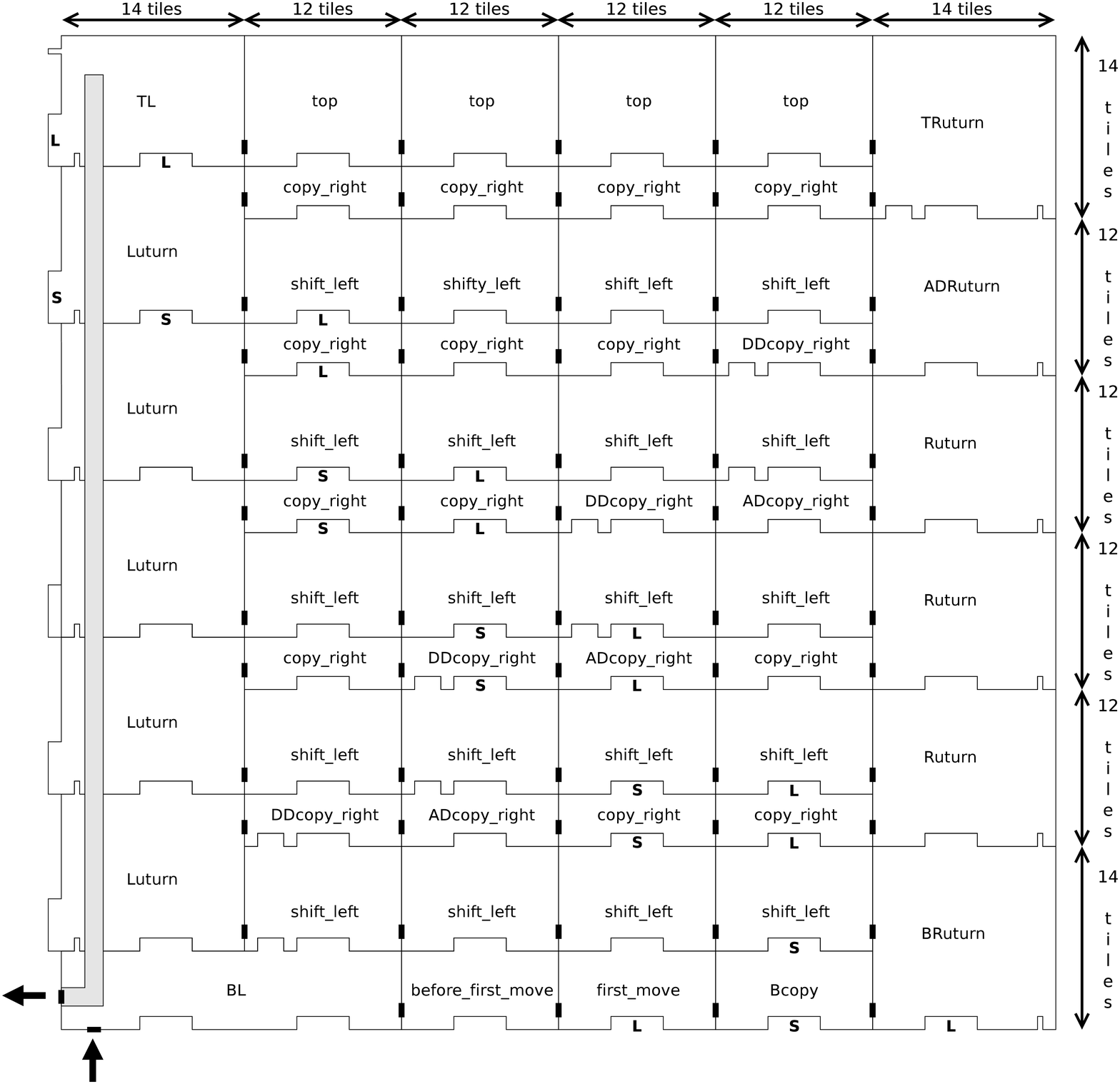}
\end{minipage}
\caption{Overall construction of a growth block whose input path
  starts with a left turn}
\label{fig:Lgrowth_construction}
\vspace{0pt}
\end{figure}

This set of gadgets differs from the one for the straight
move in two essential ways.

First, the moves in the remaining path need to be copied from the
bottom side to the left side of the block. This is accomplished by
replacing the {\tt copy\_left} gadget with a {\tt shift\_left} gadget,
whose generic function is depicted in
Figure~\ref{fig:gadget_Lblock_generic_shift_left}. Figure~\ref{fig:gadget_Lblock_shift_left} depicts two of the many
possible configurations for the {\tt shift\_left} gadget.

\begin{figure}[htp]
\begin{minipage}{\textwidth}
\centering
 \includegraphics[width=0.4\textwidth]{./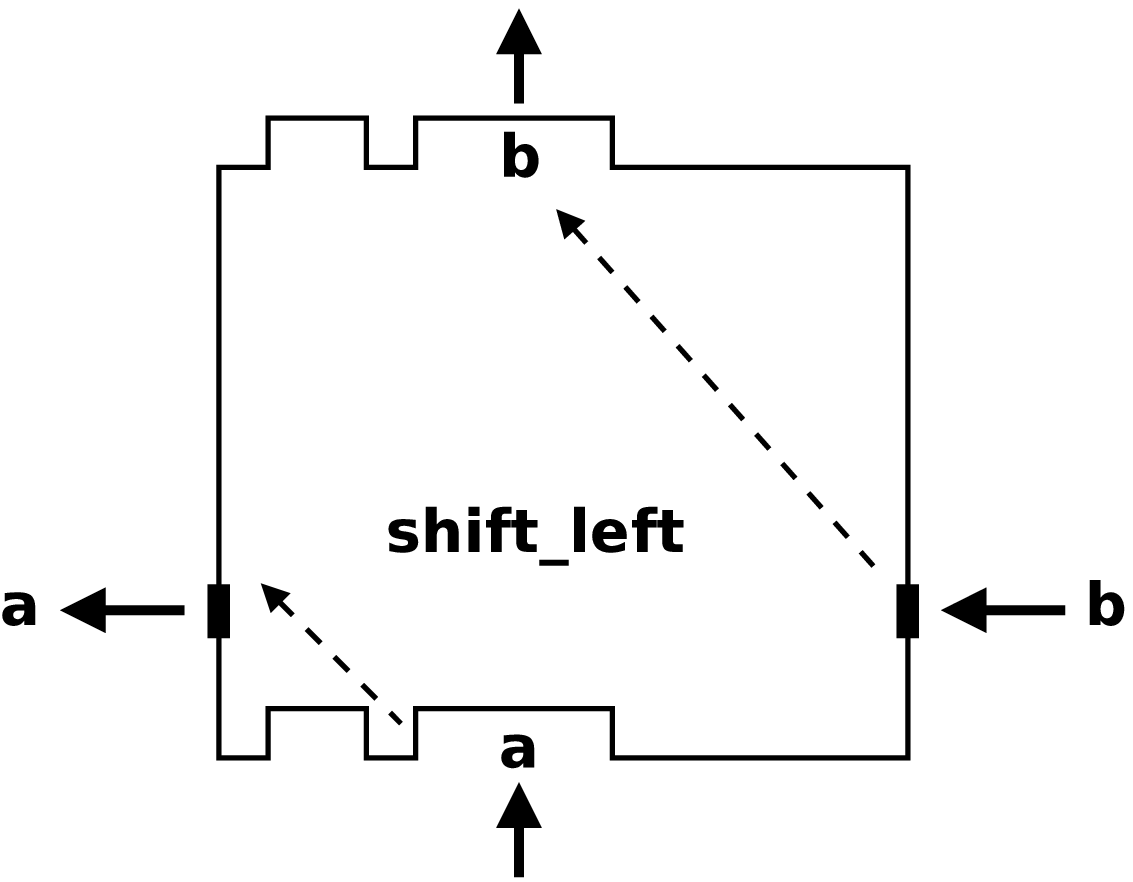}
\end{minipage}
\caption{Generic {\tt shift\_left} gadget where {\tt a} and {\tt b}
  each represent a type of move (straight move, left turn, right turn
  or no move) and the diagonal marker is optional. There are $26=13*2$
  configurations of the {\tt shift\_left} gadget, that is, a total of
  $4\times 4 = 16$ pairs of move types except for the three pairs in
  which {\tt b} is a ``no move'' and {\tt a} is not (since a ``no
  move'' cannot follow an actual move in the remaining path), each one with  or without a diagonal marker. Figure~\ref{fig:gadget_Lblock_shift_left} depicts two of these configurations.}
\label{fig:gadget_Lblock_generic_shift_left}
\vspace{0pt}
\end{figure}

\begin{figure}[htp]
\begin{minipage}{\textwidth}
\centering
\fbox{
 \includegraphics[width=0.55\textwidth]{./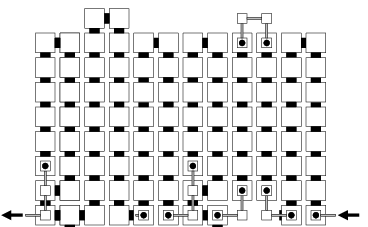}
}

a) Configuration in which {\tt a} is a left turn,  {\tt b} is a right turn, and the diagonal marker is absent
\end{minipage}\bigskip

\begin{minipage}{\textwidth}
\centering
\fbox{
 \includegraphics[width=0.55\textwidth]{./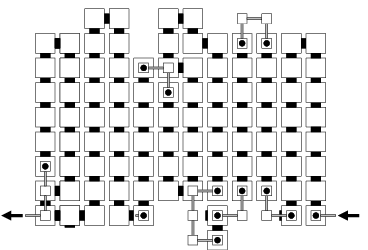}
}

b) Configuration in which {\tt a} is a left turn,  {\tt b} is a right turn, and the diagonal marker is present
\end{minipage}

\caption{Two of the 26 possible configurations of the {\tt shift\_left} gadget}
\label{fig:gadget_Lblock_shift_left}
\vspace{0pt}
\end{figure}

Second, left turn blocks differ significantly from the other kinds of
blocks on their left edge. They must not only advertise the remaining
path on this edge, they must also allow for the current block's output
glue (the one that will initiate the assembly of the next growth
block) to be put in place at the bottom of the left edge after the
current block is completely assembled. To achieve this goal, all of
the gadgets on the left edge of left turn blocks have been built as
follows. Figures~\ref{fig:gadget_Lblock_LuturnLowest}, \ref{fig:gadget_Lblock_Luturn}
and~\ref{fig:gadget_Lblock_Luturn2} show how the ``Luturn'' gadgets
now contain a one-tile-wide vertical path that is left empty until the
top-left gadget has completed. Finally,
Figure~\ref{fig:gadget_Lblock_TL} shows how the top-left gadget not
only completes the topmost row, as usual, but also leaves a dent at
the top of the left side of the current block (to mark the right edge
of the next growth block) and then builds a one-tile-wide path of
tiles (shown in light gray) that will position the current block's output glue to initiate
the assembly of the next growth block.

\begin{figure}[htp]
\begin{minipage}{\textwidth}
\centering
\fbox{
 \includegraphics[width=0.55\textwidth]{./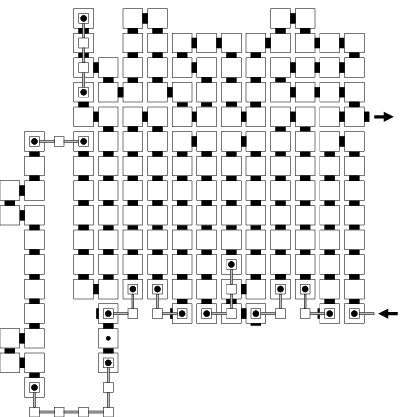}
}
\end{minipage}

\caption{The leftmost-column gadget ({\tt Luturn}) that appears right above the {\tt BL} gadget (the bottom pattern must be a ``no-move'' and the input glue must carry the ``no-move'' as well)}
\label{fig:gadget_Lblock_LuturnLowest}
\vspace{0pt}
\end{figure}

\begin{figure}[htp]
\begin{minipage}{\textwidth}
\centering
\fbox{
 \includegraphics[width=0.55\textwidth]{./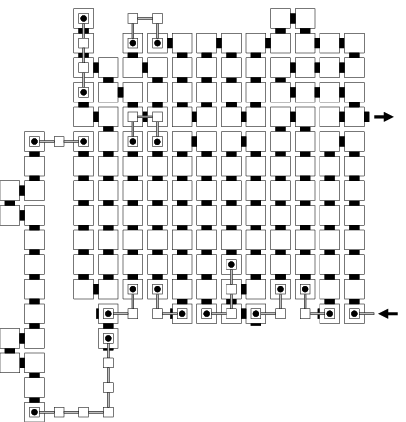}
}

a) There is no move to copy
\end{minipage}\bigskip

\begin{minipage}{\textwidth}
\centering
\fbox{
 \includegraphics[width=0.55\textwidth]{./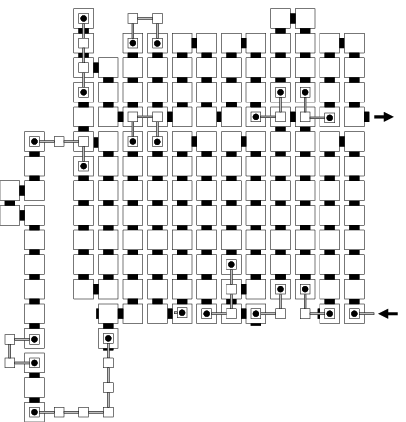}
}

b) The copied move is a left turn
\end{minipage}\medskip

\caption{The first two configurations of the leftmost-column gadget
  {\tt Luturn} ({\it not} appearing right above the {\tt BL} gadget)
when the input glue represents a left turn}
\label{fig:gadget_Lblock_Luturn}
\vspace{0pt}
\end{figure}

\begin{figure}[htp]
\begin{minipage}{\textwidth}
\centering
\fbox{
 \includegraphics[width=0.55\textwidth]{./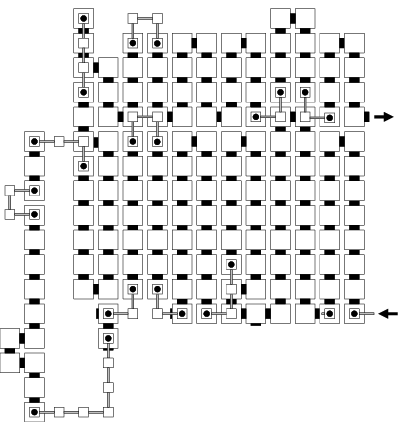}
}

c) The copied move is a right turn
\end{minipage}\bigskip

\begin{minipage}{\textwidth}
\centering
\fbox{
 \includegraphics[width=0.55\textwidth]{./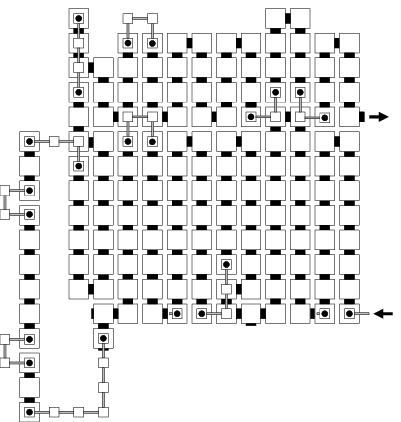}
}

d) The copied move is a straight move
\end{minipage}\medskip

\caption{The last two configurations of the leftmost-column gadget
  {\tt Luturn} ({\it not} appearing right above the {\tt BL} gadget)
when the input glue represents a left turn}
\label{fig:gadget_Lblock_Luturn2}
\vspace{0pt}
\end{figure}

\begin{figure}[htp]
\begin{minipage}{\textwidth}
\centering
\fbox{
 \includegraphics[width=0.55\textwidth]{./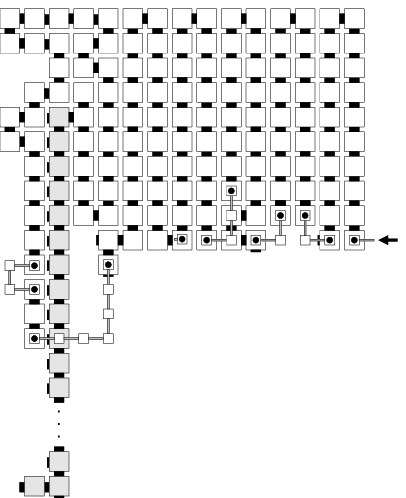}
}
\end{minipage}

\caption{The {\tt TL} gadget that first fills up the top-left corner
  of the current block and then puts the glue in its bottom-left corner that
  will initiate the next block}
\label{fig:gadget_Lblock_TL}
\vspace{0pt}
\end{figure}

\subsection{Growth block with a right turn}\label{app:Rmove_gadgets}

The overall construction for this type of growth block is shown in
Figure~\ref{fig:Rgrowth_construction} in the main body of this
paper. We omit figures for the individual gadgets in this type of
block because they are similar to the gadgets for the straight-move
block, except that the remaining path is advertised on the right side
instead of the top side and thus the {\tt copy\_right} gadgets are
replaced with {\tt shift\_right} gadgets which, in turn, are similar to
the {\tt shift\_left} gadgets in the left-move blocks (see Figures~\ref{fig:gadget_Lblock_generic_shift_left} and ~\ref{fig:gadget_Lblock_shift_left}).

\subsection{Last growth block}\label{app:last_block}

Recall that the growth block corresponding to the last move in the
Hamiltonian cycle encodes a move toward the seed block that will
complete it as a full square. This last growth block is distinguished
from all other growth blocks in its bottom row,
when the first move in the remaining input path is found on the
right edge of the block.

We omit figures for this set of gadgets since they only differ from
the gadgets for the regular left, right or straight moves in their
topmost row. This last row of the last block simply needs to be a
straight line in which each tile contains an output glue that will
bind to a single filler tile. This filler tile will assemble with
copies of itself in a comb-like structure with teeth of variable
length defined by the number of filler tiles it takes to ``bump'' into
the existing seed block assembly, as shown in
Figure~\ref{fig:putting_it_all_together} in the main body of this
paper.

\end{document}